\def\ps@headings{%
\def\@oddhead{\mbox{}\scriptsize\rightmark \hfil \thepage}%
\def\@evenhead{\scriptsize\thepage \hfil \leftmark\mbox{}}%
\def\@oddfoot{}%
\def\@evenfoot{}}
\begin{document}
\bibliographystyle{IEEEtran}
\title{Structural Properties of Uncoded Placement Optimization for Coded Delivery}

\author{Sian Jin,~\IEEEmembership{Student~Member,~IEEE}\thanks{S. Jin, Y. Cui and H. Liu are with Shanghai Jiao Tong University, China. G. Caire is with Technical University of Berlin, Germany.}, \ Ying Cui,~\IEEEmembership{Member,~IEEE}, \\ Hui Liu,~\IEEEmembership{Fellow,~IEEE},  \ Giuseppe Caire,~\IEEEmembership{Fellow,~IEEE}}

\maketitle
\newtheorem{Thm}{Theorem}
\newtheorem{Lem}{Lemma}
\newtheorem{Sta}{Statement}
\newtheorem{Cor}{Corollary}
\newtheorem{Def}{Definition}
\newtheorem{Exam}{Example}
\newtheorem{Alg}{Algorithm}
\newtheorem{Sch}{Scheme}
\newtheorem{Prob}{Problem}
\newtheorem{Rem}{Remark}
\newtheorem{Proof}{Proof}
\newtheorem{Asump}{Assumption}
\newtheorem{Subp}{Subproblem}

\vspace{-1cm}

\begin{abstract}
A centralized coded caching scheme has been  proposed by Maddah-Ali and Niesen to reduce the worst-case load of a network consisting of a  server  with access to $N$ files and connected through a shared link to $K$ users, each equipped with a cache of size $M$.
However, this centralized coded caching scheme
is not able to take advantage of a non-uniform, possibly very skewed, file popularity distribution.
In this work, we consider  the same network setting but aim to reduce the average load under an arbitrary (known) file  popularity distribution.
First, we  consider  a  class of   centralized  coded caching schemes   utilizing   general  uncoded placement  and  a specific coded delivery strategy, which  are specified   by a general  file partition parameter.
Then, we  formulate the   coded caching design optimization problem over the considered class of schemes     with $2^K\cdot N^K$ variables   to minimize the average load by optimizing the file partition parameter  under an arbitrary file popularity.
Furthermore, we show that the  optimization problem is  convex, and  the resulting optimal solution  generally improves upon known schemes.
Next, we analyze structural properties of the optimization problem to obtain design insights and reduce the complexity.
Specifically, we obtain an equivalent linear optimization problem with $(K+1)N$ variables under an arbitrary  file popularity  and an equivalent linear optimization problem with $K+1$  variables under the  uniform file popularity.
Under the uniform file popularity,  we also obtain the  closed-form optimal  solution, which  corresponds to   Maddah-Ali--Niesen's centralized coded caching scheme.
Finally, we present an information-theoretic converse bound on the average load under an arbitrary file popularity.
\end{abstract}
\begin{keywords}
Coded caching, coded multicasting, content distribution, arbitrary popularity distribution,  optimization.
\end{keywords}

\section{Introduction}
To support the dramatic growth of wireless data traffic, caching and multicasting  have been  proposed as two  promising approaches for massive content delivery in wireless networks.
By proactively placing content closer to or even at end-users during the off-peak hours, network congestion during the peak hours can be greatly reduced.
On the other hand,
leveraging the broadcast nature of the wireless medium by multicast transmission,
popular content can be delivered to multiple requesters simultaneously.
Recently, a new class of caching schemes for content placement in user caches, referred to as {\em coded caching}~\cite{AliFundamental, AliDec, jin, finite, YuQian, NonuniformDemands,  ji2015order, Jinbei,Sinong, KaiWan,bound17,bound16}, which jointly consider caching and multicasting,  have received significant interest.
The main novelty of such schemes with respect to  (w.r.t.) conventional approaches (e.g., as currently used in content delivery networks)
is that the messages stored in the user caches  are treated as ``receiver side information'' in order to enable network-coded multicasting, such that a
single multicast codeword is useful to a large number of users, even though they are not requesting
the same content.
In~\cite{AliFundamental} and~\cite{AliDec}, Maddah-Ali and Niesen consider a system with one server connected  through a shared error-free link to $K$ users. The server has a  library  of $N$ files  (of the same length), and each user  has an isolated cache memory of $M$ files. They formulate a caching problem, consisting of two phases, i.e., uncoded content placement and coded content delivery, which has been successively investigated in a large number of recent works~\cite{ jin, finite, YuQian, NonuniformDemands,  ji2015order, Jinbei,Sinong, KaiWan, bound17,bound16} under the same network setting.

In~\cite{AliFundamental, AliDec, jin,finite, YuQian}, the goal is to reduce the worst-case (over all possible requests) load\footnote{For future reference, in this paper, we refer to ``load'' of a particular coded caching scheme
as the ratio of the length of the  coded multicast message over the length of a single library file.} of the shared link in the delivery phase.
In particular, in~\cite{AliFundamental}, Maddah-Ali and Niesen propose a centralized coded caching scheme, which requires   knowledge of  the number of active users in the delivery phase, and achieves order-optimal memory-load tradeoff.
It is successively shown in~\cite{KaiWan} that Maddah-Ali--Niesen's centralized coded caching scheme~\cite{AliFundamental} achieves the minimum worst-case load under  uncoded placement and $N \geq K$.
Motivated by \cite{AliFundamental}, decentralized coded caching schemes are  proposed in~\cite{AliDec} and~\cite{jin},  where the number of  active users in the system  are not known in the placement phases
 and the schemes can achieve order-optimal memory-load tradeoff
in the asymptotic regime of infinite file size (i.e., the number of data units per file goes to infinity).
In the finite file size regime,  Maddah-Ali--Niesen's decentralized coded caching scheme~\cite{AliDec} is shown to achieve  an  undesirable  worst-case load~\cite{finite},
 which is larger than the  worst-case  load achieved by the decentralized scheme that we present   in~\cite{jin}.
In~\cite{YuQian},  Yu {\em et al.} propose a centralized coded caching scheme  to reduce the average load under the uniform file popularity, by efficiently serving  users with  common requests.
Note that all the coded caching schemes in~\cite{AliFundamental, AliDec, jin, YuQian} dedicate the same fraction of memory to each  file, and   may not be able to take full advantage of a non-uniform, possibly very skewed, popularity distribution.

In~\cite{NonuniformDemands, ji2015order, Jinbei, Sinong}, the goal is to reduce the average load of the shared link in the delivery phase under an arbitrary file popularity.
Specifically, in~\cite{NonuniformDemands}, the authors partition files  into multiple groups and apply Maddah-Ali--Niesen's  decentralized  coded caching scheme~\cite{AliDec} to each group.
As coded-multicasting opportunities for files from different groups are not fully explored,  it is expected that  the average load  in~\cite{NonuniformDemands} can be further reduced.
In~\cite{ji2015order}, a decentralized  coded caching  scheme where the memory allocation for files with different popularity is optimized using an upper bound on the average load is proposed. However, such optimization is highly non-convex and not amenable to analysis. Therefore, a simpler but suboptimal scheme  (referred to as the RLFU-GCC decentralized coded caching scheme)  where the library is partitioned only into two groups, is also proposed for the purpose of asymptotic analysis, and some optimality properties in the scaling laws of the  average load versus the system parameters are obtained analytically, in particular for the case of a Zipf popularity distribution. In~\cite{Jinbei},  inspired  by the RLFU-GCC decentralized coded caching scheme in~\cite{ji2015order}, Zhang {\em et al.} present a similar coded caching scheme which partitions the library  into two groups  and  show  that the achieved average load is  within a constant factor of the minimum average load over all possible schemes under an arbitrary file popularity (except a small additive term) in the general regimes of the system parameters.
In~\cite{Sinong}, Wang {\em et al.} formulate a   coded caching design problem  to minimize the average load by  optimizing  the cache memory for storing each file. To reduce the computational complexity, Wang {\em et al.} consider a simplified objective function, i.e., the total average size of the uncached files and obtain a sub-optimal solution, which is shown to be order-optimal when the number of users and the number of files are large, assuming that the file popularity follows a Zipf distribution. However, in the general regime, there is no performance guarantee for the sub-optimal solution.

Besides achievable schemes, \cite{YuQian,  NonuniformDemands, ji2015order, Jinbei, Sinong, bound16, bound17, Factor2} present information-theoretic converse bounds for coded caching.
The bounds in~\cite{YuQian,  NonuniformDemands, ji2015order, Jinbei, Sinong, bound16, bound17, Factor2} can be classified into two classes, i.e., class~i): bounds that are only suitable for  uncoded placement and class~ii): bounds that are suitable for any placement including uncoded placement and coded placement. The bound in~\cite{YuQian} belongs to class~i) and is exactly tight,  for both the worst-case load and the average load (under the uniform file popularity).
The bounds based on  reduction from an arbitrary file popularity  to the uniform file popularity~\cite{NonuniformDemands, ji2015order,Jinbei},  cut-set~\cite{Sinong}, relation between a multi-user single-request caching network  and a single-user multi-request caching network~\cite{bound16}, and other  information-theoretic approaches~\cite{bound17, Factor2}, belong to  class~ii). In particular,
the bound in~\cite{Factor2} is tighter than other bounds under the uniform file popularity, but it is rather complicated and cannot be applied directly to the case of non-uniform file popularity as the bounds in~\cite{NonuniformDemands, ji2015order,Jinbei, Sinong, bound16}.
However, the  bounds in~\cite{NonuniformDemands, ji2015order,Jinbei, Sinong, bound16} for an arbitrary popularity distribution are not generally tight in  non-asymptotic regimes of the system parameters.
Thus, it is important to obtain a tighter converse bound on the average load under  an arbitrary file popularity  in  non-asymptotic regimes of the system parameters.

In this paper, we consider the same problem setting as  in \cite{AliFundamental}.
To obtain first-order design insights, we  focus on    investigating      centralized coded caching schemes to minimize the average load under an arbitrary popularity distribution, which may later  motivate efficient  designs of    decentralized coded caching schemes.
Our main contributions are summarized below.

\begin{itemize}
\item  We  consider a  class of  centralized  coded caching schemes   utilizing   general uncoded placement  and  a specific coded delivery strategy, which  are specified   by a  general  file partition parameter.
This  class of   centralized coded caching schemes include  Maddah-Ali--Niesen's centralized coded caching scheme~\cite{AliFundamental},  each realization of Maddah-Ali--Niesen's  decentralized (random) coded caching scheme~\cite{AliDec} and   each realization of  Zhang {\em et al.}'s decentralized (random)  coded caching scheme~\cite{Jinbei}.

\item We  formulate the   coded caching design optimization problem over the considered class of schemes  with $2^K\cdot N^K$ variables  to minimize the average load by optimizing the file partition parameter  under an arbitrary file popularity.
Contrary to the optimization in~\cite{ji2015order} which is non-convex and is difficult to analyze, we show that the proposed  optimization problem  is  convex and  amenable to
analysis.
Furthermore, we show that the resulting   optimized average load  is  generally better than those of   upon known schemes and the linear  coded  delivery   procedures  of the considered class of schemes  have the same performance as the graph-coloring index coding  delivery procedure in~\cite{ji2015order}, called the  $GCC_1$  procedure  and the delivery procedure which adopts an appending method to avoid the ``bit waste" problem in~\cite{HCD}, called the  HCD procedure, when applied to the optimized file placement parameter.

\item We analyze  structural properties of the optimization problem to obtain design insights and reduce the complexity for obtaining an optimal solution.
Specifically, we obtain an equivalent linear optimization problem with $(K+1)N$ variables.
To further reduce the  complexity of the linear optimization  problem under the  uniform file popularity, we obtain an equivalent linear optimization problem with $K+1$  variables.
We also obtain the   closed-form optimal  solution under the uniform file popularity for all   $M \in \{0, \frac{N}{K}, \frac{2N}{K}, \ldots, N\}$.
This optimal  solution corresponds to   Maddah-Ali--Niesen's centralized coded caching scheme~\cite{AliFundamental}, aiming  at reducing  the worst-case load.
\item We present a  genie-aided converse bound on the average load under an arbitrary file popularity  using the genie-aided approach proposed in~\cite{ji2015order}.  When  the file popularity is uniform, the genie-aided converse bound reduces to the converse bound on the average load under the uniform file popularity derived in~\cite{bound17}.
\item  Numerical results verify the analytical results and demonstrate the promising  performance of the   optimized  parameter-based scheme.  Numerical results also show that the presented genie-aided converse bound is tighter than the converse bounds in~\cite{ji2015order,Jinbei} for any cache size  and  is tighter than the converse bounds in~\cite{bound16,Sinong}   when the cache size is modest or large.
\end{itemize}
\section{Problem Setting}\label{Sec:setting}
As in~\cite{AliFundamental,jin}, we consider a system with one server connected through a shared error-free link to $K\in\mathbb N$  users,  where $\mathbb N$ denotes the set of all natural numbers.\footnote{The problem setting is similar to the one we presented in~\cite{jin}, expect that here we consider an arbitrary file popularity and focus on minimizing the average load.}
The server has access to a  library  of $N\in\mathbb N$
files, denoted by $W_1, \ldots ,W_N$, each consisting of $F \in\mathbb N$ indivisible data units.
Let  $\mathcal{N}\triangleq \{1,2,\ldots ,N\}$ and $\mathcal{K} \triangleq \{1,2, \ldots K\}$ denote the set of file indices and the set of user indices, respectively.
Different from~\cite{AliFundamental}, we assume that each user randomly and independently requests a file in $\mathcal{N}$  according to  an arbitrary file popularity.
In particular, a user requests $W_n$ with probability $p_n \in [0,1]$, where $n \in \mathcal N$.
Thus, the file popularity distribution is given by $\mathbf p \triangleq \left(p_n\right)_{n=1}^{N}$, where $\sum_{n=1}^{N}p_n =1$.
In addition, without loss of generality, we assume $p_1 \geq p_2 \geq \ldots \geq p_N$.
Each user  has an isolated cache memory of $MF$ data units, for some real number $M \in [0,N]$.

The system operates in two phases, i.e., a placement phase and a delivery phase~\cite{AliFundamental}. In the placement phase, the users are given access to the entire  library  of $N$ files. Each user is then able to fill the content of its cache using the library.
Let $\phi_k$ denote the caching function for user $k$, which maps the files $W_1,  \ldots ,W_N$ into the cache content $Z_k\triangleq \phi_k(W_1,  \ldots ,W_N)$ for user  $k \in \mathcal{K}$.
Let $\boldsymbol \phi \triangleq \left(\phi_1, \ldots, \phi_K\right)$ denote the caching functions of all the $K$ users.
Note that $Z_k$   is of size  $MF$ data units.
Let   $\mathbf Z\triangleq \left(Z_1, \cdots, Z_K\right)$
 denote the cache contents of all the $K$ users.
In the delivery phase, each user randomly and independently requests one file in $\mathcal N$ according to file popularity distribution $\mathbf p$.
Let $D_{k}\in\mathcal N$ denote the index of the file requested by user $k \in \mathcal{K} $, and
let $\mathbf D\triangleq \left(D_1, \cdots, D_K\right)\in \mathcal N^K$ denote the requests of  all the $K$ users.
The server replies to these $K$ requests by sending a message  over the shared link, which is observed by all the $K$ users.
Let $\psi$ denote the encoding function for the server, which maps the files $W_1, \ldots ,W_N$, the cache contents $\mathbf Z$, and the requests $\mathbf D$ into the multicast message
$Y\triangleq  \psi(W_1, \ldots, W_N, \mathbf Z,\mathbf D)$
sent by the server over the shared link. Let $\mu_k$ denote the decoding function for user $k$, which maps the multicast message $Y$ received over the shared link, the cache content $Z_k$ and the request $D_k$, to the estimate
$\widehat{W}_{D_k}\triangleq \mu_k(Y,Z_k,D_k)$
of the requested file $W_{D_k}$ of user $k\in\mathcal K$.
 Let $\boldsymbol \mu \triangleq \left(\mu_1, \ldots, \mu_K\right)$ denote the decoding functions of all the $K$ users.
Each user should be able to recover its requested file from the message received over the shared link
and its cache content.
Thus, we impose the successful content delivery condition
$$\widehat{W}_{D_k} = W_{D_k}, \quad \forall \; k \in \mathcal{K}.$$
Given the cache size $M$, the cache contents $\mathbf Z$ and the requests $\mathbf D$ of all the $K$
users, let $R(K,N,M, \boldsymbol \phi,\mathbf D)F$ be the length (expressed in data units) of the multicast message $Y$,
where $R(K,N,M,\boldsymbol \phi,\mathbf D)$ represents the (normalized) load of the shared link.
Let
$$R_{\rm avg}(K,N,M,\boldsymbol \phi) \triangleq \mathbb{E}_{\mathbf D} \left[R(K,N,M,\boldsymbol \phi,\mathbf D)\right]$$
denote the average  (normalized) load of the shared link, where the average is taken over requests $\mathbf D$.\footnote{Later, we shall use slightly different notations for the average load to reflect the dependency on the specific scheme considered.}
Let
\begin{align}
R_{\rm avg}^*(K,N,M) \triangleq \min_{\boldsymbol \phi}R_{\rm avg}(K,N,M,\boldsymbol \phi) \label{eqn:min_avg_load}
\end{align}
denote the minimum average (normalized) load of the shared link.
In this paper, we adopt a  specific delivery strategy (i.e.,  the encoding function $\psi$)
and  decoding functions $\boldsymbol \mu$. Based on these, we  wish to minimize the average load of the shared link in the delivery phase under successful content delivery condition,
by  optimizing  the placement strategy (i.e., the caching functions $\boldsymbol \phi$) of uncoded placement.
As in~\cite{AliFundamental}, in this paper  we focus on studying effective centralized coded caching schemes to obtain first-order design insights. The obtained results can be extended to design efficient decentralized coded caching schemes, e.g., using the methodology we  propose in~\cite{jin}.
However, the detailed investigation of possible decentralized schemes inspired by the centralized approach of this paper is left for future work.

\section{Centralized  Coded Caching Scheme}\label{Sec:scheme}
In this section,  we first present   a  class of  centralized  coded caching schemes    utilizing   general  uncoded placement  and  a specific coded delivery strategy, which  are specified  by a  general  file partition parameter.
Then, we show that the class of  centralized coded caching schemes include the schemes  in~\cite{AliFundamental,AliDec,Jinbei}.

\subsection{Parameter-based Centralized Coded Caching}
In the uncoded placement phase, each file is partitioned into $2^K$ nonoverlapping subfiles. We label the subfiles of file $W_n$ as
$W_n=(W_{n,\mathcal{S}}: \mathcal{S} \subseteq \mathcal{K} ),$
where subfile $W_{n,\mathcal{S}}$ represents the  data units  of file $n$  stored in the cache of the users in set  $\mathcal{S}$.
We say subfile  $W_{n,\mathcal{S}}$ is of type $s$ if $|\mathcal{S}|=s$~\cite{HCD}.
Thus, the cache content at user $k$ is given by
$$Z_k=(W_{n,\mathcal{S}}:n \in \mathcal N ,k \in \mathcal{S}, \mathcal{S} \subseteq \mathcal{K} ).$$
Let $x_{n,\mathcal{S}}$ denote the ratio between the  number of  data units in  $W_{n,\mathcal{S}}$ and the  number of  data units in $W_n$ (i.e., $F$).
Let $\mathbf{x} \triangleq (x_{n,\mathcal{S}})_{n \in \mathcal{N}, \mathcal{S} \subseteq \mathcal{K}}$ denote the  file partition parameter.
Each specific choice of the file partition parameter $\mathbf{x}$ corresponds to one centralized coded caching scheme within the considered class.
This parameter is a design parameter and  will be optimized  to minimize the average  load in Section~\ref{Sec:optimization}.
Thus, $\mathbf{x}$ satisfies
\begin{align}
&0 \leq x_{n, \mathcal{S}} \leq 1 , \quad \forall \mathcal{S} \subseteq \mathcal{K}, \ n \in \mathcal{N},\label{eqn:X_range} \\
&\sum_{s=0}^{K}\sum_{\mathcal{S}\in \{\mathcal{S} \subseteq \mathcal{K}: |\mathcal{S}|=s\}}x_{n, \mathcal{S}}=1 , \quad  \forall n \in \mathcal{N}, \label{eqn:X_sum}\\
&\sum_{n=1}^{N}\sum_{s=1}^{K}\sum_{\mathcal{S} \in \{\mathcal{S} \subseteq \mathcal{K}: |\mathcal{S}|=s, k \in  \mathcal{S}\}}x_{n, \mathcal{S}} \leq M, \quad  \forall k \in \mathcal K, \label{eqn:memory_constraint}
\end{align}
where \eqref{eqn:X_sum} represents the file partition constraint and \eqref{eqn:memory_constraint} represents the cache memory constraint.
We say $\mathbf{x}$ is feasible if it satisfies \eqref{eqn:X_range}, \eqref{eqn:X_sum} and   \eqref{eqn:memory_constraint}.

In the coded delivery phase,  the $K$ users are served simultaneously using coded-multicasting.
Consider any $s\in  \{1,2,\ldots, K\}$.
We focus on a subset  of users  $\mathcal S\subseteq \mathcal{K}$ with $|\mathcal S|=s$.
Observe that every $s-1$ users in $\mathcal S$ share a subfile that is needed by the remaining user in $\mathcal S$.
More precisely, for any $k \in \mathcal S$,  the subfile $W_{D_{k},\mathcal S \setminus \{k\}}$ is requested by the user storing cache content $k$, since it is a subfile of $W_{D_{k}}$.
At the same time, it is missing at cache content $k$ since $k \notin \mathcal S \setminus \{k\}$. Finally, it is present in the cache of any user in  $\mathcal S \setminus \{k\}$.
For any subset $\mathcal S$ of cardinality $|\mathcal S|=s$, the server transmits coded multicast message
$\oplus_{k \in \mathcal S} W_{D_{k},\mathcal S \setminus \{k\}},$
where $\oplus$ denotes bitwise XOR.
All subfiles in the coded multicast message are assumed to be zero-padded to the length of the longest subfile.
For all $s\in  \{1,2,\ldots, K\}$, we  conduct the above delivery  procedure.
The multicast message $Y$  is simply the concatenation of the coded multicast messages for all $s\in  \{1,2,\ldots, K\}$.

Finally,  we formally summarize the placement  and  delivery procedures of the class of the  centralized coded caching schemes  specified  by the general file partition parameter $\mathbf{x}$   in Algorithm~\ref{alg:col}.
\begin{algorithm}[h]
\caption{Parameter-based Centralized Coded Caching}
\textbf{placement procedure}
\begin{algorithmic}[1]
\FORALL {$k \in \mathcal K$}
  \STATE $Z_k \leftarrow (W_{n,\mathcal{S}}:n \in \mathcal N ,k \in \mathcal{S}, \mathcal{S} \subseteq \mathcal{K} )$
\ENDFOR
\end{algorithmic}
\textbf{delivery procedure}
\begin{algorithmic}[1]
\FOR {$s=K,K-1,\cdots, 1$}
  \FOR {$\mathcal{S} \subseteq \mathcal{K}:|\mathcal{S}|=s$}
     \STATE server sends $\oplus_{k \in \mathcal{S}} W_{d_k,\mathcal{S}\setminus \{k\}}$
  \ENDFOR
\ENDFOR
\end{algorithmic}\label{alg:col}
\end{algorithm}

\subsection{Relations  with  Existing Schemes}\label{Sub:comparison}
We discuss  the relation between the  class of the centralized coded caching schemes  in Algorithm~\ref{alg:col} and  Maddah-Ali--Niesen's centralized~\cite{AliFundamental} and decentralized~\cite{AliDec} coded caching schemes,  the RLFU-GCC decentralized coded caching scheme~\cite{ji2015order}  as well as Zhang {\em et al.}'s decentralized coded caching scheme~\cite{Jinbei}  (the placement of which follows that of  the RLFU-GCC decentralized coded caching scheme~\cite{ji2015order}).

First, we compare the  placement procedures of the five schemes. In the placement procedure of Algorithm~\ref{alg:col}, each file is divided into  at most  $2^K$ nonoverlapping subfiles of types $0, 1, \ldots, K$, and the number of data units in  each subfile is a design parameter and can be optimized.  Note that for  a file partition parameter, if the number of data units in a subfile is zero, then there is no need to consider this subfile. Thus  $2^K$ is the maximum  number of non-overlapping subfiles  of a file. In fact, the number of non-overlapping subfiles  of a file corresponding to the optimized file partition parameter is usually much smaller than $2^K$, as shown in Section~\ref{Sec:Numerical}.
In contrast, in the placement procedure of  Maddah-Ali--Niesen's centralized coded caching scheme, each file is divided into ${K \choose \frac{KM}{N}}$ nonoverlapping subfiles  of type $\frac{KM}{N}$, and the number of data units in  each subfile is $\frac{F}{{K \choose \frac{KM}{N}}}$.
In the placement procedure of  Maddah-Ali--Niesen's decentralized coded caching scheme, each file is divided into $2^K$ nonoverlapping subfiles of types $0, 1, \ldots, K$, and the number of data units in each subfile is  random.
In the placement procedure of   the RLFU-GCC decentralized coded caching scheme and Zhang {\em et al.}'s decentralized coded caching scheme,  the whole file set is divided into two subsets. Each file in the first subset is divided into $2^K$ nonoverlapping subfiles of types $0, 1, \ldots, K$, and the number of data units in each of these subfiles is random.
On the other hand, no  file in the second subset is  divided (equivalently, each file in the second subset can be viewed as   one subfile of type $0$).

Next, we compare the delivery  procedures of the five schemes.  The delivery procedure of Algorithm~\ref{alg:col} is the same as the delivery procedure of Maddah-Ali--Niesen's decentralized coded caching scheme  and is designed for types $0, 1, \ldots, K$.
In contrast, the delivery procedure of Maddah-Ali--Niesen's centralized coded caching scheme is designed only for type $\frac{KM}{N}$,  and the delivery procedure of  Zhang {\em et al.}'s decentralized coded caching scheme is designed for types $0, 1, \ldots, \widetilde{K}$, where $\widetilde{K} \in \{0,1, \ldots, K\}$ is a random variable.
Note that the delivery procedures of the above four coded caching schemes are linear coded delivery procedures.
The delivery procedure of  the RLFU-GCC decentralized coded caching scheme adopts a  graph-coloring index coding delivery procedure designed for types $0, 1, \ldots, K$, called the $GCC_1$ procedure. The discussion of the relation with the $GCC_1$ procedure is deferred to the end of Section~\ref{Sub:formulation}.

From the above discussion, we know that  the class of the
centralized coded caching schemes  in Algorithm~\ref{alg:col} include  Maddah-Ali--Niesen's centralized coded caching scheme, each realization of Maddah-Ali--Niesen's   decentralized (random) coded caching scheme and each realization of Zhang {\em et al.}'s   decentralized (random) coded caching scheme.\footnote{Recall that~\cite{NonuniformDemands} partitions files  into multiple groups and applies Maddah-Ali--Niesen's decentralized coded caching scheme~\cite{AliDec} to each group. Thus, the class of the
centralized coded caching schemes  in Algorithm~\ref{alg:col} also include the uncoded placement and coded delivery for each group in~\cite{NonuniformDemands}.}

\section{Average Load Minimization} \label{Sec:optimization}
In this section, we first formulate the   coded caching design optimization problem over the considered class of schemes   to minimize the average load  under an arbitrary file popularity.
Then, we analyze  structural properties of the optimization problem to obtain design insights and reduce the complexity for obtaining an optimal solution.
\subsection{Problem Formulation}\label{Sub:formulation}
Consider  the class of the
centralized coded caching schemes specified by the general file partition parameter $\mathbf{x}$  in Algorithm~\ref{alg:col}.
Denote $\overline{R}(K,N,M,\mathbf{x},\mathbf D)$ as the load for serving the $K$ users with cache size $M$ under given file partition parameter $\mathbf{x}$ and  requests  $\mathbf D$.
By Algorithm~\ref{alg:col}, we have
\begin{align}
\overline{R}(K,N,M,\mathbf{x},\mathbf D)=\sum_{s=1}^{K}\sum_{\mathcal{S}\in \{\mathcal{S} \subseteq \mathcal{K}: |\mathcal{S}|=s\}} \max_{k \in \mathcal{S}}x_{D_k,\mathcal{S}\setminus\{k\}}. \label{eqn:load_1}
\end{align}
Let $\overline{R}_{\rm avg}(K,N,M,\mathbf{x})  \triangleq \mathbb{E}_{\mathbf D} \left[\overline{R}(K,N,M,\mathbf x,\mathbf D)\right]$ denote the average load for serving the $K$ users with cache size $M$ under given file partition parameter $\mathbf{x}$, where the average is taken over random requests $\mathbf D$.
Thus, we have
\begin{align}
\overline{R}_{\rm avg}(K,N,M,\mathbf{x})=\sum_{\mathbf{d} \in \mathcal{N}^K} \left(\prod_{k=1}^{K}p_{d_k}\right) \sum_{s=1}^{K}\sum_{\mathcal{S} \in \{\mathcal{S} \subseteq \mathcal{K}: |\mathcal{S}|=s\}} \max_{k \in \mathcal{S}} x_{d_k,\mathcal{S}\setminus\{k\}}, \label{eqn:average_load_1}
\end{align}
where $\mathbf{d} \triangleq (d_1, \ldots, d_K) \in \mathcal N^K$.

The file partition parameter $\mathbf{x}$ fundamentally affects the average load $\overline{R}_{\rm avg}(K,N,M,\mathbf{x})$. We would like to minimize $\overline{R}_{\rm avg}(K,N,M,\mathbf{x})$ by  optimizing $\mathbf{x}$, under the constraints on $\mathbf{x}$ in \eqref{eqn:X_range}, \eqref{eqn:X_sum} and \eqref{eqn:memory_constraint}.
\begin{Prob}[File Partition Parameter Optimization]\label{Prob:original}
\begin{align}
\overline{R}^*_{\rm avg}(K,N,M) \triangleq \min_{\mathbf{x}} \quad &  \overline{R}_{\rm avg}(K,N,M,\mathbf{x})\nonumber \\
s.t. \quad  &\eqref{eqn:X_range}, \eqref{eqn:X_sum}, \eqref{eqn:memory_constraint},\nonumber
\end{align}
where $\overline{R}_{\rm avg}(K,N,M,\mathbf{x})$ is given by \eqref{eqn:average_load_1} and the optimal solution is denoted as $\mathbf{x}^* \triangleq (x^*_{n,\mathcal{S}})_{n \in \mathcal{N}, \mathcal{S} \subseteq \mathcal{K}}$.
\end{Prob}

The objective function of Problem~\ref{Prob:original} is convex, as it is a positive weighted sum of convex piecewise linear functions~\cite{boyd2004convex}.
In addition, the constraints of Problem~\ref{Prob:original} are linear.
Hence, Problem~\ref{Prob:original} is a convex optimization problem and  can be solved using standard convex optimization techniques.
Note that the number of variables in Problem~\ref{Prob:original} is $2^K \cdot N^K$.
Thus,  the complexity   of  Problem~\ref{Prob:original}  is huge, especially when $K$ and $N$ are large.
In Section~\ref{Sub:arbitraty} and  Section~\ref{Sub:uniform}, we shall focus on deriving equivalent simplified formulations for Problem~\ref{Prob:original} to facilitate low-complexity optimal solutions under an arbitrary  popularity distribution and the uniform popularity distribution, respectively.

Next, we  discuss the  relation between the class of the centralized coded caching schemes specified by the general file partition parameter $\mathbf{x}$ in Algorithm~\ref{alg:col}  with the coded caching  schemes in~\cite{AliFundamental,AliDec,NonuniformDemands,ji2015order,Jinbei}.
Based on the discussion in  Section~\ref{Sub:comparison},  we can make the following statement.
\begin{Sta}[Relations  with Schemes in \cite{AliFundamental,AliDec, NonuniformDemands, Jinbei}]\label{Sta:perforamnce_comparison}
The   optimized  average load  for  the class of the centralized coded caching schemes  is no greater than those of the schemes in~\cite{AliFundamental,AliDec, NonuniformDemands, Jinbei}.
\end{Sta}

Finally, we discuss  the relation between the delivery procedure in  Algorithm~\ref{alg:col} and the  $GCC_1$  procedure   in~\cite{ji2015order}.
\begin{Lem}[Relation  with  $GCC_1$ in~\cite{ji2015order}]\label{Lem:ji}
For all file partition parameters, under the placement procedure in Algorithm~\ref{alg:col}, the delivery procedure in  Algorithm~\ref{alg:col}  achieves the same average load as the  $GCC_1$ procedure.
\end{Lem}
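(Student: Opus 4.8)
The plan is to fix the file partition parameter $\mathbf x$ (so that both procedures act on the same uncoded placement $Z_k=(W_{n,\mathcal S}:n\in\mathcal N,\ k\in\mathcal S)$) and to show that for \emph{every} realization of the demand vector $\mathbf D=\mathbf d\in\mathcal N^K$ the two delivery procedures transmit the same number of (normalized) bits. Averaging that identity over $\mathbf D$ against the product law $\prod_{k}p_{d_k}$, exactly as in \eqref{eqn:average_load_1}, then yields equality of the average loads, which is the assertion of Lemma~\ref{Lem:ji}.

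The first step is to restate the $GCC_1$ procedure of~\cite{ji2015order} in the present notation. For a demand $\mathbf d$ it builds the side-information (conflict) graph $\mathcal H_{\mathbf d}$ with one vertex $(W_{d_k,\mathcal T},k)$ for every user $k$ and every subfile $W_{d_k,\mathcal T}$ of its requested file with $k\notin\mathcal T$ and $x_{d_k,\mathcal T}>0$; two vertices $(W_{d_k,\mathcal T},k)$ and $(W_{d_{k'},\mathcal T'},k')$ may be given the same colour (combined in one XOR sub-transmission) only when each user caches the other's subfile, i.e.\ $k\in\mathcal T'$ and $k'\in\mathcal T$. The feature that I would invoke is that $GCC_1$ colours the demanded subfiles grouped by the user set that caches them relative to the requester: for each $\mathcal S\subseteq\mathcal K$ with $|\mathcal S|\ge 1$ it forms the class $C_{\mathcal S}\triangleq\{(W_{d_k,\mathcal S\setminus\{k\}},k):k\in\mathcal S,\ x_{d_k,\mathcal S\setminus\{k\}}>0\}$ and transmits $\oplus_{k\in\mathcal S}W_{d_k,\mathcal S\setminus\{k\}}$ zero-padded to the longest participating subfile, so that its contribution to the normalized load is $\max_{k\in\mathcal S}x_{d_k,\mathcal S\setminus\{k\}}$ (an empty class contributing $0$).

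The next step is to verify that this is a legitimate $GCC_1$ colouring that coincides class-for-class with the delivery of Algorithm~\ref{alg:col}. Each $C_{\mathcal S}$ is a clique of the compatibility relation, because for distinct $k,k'\in\mathcal S$ one has $k\in\mathcal S\setminus\{k'\}$ and $k'\in\mathcal S\setminus\{k\}$, so the XOR is decodable by the very argument used to justify Algorithm~\ref{alg:col} in Section~\ref{Sec:scheme}; and $\{C_{\mathcal S}\}_{\mathcal S\subseteq\mathcal K}$ partitions the vertex set of $\mathcal H_{\mathbf d}$, since every vertex $(W_{d_k,\mathcal T},k)$ belongs to $C_{\mathcal S}$ for the unique choice $\mathcal S=\mathcal T\cup\{k\}$. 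Hence the $GCC_1$ load for demand $\mathbf d$ equals $\sum_{s=1}^{K}\sum_{\mathcal S\subseteq\mathcal K:\,|\mathcal S|=s}\max_{k\in\mathcal S}x_{d_k,\mathcal S\setminus\{k\}}$, which is precisely $\overline R(K,N,M,\mathbf x,\mathbf d)$ of \eqref{eqn:load_1}; summing over $\mathbf d$ against $\prod_k p_{d_k}$ reproduces \eqref{eqn:average_load_1}, and the two average loads agree.

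The step I expect to be the main obstacle is the faithful translation of the $GCC_1$ rule of~\cite{ji2015order} into the subset-indexed grouping above, and the verification that $GCC_1$ splits and merges colour classes in exactly this way rather than any shorter way. The subtlety is that $\mathcal H_{\mathbf d}$ carries ``extra'' edges: a vertex $(W_{d_k,\mathcal T},k)$ can be compatible with a vertex whose subfile is cached by a proper superset of $\mathcal T$, so colourings mixing several subsets $\mathcal S$ are graph-theoretically admissible and could a priori be strictly shorter than $\{C_{\mathcal S}\}$. One therefore has to appeal to the precise greedy rule defining $GCC_1$ to conclude that its output is, in terms of load, the grouping $\{C_{\mathcal S}\}$; once that structural fact is pinned down, the per-demand identity holds for every $\mathbf d$ and Lemma~\ref{Lem:ji} follows upon taking $\mathbb E_{\mathbf D}[\cdot]$.
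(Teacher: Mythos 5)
Your proposal is correct and follows essentially the same route as the paper's Appendix~A: identify each $GCC_1$ colour class with a user subset $\mathcal S=\mathcal T\cup\{k\}$, observe that the class $C_{\mathcal S}$ contributes $\max_{k\in\mathcal S}x_{d_k,\mathcal S\setminus\{k\}}$ to the load, match this per demand with \eqref{eqn:load_1}, and average over $\mathbf D$. The ``main obstacle'' you flag is resolved exactly as you anticipate --- by the definition of $GCC_1$ in~\cite{ji2015order}, which assigns a common colour precisely to vertices with identical label set $\{\mu(v)\}\cup\eta(v)$ (so its output is the grouping $\{C_{\mathcal S}\}$ by construction, and no minimality argument over other admissible colourings is needed); this is the same observation the paper relies on.
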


\begin{proof} Please refer to Appendix A.
\end{proof}

From Lemma~\ref{Lem:ji}, we  know that the  $GCC_1$  procedure achieves the same average  load as the  delivery procedure in Algorithm~\ref{alg:col} at the  optimized  file partition parameter $\mathbf{x}^* $.  The  discussion on the relation   with the HCD procedure is deferred to the end of  Section~\ref{Sub:arbitraty}.
\subsection{Optimization under Arbitrary File Popularity}\label{Sub:arbitraty}
In this part, we first characterize  structural properties of Problem~\ref{Prob:original} and  simplify it  without losing optimality via two steps.
In step 1, we  show  an important structural property of  Problem~\ref{Prob:original}.
\begin{Thm}[Symmetry w.r.t. Type]\label{Thm:symmetry}
For all $n \in \mathcal{N}$ and $s \in \{0,1, \cdots, K\}$, the  values of $x^*_{n, \mathcal{S}}, \mathcal{S} \in \{\mathcal{S} \subseteq \mathcal{K}: |\mathcal{S}|=s\}$ are the  same.
\end{Thm}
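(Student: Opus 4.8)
The plan is to exploit the fact that Problem~\ref{Prob:original} is invariant under relabeling of the $K$ users, and then to symmetrize an arbitrary optimal solution using convexity. For a permutation $\pi$ of $\mathcal{K}$ and a feasible $\mathbf{x}$, define $\mathbf{x}^{\pi}$ by $x^{\pi}_{n,\mathcal{S}}\triangleq x_{n,\pi(\mathcal{S})}$, where $\pi(\mathcal{S})\triangleq\{\pi(k):k\in\mathcal{S}\}$. First I would check that $\mathbf{x}^{\pi}$ is again feasible: \eqref{eqn:X_range} is immediate; \eqref{eqn:X_sum} holds because $\pi$ permutes the subsets of each fixed cardinality among themselves; and for \eqref{eqn:memory_constraint}, substituting $\mathcal{T}=\pi(\mathcal{S})$ in the cache-memory sum for user $k$ under $\mathbf{x}^{\pi}$ turns it into the cache-memory sum for user $\pi(k)$ under $\mathbf{x}$, which is bounded by the same $M$. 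Hence $\mathbf{x}\mapsto\mathbf{x}^{\pi}$ maps the feasible set into itself.

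Next I would establish invariance of the objective, i.e. $\overline{R}_{\rm avg}(K,N,M,\mathbf{x}^{\pi})=\overline{R}_{\rm avg}(K,N,M,\mathbf{x})$. Starting from \eqref{eqn:average_load_1} evaluated at $\mathbf{x}^{\pi}$, I would use $\pi(\mathcal{S}\setminus\{k\})=\pi(\mathcal{S})\setminus\{\pi(k)\}$, then substitute $\mathcal{T}=\pi(\mathcal{S})$ and $j=\pi(k)$ to rewrite the inner term as $\max_{j\in\mathcal{T}}x_{d_{\pi^{-1}(j)},\mathcal{T}\setminus\{j\}}$, and finally relabel the demand vector via $\mathbf{e}=\mathbf{d}\circ\pi^{-1}$, which is a bijection of $\mathcal{N}^{K}$ leaving $\prod_{k}p_{d_k}$ unchanged; this recovers \eqref{eqn:average_load_1} evaluated at $\mathbf{x}$. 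I expect this double relabeling (of subsets and of demands, carried out consistently so that each argument of $x_{\cdot,\cdot}$ comes out right) to be the only genuinely delicate bookkeeping in the proof.

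With these two invariances in hand, the conclusion follows by group averaging. Given any optimal $\widetilde{\mathbf{x}}$, set $\mathbf{x}^{*}\triangleq\frac{1}{K!}\sum_{\pi}\widetilde{\mathbf{x}}^{\pi}$. Since the feasible set is convex and invariant under each $\pi$, $\mathbf{x}^{*}$ is feasible; since $\overline{R}_{\rm avg}$ is convex (as observed right after Problem~\ref{Prob:original}) and invariant under each $\pi$, Jensen's inequality gives $\overline{R}_{\rm avg}(K,N,M,\mathbf{x}^{*})\le\frac{1}{K!}\sum_{\pi}\overline{R}_{\rm avg}(K,N,M,\widetilde{\mathbf{x}}^{\pi})=\overline{R}_{\rm avg}(K,N,M,\widetilde{\mathbf{x}})=\overline{R}^{*}_{\rm avg}(K,N,M)$, so $\mathbf{x}^{*}$ is optimal. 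Finally, for any $n\in\mathcal{N}$ and any two subsets $\mathcal{S}_1,\mathcal{S}_2$ with $|\mathcal{S}_1|=|\mathcal{S}_2|$, pick a permutation $\sigma$ with $\sigma(\mathcal{S}_1)=\mathcal{S}_2$; then $x^{*}_{n,\mathcal{S}_2}=\frac{1}{K!}\sum_{\pi}\widetilde{x}_{n,(\pi\sigma)(\mathcal{S}_1)}$, and reindexing the sum by $\pi\mapsto\pi\sigma$ shows this equals $x^{*}_{n,\mathcal{S}_1}$. Thus there is an optimal solution whose entries depend on $\mathcal{S}$ only through $|\mathcal{S}|$, and taking $\mathbf{x}^{*}$ to be this solution yields exactly the claimed symmetry (the case $s=0$ being trivial, as there is only one subset of size $0$).
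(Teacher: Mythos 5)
Your proof is correct, but it takes a genuinely different route from the paper's. The paper argues directly on the objective \eqref{eqn:average_load_1}: it groups the demand vectors by their composition (how many users request each file), applies the bound $\max\geq$ arithmetic mean to the inner maxima, and then carries out a combinatorial count (via the sets $\Omega_{\mathcal S,\mathbf i}$ and multinomial coefficients) to show that replacing each $x_{n,\mathcal S}$ by the average of $x_{n,\mathcal S'}$ over all $\mathcal S'$ of the same cardinality attains the resulting lower bound, hence does not increase the average load. Your argument replaces all of that bookkeeping with the standard symmetrization device: the feasible set and the objective are invariant under relabeling the users (which holds precisely because all users have the same cache size $M$ and the demands are i.i.d.\ across users), so averaging an optimizer over the symmetric group yields, by convexity and Jensen's inequality, another optimizer that is constant on each orbit $\{\mathcal S\subseteq\mathcal K:|\mathcal S|=s\}$; your double change of variables in the invariance step (subsets and demands relabeled consistently) is carried out correctly. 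Both proofs establish the same, and the intended, conclusion --- that the symmetry can be imposed without loss of optimality, which is exactly how the theorem is invoked in \eqref{eqn:symmetry}; neither shows that \emph{every} optimizer is symmetric, and the piecewise-linear objective need not have a unique minimizer. What your route buys is brevity, conceptual transparency, and immediate generality to any user-permutation-invariant convex formulation; what the paper's route buys is an explicit lower bound on $\overline{R}_{\rm avg}$ already written in terms of the symmetrized quantities, which foreshadows the reduced objective used in the subsequent simplification steps.
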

\begin{proof}Please refer to Appendix B.
\end{proof}

Theorem~\ref{Thm:symmetry} indicates that at the optimal solution to Problem~\ref{Prob:original},
for all $n \in \mathcal{N}$ and $s \in \{0,1, \cdots, K\}$,
subfiles $W_{n,\mathcal{S}}$, $\mathcal{S}\in \{\mathcal{S} \subseteq \mathcal{K}: |\mathcal{S}|=s\}$ have the same size.
By Theorem~\ref{Thm:symmetry}, without losing optimality, we can set
\begin{align}
x_{n,\mathcal{S}}=y_{n,s}, \quad  \forall \mathcal{S} \subseteq \mathcal{K}, \  n \in \mathcal{N}, \label{eqn:symmetry}
\end{align}
when solving Problem~\ref{Prob:original}, where $s=|\mathcal{S}| \in \{0,1,\cdots, K\}$.
Here,   $y_{n,s}$ can be viewed as the ratio between the  number of  data units in each subfile of file $W_n$ which is of   type $s$ and the  number of  data units in file $W_n$ (i.e., $F$).
Let $\mathbf{y} \triangleq (y_{n,s})_{n \in \mathcal{N}, s \in \{0,1,\cdots, K\}}$.

By \eqref{eqn:symmetry},  the constraints in  \eqref{eqn:X_range}, \eqref{eqn:X_sum} and \eqref{eqn:memory_constraint} of Problem~\ref{Prob:original} can be converted into the following constraints:
\begin{align}
&0 \leq y_{n, s} \leq 1 , \quad  \forall  s \in \{0,1,\cdots, K\}, \ n \in \mathcal{N}, \label{eqn:X_range_2} \\
&\sum_{s=0}^{K} {K \choose s} y_{n, s}=1 , \quad  \forall n \in \mathcal{N}, \label{eqn:X_sum_2}\\
&\sum_{n=1}^{N}\sum_{s=1}^{K} {K-1 \choose s-1} y_{n, s} \leq M. \label{eqn:memory_constraint_2}
\end{align}
On the other hand, by  \eqref{eqn:symmetry}, the objective function of Problem~\ref{Prob:original} in \eqref{eqn:average_load_1} can be rewritten as
\begin{align}
\overline{R}_{\rm avg}(K,N,M,\mathbf{x})  = \sum_{\mathbf{d} \in \mathcal{N}^K} \left(\prod_{k=1}^{K}p_{d_k}\right) \sum_{s=1}^{K}\sum_{\mathcal{S}\in \{\mathcal{S} \subseteq \mathcal{K}: |\mathcal{S}|=s\}} \max_{k \in \mathcal{S}}y_{d_k,s-1}  \triangleq   \widetilde{R}_{\rm avg}(K,N,M,\mathbf{y}). \label{eqn:average_load_2}
\end{align}
Define $\mathcal{D}_{n,s} \triangleq \{n,n+1,\ldots,N\}^s\setminus \{n+1,n+2,\ldots,N\}^s$,  which represents  the set of all $s$-tuples with elements in $\left\{n,n+1,...,N\right\}$ that contain at least once the element $n$.  We  further simplify $\widetilde{R}_{\rm avg}(K,N,M,\mathbf{y})$ in~\eqref{eqn:average_load_2}.
\begin{Lem}[Simplification] \label{Lem:s symmetric}
$\widetilde{R}_{\rm avg}(K,N,M,\mathbf{y})$ in \eqref{eqn:average_load_2} is equivalent to
\begin{align}
\widetilde{R}_{\rm avg}(K,N,M,\mathbf{y})=&\sum_{s=1}^{K}{K \choose s}\sum_{n=1}^{N}\sum_{(d_1, \ldots, d_s) \in \mathcal{D}_{n,s}} \left(\prod_{k=1}^{s}p_{d_k}\right)    \max_{k \in \{1,2,\cdots, s\}}y_{d_k,s-1}.\label{eqn:average_load_3}
\end{align}
\end{Lem}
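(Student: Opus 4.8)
The plan is to derive \eqref{eqn:average_load_3} from \eqref{eqn:average_load_2} by three elementary manipulations of the sums: (i) interchanging the summation order and marginalizing out the demand coordinates that do not appear inside the maximum; (ii) exploiting the permutation symmetry of the resulting expression over all subsets $\mathcal{S}\subseteq\mathcal{K}$ of a fixed size $s$, which produces the binomial weight ${K \choose s}$; and (iii) re-indexing the sum over $s$-tuples of demands by the smallest file index occurring in the tuple, which is precisely the decomposition of $\mathcal{N}^s$ recorded by the sets $\mathcal{D}_{n,s}$.

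First I would move the sums over $s$ and over $\mathcal{S}$ with $|\mathcal{S}|=s$ to the outside in \eqref{eqn:average_load_2}. Fixing such an $s$ and $\mathcal{S}$, the term $\max_{k\in\mathcal{S}}y_{d_k,s-1}$ depends on $\mathbf{d}$ only through the sub-vector $(d_k)_{k\in\mathcal{S}}$. Writing $\prod_{k=1}^{K}p_{d_k}=\left(\prod_{k\in\mathcal{S}}p_{d_k}\right)\left(\prod_{k\in\mathcal{K}\setminus\mathcal{S}}p_{d_k}\right)$ and using $\sum_{d_k\in\mathcal{N}}p_{d_k}=1$ for each $k\notin\mathcal{S}$ (valid since $\mathbf p$ is a probability vector), the sum over $\mathbf d\in\mathcal{N}^K$ collapses to $\sum_{(d_k)_{k\in\mathcal{S}}\in\mathcal{N}^s}\left(\prod_{k\in\mathcal{S}}p_{d_k}\right)\max_{k\in\mathcal{S}}y_{d_k,s-1}$. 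This last quantity is invariant under relabeling the indices in $\mathcal{S}$, hence is the same for every $\mathcal{S}$ with $|\mathcal{S}|=s$ and equals its value at $\mathcal{S}=\{1,2,\ldots,s\}$; since there are ${K \choose s}$ such subsets, the inner sum over $\mathcal{S}$ contributes the factor ${K \choose s}$, giving
\begin{align}
\widetilde{R}_{\rm avg}(K,N,M,\mathbf{y})=\sum_{s=1}^{K}{K \choose s}\sum_{(d_1,\ldots,d_s)\in\mathcal{N}^s}\left(\prod_{k=1}^{s}p_{d_k}\right)\max_{k\in\{1,\ldots,s\}}y_{d_k,s-1}.\nonumber
\end{align}

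Finally I would split the inner sum over $\mathcal{N}^s$ according to the minimum entry of the tuple. By definition $\mathcal{D}_{n,s}=\{n,n+1,\ldots,N\}^s\setminus\{n+1,\ldots,N\}^s$ is the set of $s$-tuples all of whose entries are $\geq n$ with at least one entry equal to $n$, i.e.\ the tuples whose minimum entry equals $n$. Every tuple in $\mathcal{N}^s$ has a unique minimum entry in $\mathcal{N}$, so $\{\mathcal{D}_{n,s}\}_{n=1}^{N}$ partitions $\mathcal{N}^s$, and therefore $\sum_{(d_1,\ldots,d_s)\in\mathcal{N}^s}(\cdot)=\sum_{n=1}^{N}\sum_{(d_1,\ldots,d_s)\in\mathcal{D}_{n,s}}(\cdot)$. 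Substituting this into the previous display yields exactly \eqref{eqn:average_load_3}.

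All steps are routine; the only places needing a little care are checking that the marginalization in step (i) is legitimate because $\max_{k\in\mathcal{S}}y_{d_k,s-1}$ genuinely does not involve the coordinates outside $\mathcal{S}$, keeping track through the relabeling in step (ii) that the type index attached to $y$ stays $s-1$, and confirming in step (iii) that the $\mathcal{D}_{n,s}$ are pairwise disjoint and cover $\mathcal{N}^s$. I do not expect any genuine obstacle here: no convexity or information-theoretic subtlety enters, and the content is purely a bookkeeping reorganization of \eqref{eqn:average_load_2}.
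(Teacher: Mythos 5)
Your proposal is correct and follows essentially the same route as the paper's Appendix C: the paper likewise uses the permutation symmetry over subsets $\mathcal{S}$ of size $s$ to extract the factor ${K \choose s}$, marginalizes the demand coordinates outside $\mathcal{S}$ via $\sum_{d}p_{d}=1$, and then partitions $\mathcal{N}^s$ into the disjoint sets $\mathcal{D}_{n,s}$ indexed by the minimum entry of the tuple. The only (immaterial) difference is the order in which the symmetry step and the marginalization step are performed.
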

\begin{proof} Please refer to Appendix C.
\end{proof}

Note that the above transformations will not cause any optimality loss for Problem~\ref{Prob:original}. Thus, Problem~\ref{Prob:original} is equivalent to the following optimization problem.
\begin{Prob}[Equivalent Optimization in Step 1]\label{Prob:equivalent_1}
\begin{align}
\overline{R}^*_{\rm avg}(K,N,M)= \min_{\mathbf{y}} \quad &  \widetilde{R}_{\rm avg}(K,N,M,\mathbf{y})\nonumber \\
s.t. \quad  &\eqref{eqn:X_range_2}, \eqref{eqn:X_sum_2}, \eqref{eqn:memory_constraint_2}, \label{eqn:equivalent_1}
\end{align}
where $\widetilde{R}_{\rm avg}(K,N,M,\mathbf{y})$ is given by \eqref{eqn:average_load_3} and the optimal solution is denoted as $\mathbf{y}^* \triangleq (y^*_{n,s})_{n \in \mathcal{N}, s \in \{0,1,\cdots, K\}}$.
\end{Prob}

The objective function of Problem~\ref{Prob:equivalent_1} is  convex, as it is a positive weighted sum of convex piecewise linear functions~\cite{boyd2004convex}.
In addition, the constraints of Problem~\ref{Prob:equivalent_1} are linear.
Hence, Problem~\ref{Prob:equivalent_1} is a convex optimization problem and  can be solved using standard convex optimization techniques.
Note that the number of variables in Problem~\ref{Prob:equivalent_1} is  $\frac{N(N^K-1)}{N-1}$, which is much smaller than that of Problem~\ref{Prob:original} (i.e., $2^K \cdot N^K$). However, the complexity   of  Problem~\ref{Prob:equivalent_1} is still huge, especially when $K$ and $N$ are large.

In step 2,  we characterize an important structural property of Problem~\ref{Prob:equivalent_1}.



\begin{Thm}[Monotonicity w.r.t. File Popularity]\label{Thm:popularity}
For all  $n \in \{1,2,\ldots, N-1\}$ and $s \in \{1,2, \cdots, K\}$,  when $p_n \geq p_{n+1}$,
\begin{align}
y^*_{n,s} \geq  y^*_{n+1,s}. \label{eqn:opt_Y_o}
\end{align}
\end{Thm}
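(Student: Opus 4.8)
The plan is to prove Theorem~\ref{Thm:popularity} by a swapping (exchange) argument applied to Problem~\ref{Prob:equivalent_1}, arguing by contradiction. The statement concerns consecutive files, so fix $n\in\{1,\dots,N-1\}$; since $p_1\ge\cdots\ge p_N$ we have $p_n\ge p_{n+1}$, and I assume toward a contradiction that $y^*_{n,s_0}<y^*_{n+1,s_0}$ for some $s_0\in\{1,\dots,K\}$. The file-partition constraint \eqref{eqn:X_sum_2}, written for $n$ and for $n+1$, then forces the existence of a ``reverse'' type $s_1\neq s_0$ with $y^*_{n,s_1}>y^*_{n+1,s_1}$; here $s_1\in\{0,1,\dots,K\}$, and it is worth noting that the choices $s_1=0$ (which is absent from the memory constraint \eqref{eqn:memory_constraint_2}) and $s_1=K$ (which is absent from the objective \eqref{eqn:average_load_3}) are especially favorable when they are available.

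I would then construct a feasible competitor $\mathbf y$ agreeing with $\mathbf y^*$ everywhere except that, for a small $\rho>0$,
\begin{align*}
y_{n,s_0}&=y^*_{n,s_0}+\tfrac{\rho}{\binom{K}{s_0}}, & y_{n,s_1}&=y^*_{n,s_1}-\tfrac{\rho}{\binom{K}{s_1}},\\
y_{n+1,s_0}&=y^*_{n+1,s_0}-\tfrac{\rho}{\binom{K}{s_0}}, & y_{n+1,s_1}&=y^*_{n+1,s_1}+\tfrac{\rho}{\binom{K}{s_1}}
\end{align*}
(with $\binom{K}{0}=1$, and the $s_1$-updates omitted when $s_1=0$ so that only $y_{n,0},y_{n+1,0}$ move). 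By construction this keeps the left-hand side of \eqref{eqn:X_sum_2} equal to $1$ for both $n$ and $n+1$. Using $\binom{K-1}{s-1}=\tfrac{s}{K}\binom{K}{s}$, the change it induces in the per-user memory usage of file $n$ equals $\tfrac{\rho}{K}(s_0-s_1)$, which is exactly the negative of the corresponding change for file $n+1$, so the left-hand side of \eqref{eqn:memory_constraint_2} is unchanged; and since $y^*_{n+1,s_0}>y^*_{n,s_0}\ge0$ and $y^*_{n,s_1}>y^*_{n+1,s_1}\ge0$, a sufficiently small $\rho>0$ keeps all entries in $[0,1]$, so \eqref{eqn:X_range_2} holds and $\mathbf y$ is feasible. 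Crucially, at every type the pair-sum $y_{n,s}+y_{n+1,s}$ is preserved.

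The crux — and the step I expect to be the main obstacle — is to show that $\mathbf y$ does not increase, and for a suitable $s_1$ and $\rho$ strictly decreases, the objective, contradicting optimality of $\mathbf y^*$. By Theorem~\ref{Thm:symmetry} and Lemma~\ref{Lem:s symmetric} the objective of Problem~\ref{Prob:equivalent_1} splits additively over types, $\widetilde{R}_{\rm avg}(\mathbf y)=\sum_{\tau=0}^{K-1}\binom{K}{\tau+1}\,\mathbb E\!\left[\max_{1\le k\le \tau+1} y_{D_k,\tau}\right]$ with $D_1,D_2,\dots$ i.i.d.\ $\sim\mathbf p$, so only the $s_0$-term and the $s_1$-term change (and the $s_1$-term is identically zero if $s_1=K$). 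For the $s_1$-term I would use that the update moves subfile mass at type $s_1$ from the more popular file $n$ (currently ahead there) to the less popular file $n+1$, which can only decrease $\mathbb E[\max_k y_{D_k,s_1}]$; in the special case $s_1=0$ the $s=1$ contribution changes by $K\binom{K}{s_0}\rho\,(p_{n+1}-p_n)\le 0$. The difficulty is the $s_0$-term: pushing mass toward the more popular file $n$ \emph{raises} $\mathbb E[\max_k y_{D_k,s_0}]$, since file $n$ enters the maximum more often, so one must bound this increase by the decrease obtained at $s_1$. Carrying this out calls for a careful pairwise rearrangement estimate on the $\max$-terms — conditioning on which coordinates of the request tuple fall in $\{n,n+1\}$ and using a measure-preserving coupling that interchanges $n$ and $n+1$ — together with the elementary bounds $\partial_{z_n}\mathbb E[\max_k z_{d_k}]\le 1-(1-p_n)^{s_0+1}\le (s_0+1)p_n$ and the identity $\binom{K}{s_0+1}(s_0+1)=\binom{K}{s_0}(K-s_0)$, which link the two contributions.

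Because the resulting inequality is regime-dependent (its direction hinges on $p_n/p_{n+1}$ relative to quantities such as $K/s_0$), I expect that the proof must either branch on the choice of rebalancing type — $s_1=0$, a general $s_1$, or $s_1=K$ — so as to make the net change nonpositive in each case, or, more robustly, iterate the exchange in a ``bubble-sort'' manner: each step is objective-nonincreasing and strictly reduces the number of inversions $\{(n,s):s\ge1,\ y_{n,s}<y_{n+1,s}\}$, with boundary entries equal to $0$ or $1$ handled by small dedicated modifications, so that after finitely many steps one reaches a feasible solution that is sorted at every type $s\ge1$ and whose objective is at most $\overline{R}^*_{\rm avg}(K,N,M)$. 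Such a solution is therefore optimal and satisfies \eqref{eqn:opt_Y_o}, which contradicts the assumed strict violation (or, read constructively, shows the optimal solution may be taken to satisfy \eqref{eqn:opt_Y_o}). To cover ties $p_n=p_{n+1}$ I would additionally invoke the symmetrization used for Theorem~\ref{Thm:symmetry}: $\tfrac12\big(\mathbf y^*+\pi_{n,n+1}\mathbf y^*\big)$ is feasible with the same objective by convexity, so an optimal solution with $y_{n,s}=y_{n+1,s}$ exists and trivially satisfies \eqref{eqn:opt_Y_o}.
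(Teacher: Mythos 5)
Your overall plan --- remove inversions between files $n$ and $n+1$ by an exchange that preserves \eqref{eqn:X_sum_2} and \eqref{eqn:memory_constraint_2} and does not increase the objective --- is the same strategy as the paper's Appendix~D (Statement~(ii)), and your $s_1=0$ branch is essentially the infinitesimal version of the paper's actual move: the paper swaps the two files' values at the single offending type and compensates \emph{only} in the type-$0$ component, which enters the objective linearly through the $s=1$ term. But your write-up stops exactly at the step that constitutes the proof. You correctly observe that the type-$s_0$ contribution increases while the compensating contribution decreases, and then declare the comparison ``regime-dependent,'' to be settled by branching on $s_1$ or by iterating. This leaves the theorem unproved: for a generic reverse type $s_1\ge 1$ the net sign is genuinely uncontrolled; the $s_1=K$ branch you call ``especially favorable'' is in fact counterproductive (type-$K$ subfiles never appear in the objective, so that choice deletes the compensating \emph{decrease} and the perturbation strictly increases the load, yielding no contradiction); $s_1=0$ need not be an available reverse type, and decreasing $y_{n,0}$ requires $y^*_{n,0}>0$, which you do not guarantee; and the bubble-sort iteration presupposes that each step is objective-nonincreasing, which is precisely the unproven claim.

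The paper closes this gap with a concrete computation your sketch does not reach: after the swap-and-compensate-at-type-$0$ exchange (applied to files adjacent in the order of the $y_{\cdot,s_0-1}$ values, which is what makes the max bookkeeping via the sets $\mathcal{\widetilde D}_{(n),s_0}$ clean), the load difference collapses to $f(s_0)\binom{K}{s_0-1}\bigl(y_{(n_0),s_0-1}-y_{(n_0+1),s_0-1}\bigr)$ with $f(s)=\frac{K-s+1}{s}\left(\alpha^{s}-\beta^{s}\right)+K\left(p_{(n_0+1)}-p_{(n_0)}\right)$, where $\alpha=p_{(n_0)}+\sum_{n'\ge n_0+2}p_{(n')}$ and $\beta=\sum_{n'\ge n_0+1}p_{(n')}$; it then shows $f(1)=0$ and $f'(s)>0$, so $f(s_0)\ge 0$ uniformly --- i.e., the gain from the type-$0$ compensation always dominates the loss at type $s_0-1$, with no regime split and no iteration needed. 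This monotonicity lemma for $f$ (or an equivalent rearrangement inequality) is the missing core of your argument; without it the contradiction is not established. Your treatment of the tie case $p_n=p_{n+1}$ by symmetrization is fine and matches the role of the paper's Statement~(i), which uses a max-averaging inequality to the same effect.
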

\begin{proof} Please refer to Appendix D.
\end{proof}

Theorem~\ref{Thm:popularity} indicates that, at the optimal solution
to Problem~\ref{Prob:equivalent_1},
for all  $n \in \{1,2,\ldots, N-1\}$ and $s \in \{1,2, \cdots, K\}$, when $p_n \geq p_{n+1}$,  the size
 of subfiles $W_{n,\mathcal{S}}$, $\mathcal{S}\subseteq \{\mathcal{S} \subseteq \mathcal{K}: |\mathcal{S}|=s\}$ is no smaller than that of subfiles $W_{n+1,\mathcal{S}}$, $\mathcal{S}\subseteq \{\mathcal{S} \subseteq \mathcal{K}: |\mathcal{S}|=s\}$.

Based on Theorem~\ref{Thm:popularity}, we have the following result.
\begin{Cor}\label{Cor:group}
For all $n_1, n_2 \in \mathcal N$ and $s \in \{0,1,\ldots, K\}$,   $y^*_{n_1,s}=y^*_{n_2,s}$ if and only if $$\sum_{s=1}^{K} {K-1 \choose s-1} y^*_{n_1, s}=\sum_{s=1}^{K} {K-1 \choose s-1} y^*_{n_2, s}.$$
\end{Cor}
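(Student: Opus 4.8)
The plan is to prove the two directions of the equivalence separately, with Theorem~\ref{Thm:popularity} supplying the substance of one of them. The ``only if'' direction I would dispose of first and immediately: if $y^*_{n_1,s}=y^*_{n_2,s}$ for every $s\in\{0,1,\ldots,K\}$, then the two weighted sums $\sum_{s=1}^{K}\binom{K-1}{s-1}y^*_{n_1,s}$ and $\sum_{s=1}^{K}\binom{K-1}{s-1}y^*_{n_2,s}$ agree term by term, hence are equal.

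For the ``if'' direction I would assume the two weighted sums coincide and, without loss of generality, take $n_1\le n_2$ (the case $n_1=n_2$ being trivial). Under the standing assumption $p_1\ge p_2\ge\cdots\ge p_N$ we have $p_n\ge p_{n+1}$ for each $n\in\{n_1,\ldots,n_2-1\}$, so applying Theorem~\ref{Thm:popularity} repeatedly along this chain would give, for every fixed $s\in\{1,\ldots,K\}$,
\[
y^*_{n_1,s}\ \ge\ y^*_{n_1+1,s}\ \ge\ \cdots\ \ge\ y^*_{n_2,s}.
\]
Since each coefficient $\binom{K-1}{s-1}$ is strictly positive, the identity $\sum_{s=1}^{K}\binom{K-1}{s-1}\bigl(y^*_{n_1,s}-y^*_{n_2,s}\bigr)=0$ — a sum of nonnegative terms — then forces $y^*_{n_1,s}=y^*_{n_2,s}$ for all $s\in\{1,\ldots,K\}$. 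Finally I would handle $s=0$ via the file-partition constraint \eqref{eqn:X_sum_2}: substituting $y^*_{n,0}=1-\sum_{s=1}^{K}\binom{K}{s}y^*_{n,s}$ for $n\in\{n_1,n_2\}$ and using the equalities just obtained yields $y^*_{n_1,0}=y^*_{n_2,0}$, completing the argument.

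I do not expect a genuine obstacle here; the claim is a short corollary of Theorem~\ref{Thm:popularity}. The only subtlety to watch is that Theorem~\ref{Thm:popularity} delivers only one-sided inequalities between the per-type subfile sizes of files with ordered popularities, so it is the strict positivity of the binomial weights that upgrades ``equal total cache memory'' to ``equal subfile size for each type $s\ge 1$,'' and the partition constraint \eqref{eqn:X_sum_2} is then what extends the conclusion to the type-$0$ coordinate.
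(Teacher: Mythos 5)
Your proposal is correct and follows essentially the same route as the paper's proof in Appendix~E: the forward direction is immediate, and the converse uses the monotonicity of Theorem~\ref{Thm:popularity} (with $p_{n_1}\ge p_{n_2}$) so that the vanishing of a nonnegatively weighted sum of nonnegative differences forces $y^*_{n_1,s}=y^*_{n_2,s}$ for $s\ge 1$, with the $s=0$ case recovered from \eqref{eqn:X_sum_2}. Your explicit chaining through intermediate indices is a minor (and welcome) elaboration of the paper's one-line invocation of Theorem~\ref{Thm:popularity}.
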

\begin{proof} Please refer to Appendix E.
\end{proof}

Corollary~\ref{Cor:group} indicates that, at the optimal solution to Problem~\ref{Prob:equivalent_1}, for all $n_1, n_2 \in \mathcal N$, $s \in \{0,1,\ldots, K\}$ and   $\mathcal{S}\subseteq \{\mathcal{S} \subseteq \mathcal{K}: |\mathcal{S}|=s\}$, subfiles $W_{n_1,\mathcal{S}}$ and $W_{n_2,\mathcal{S}}$ have the same size if and only if files $n_1$ and  $n_2$ are allocated the same amount of cache memory.

In addition, by Theorem~\ref{Thm:popularity}, without losing optimality, we can include the following constraint
\begin{align}
y_{n,s} \geq y_{n+1,s},  \quad  \forall  s \in \{1, 2, \cdots, K\}, \ n \in\{1,2,\cdots,N-1\}, \label{eqn:additional_cons}
\end{align}
when solving Problem~\ref{Prob:equivalent_1}.
We now further simplify \eqref{eqn:average_load_3} based on \eqref{eqn:additional_cons}.
\begin{Lem}[Simplification]\label{Lem:simplification pop}
$\widetilde{R}_{\rm avg}(K,N,M,\mathbf{y})$ in \eqref{eqn:average_load_3}    is equivalent to
\begin{align}
\widetilde{R}_{\rm avg}(K,N,M,\mathbf{y})=\sum_{s=1}^{K}{K \choose s}\sum_{n=1}^{N}\left(\left(\sum_{n'=n}^{N}p_{n'}\right)^s-\left(\sum_{n'=n+1}^{N}p_{n'}\right)^s\right) y_{n,s-1}.\label{eqn:average_load_4}
\end{align}
\end{Lem}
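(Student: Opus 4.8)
The plan is to use the extra ordering constraint \eqref{eqn:additional_cons} to collapse each $\max$ in \eqref{eqn:average_load_3} to a single term, and then to evaluate the remaining sum of products of popularities by a product-expansion argument.

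First I would fix $s \in \{1,\ldots,K\}$ and $n \in \mathcal{N}$ and look at an arbitrary tuple $(d_1,\ldots,d_s) \in \mathcal{D}_{n,s}$. By the definition $\mathcal{D}_{n,s} = \{n,n+1,\ldots,N\}^s \setminus \{n+1,\ldots,N\}^s$, every coordinate satisfies $d_k \geq n$ and at least one coordinate equals $n$; hence $\min_{k} d_k = n$. For $s \geq 2$ the relevant second index of $y$ is $s-1 \geq 1$, and \eqref{eqn:additional_cons} says $y_{1,s-1} \geq y_{2,s-1} \geq \cdots \geq y_{N,s-1}$, so $\max_{k\in\{1,\ldots,s\}} y_{d_k,s-1} = y_{\min_k d_k,\, s-1} = y_{n,s-1}$; for $s=1$ the tuple is simply $(n)$ and the identity $\max_k y_{d_k,0}=y_{n,0}$ is trivial. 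Substituting this into \eqref{eqn:average_load_3} gives
$$\widetilde{R}_{\rm avg}(K,N,M,\mathbf{y}) = \sum_{s=1}^K \binom{K}{s} \sum_{n=1}^N y_{n,s-1} \sum_{(d_1,\ldots,d_s)\in\mathcal{D}_{n,s}} \prod_{k=1}^s p_{d_k}.$$

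Next I would evaluate the inner sum. Writing $P_n \triangleq \sum_{n'=n}^N p_{n'}$ with the convention $P_{N+1}=0$, expanding a power of a sum gives $\sum_{(d_1,\ldots,d_s)\in\{n,\ldots,N\}^s}\prod_{k=1}^s p_{d_k} = P_n^s$ and likewise $\sum_{(d_1,\ldots,d_s)\in\{n+1,\ldots,N\}^s}\prod_{k=1}^s p_{d_k} = P_{n+1}^s$. Since $\mathcal{D}_{n,s}$ is exactly the difference of these two index sets and the summand is the same function on both,
$$\sum_{(d_1,\ldots,d_s)\in\mathcal{D}_{n,s}} \prod_{k=1}^s p_{d_k} = P_n^s - P_{n+1}^s = \left(\sum_{n'=n}^N p_{n'}\right)^s - \left(\sum_{n'=n+1}^N p_{n'}\right)^s.$$
Plugging this back in yields precisely \eqref{eqn:average_load_4}.

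I expect the crux of the argument to be the first step, namely recognizing that \eqref{eqn:additional_cons} forces the maximizing file index in each term to be the smallest one, which is $n$ by membership in $\mathcal{D}_{n,s}$; once that collapse is justified, the remainder is routine bookkeeping, the only minor points being the boundary case $s=1$ (where the second index of $y$ is $0$ and \eqref{eqn:additional_cons} is vacuous) and the empty-sum convention $P_{N+1}=0$. No optimality is lost in passing to \eqref{eqn:average_load_4}, since by Theorem~\ref{Thm:popularity} the constraint \eqref{eqn:additional_cons} is satisfied by an optimal solution of Problem~\ref{Prob:equivalent_1}.
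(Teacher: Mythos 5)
Your proof is correct and follows essentially the same route as the paper's Appendix F: you use the ordering constraint \eqref{eqn:additional_cons} to collapse each $\max$ over a tuple in $\mathcal{D}_{n,s}$ to $y_{n,s-1}$ (since the minimum index in such a tuple is $n$), and then evaluate the inner sum via the set-difference structure of $\mathcal{D}_{n,s}$ as a difference of powers of partial popularity sums. Your explicit treatment of the $s=1$ boundary case (where \eqref{eqn:additional_cons} does not cover the index $0$ but the tuple is just $(n)$) is a small point the paper leaves implicit, but it does not change the argument.
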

\begin{proof} Please refer to Appendix F.
\end{proof}

Based on the above analysis,   Problem~\ref{Prob:equivalent_1} is equivalent to the following optimization problem.
\begin{Prob}[Equivalent Optimization in Step 2]\label{Prob:simplify_2}
\begin{align}
\overline{R}^*_{\rm avg}(K,N,M)= \min_{\mathbf{y}} \quad &  \widetilde{R}_{\rm avg}(K,N,M,\mathbf{y})\nonumber \\
s.t. \quad  &\eqref{eqn:X_range_2}, \eqref{eqn:X_sum_2}, \eqref{eqn:memory_constraint_2}, \eqref{eqn:additional_cons}, \nonumber
\end{align}
where $\widetilde{R}_{\rm avg}(K,N,M,\mathbf{y})$ is given by \eqref{eqn:average_load_4}.
\end{Prob}

The objective function  of Problem~\ref{Prob:simplify_2} is  linear.
In addition, the constraints of Problem~\ref{Prob:simplify_2} are linear.
Hence, Problem~\ref{Prob:simplify_2} is a linear optimization problem and  can be solved using linear optimization techniques.
Note that the number of variables in Problem~\ref{Prob:simplify_2} is  $(K+1)N$.   The complexity  for solving   Problem~\ref{Prob:simplify_2} using the algorithm in~\cite{LeeS13a} is $\mathcal O \left(\sqrt{(K+1)N}\right)$.

Next, we  discuss the relation between  the  delivery procedure in Algorithm~\ref{alg:col}  and the  HCD  procedure under any file partition parameter satisfying  \eqref{eqn:symmetry} and \eqref{eqn:additional_cons}.
Recall that  in the delivery procedure in  Algorithm~\ref{alg:col},  all subfiles in one coded multicast message are  zero-padded to the length of the longest subfile in the coded multicast message, which may cause the  ``bit waste" problem and degrade the average load performance.
In~\cite{HCD}, an efficient delivery procedure for coded caching, called   HCD  procedure, is  proposed, without specifying a placement procedure.
The  HCD procedure adopts an appending method  to address the ``bit waste" problem. In particular, all subfiles in one  coded multicast message are padded with  bits  from some  subfiles with larger $s$  to achieve the same length as the longest subfile in the coded multicast message.
Accordingly, these appended bits are then removed from the subfiles  with larger $s$  and do not need to be considered again when later coding these subfiles.
Thus, one may expect that the HCD procedure can  achieve a lower average load than the  delivery procedure in  Algorithm~\ref{alg:col} under any file partition parameter.
However, the following lemma shows a different result.
\begin{Lem}[Relation   with  HCD in  \cite{HCD}]\label{Lem:HCD}
For all file partition parameters satisfying  \eqref{eqn:symmetry} and \eqref{eqn:additional_cons},  under the placement procedure in Algorithm~\ref{alg:col}, the   delivery procedure in Algorithm~\ref{alg:col}  achieves the same average  load as the  HCD procedure.
\end{Lem}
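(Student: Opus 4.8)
The plan is to prove the following stronger, deterministic statement: for every demand vector $\mathbf d\in\mathcal N^K$ and every nonempty $\mathcal S\subseteq\mathcal K$, the coded multicast message that the HCD procedure produces for the user subset $\mathcal S$ (under the placement of Algorithm~\ref{alg:col} with $\mathbf x$ satisfying \eqref{eqn:symmetry}) has length exactly $\max_{k\in\mathcal S}y_{d_k,|\mathcal S|-1}$, i.e.\ the length of the corresponding message of the delivery procedure of Algorithm~\ref{alg:col}; cf.\ \eqref{eqn:load_1}. Summing over all $\mathcal S$ with $|\mathcal S|=1,\dots,K$ then shows the two delivery procedures incur the same load for \emph{every} request, and taking $\mathbb E_{\mathbf D}[\cdot]$ yields the claimed equality of average loads. (Since HCD is by construction never worse than the zero-padded delivery of Algorithm~\ref{alg:col}, one could instead only show each HCD message is no shorter; I prove the exact statement.) Fix $\mathbf d$; for nonempty $\mathcal S\subseteq\mathcal K$ let $k_0(\mathcal S)\in\arg\min_{k\in\mathcal S}d_k$ be a user in $\mathcal S$ requesting the most popular file, and call $W_{d_{k_0(\mathcal S)},\,\mathcal S\setminus\{k_0(\mathcal S)\}}$ the \emph{pivot subfile of $\mathcal S$}. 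By the monotonicity constraint \eqref{eqn:additional_cons}, $y_{d_{k_0(\mathcal S)},\,|\mathcal S|-1}=\max_{k\in\mathcal S}y_{d_k,|\mathcal S|-1}$, so the pivot subfile is a longest one among the components $\{W_{d_k,\mathcal S\setminus\{k\}}:k\in\mathcal S\}$ of $\mathcal S$'s message.

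Recall that when HCD builds $\mathcal S$'s message it sets the message length to the maximum of the \emph{current} sizes of its components (the current size of a subfile being its original size minus any bits that earlier appending steps have transferred out of it), then pads each shorter component $W_{d_k,\mathcal S\setminus\{k\}}$ up to that length by appending to it bits of some subfile $W_{d_k,\mathcal T}$ with $\mathcal S\setminus\{k\}\subsetneq\mathcal T$ and $k\notin\mathcal T$ --- exactly the subfiles whose bits every user in $\mathcal S$ can still decode (user $k$ needs them, all other users in $\mathcal S$ have them cached), and which necessarily have type $|\mathcal T|\ge|\mathcal S|$ --- and finally marks the transferred bits as removed from $W_{d_k,\mathcal T}$. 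The crux of the proof is the invariant that \emph{no pivot subfile is ever used as the source of an appending step, so that every pivot subfile keeps its original size throughout HCD}. Granting this, a pivot subfile is never shrunk, so at the moment $\mathcal S$'s message is built its pivot component has its original size $y_{d_{k_0(\mathcal S)},|\mathcal S|-1}$, which by \eqref{eqn:additional_cons} is at least the current (hence also original) size of every other component of $\mathcal S$; therefore the message length equals $y_{d_{k_0(\mathcal S)},|\mathcal S|-1}=\max_{k\in\mathcal S}y_{d_k,|\mathcal S|-1}$ (and, incidentally, the pivot component is never padded), independently of the order in which HCD visits the subsets. Summing over all $\mathcal S$ then gives the HCD load for $\mathbf d$ as $\sum_{s=1}^{K}\sum_{\mathcal S:|\mathcal S|=s}\max_{k\in\mathcal S}y_{d_k,s-1}$, which under \eqref{eqn:symmetry} is exactly $\overline R(K,N,M,\mathbf x,\mathbf d)$ of Algorithm~\ref{alg:col}; averaging over $\mathbf D$ completes the argument.

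To prove the invariant I would argue by extremality, which also sidesteps the apparent circularity that the message lengths used to identify ``too-short'' components themselves depend on which subfiles have been shrunk. Suppose some appending step uses a pivot subfile as its source, and consider the \emph{first} such step in HCD's visiting order; it occurs while $\mathcal S'$'s message is built, with recipient $k'\in\mathcal S'$ and source subfile $W_{d_{k'},\mathcal A}$ equal to the pivot subfile of some $\mathcal S$, so $\mathcal A=\mathcal S\setminus\{k_0(\mathcal S)\}$ and $d_{k'}=d_{k_0(\mathcal S)}$. Before this step no pivot subfile has been shrunk, so at this moment the pivot component of $\mathcal S'$ still has its original, maximal size; since $k'$'s component is strictly shorter than $\mathcal S'$'s message length, $k'$ is not the pivot user of $\mathcal S'$, and hence there is $k^\ast\in\mathcal S'$ with $d_{k^\ast}<d_{k'}$, so $k^\ast\neq k'$. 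The eligibility conditions $k'\notin\mathcal A$ and $\mathcal S'\setminus\{k'\}\subseteq\mathcal A$, together with $\mathcal A\subseteq\mathcal S$, force $k^\ast\in\mathcal S'\setminus\{k'\}\subseteq\mathcal S$; then $k^\ast\in\mathcal S$ with $d_{k^\ast}<d_{k'}=d_{k_0(\mathcal S)}=\min_{j\in\mathcal S}d_j$, a contradiction. Hence no pivot subfile is ever a source, which proves the invariant.

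I expect the main obstacle to be precisely this invariant: writing HCD's appending/removal bookkeeping out carefully (matching the description in~\cite{HCD}) and ruling out every indirect way the pivot subfile could be touched --- in particular a chain of successive appendings, or a different user that happens to request the same file $d_{k_0(\mathcal S)}$. All of these cases are closed by the monotonicity property \eqref{eqn:additional_cons}, which is exactly what forces the ``longest'' component of each coded message to be one that never participates in an appending step; without \eqref{eqn:additional_cons} the HCD procedure generally does strictly improve on the delivery procedure of Algorithm~\ref{alg:col}.
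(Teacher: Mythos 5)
Your proof is correct and follows essentially the same route as the paper's: both rest on the observation that, under \eqref{eqn:symmetry} and \eqref{eqn:additional_cons}, every subfile used as an appending source is requested by a user whose file is strictly less popular than the most popular file requested in that source's own subset, so the longest (``pivot'') component of each coded message is never shrunk and every message retains its zero-padded length. The only cosmetic difference is that the paper works with the full set $\mathcal S_m$ of minimizers rather than a single representative $k_0(\mathcal S)$, which cleanly disposes of the tie case you flag at the end.
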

\begin{proof}
Please refer to Appendix G.
\end{proof}

Recall that Theorem~\ref{Thm:symmetry} and Theorem~\ref{Thm:popularity} imply that the optimized  file partition parameter satisfies \eqref{eqn:symmetry} and \eqref{eqn:additional_cons}. Thus, Lemma~\ref{Lem:HCD} also indicates that the HCD procedure achieves the same average  load as the  delivery procedure in  Algorithm~\ref{alg:col} at the optimized  file partition parameter.
However, the HCD procedure has higher complexity than the  delivery procedure in  Algorithm~\ref{alg:col} due to the involved appending method.  By Lemma~\ref{Lem:ji}, we can also know that at the optimized file partition parameter, the HCD procedure  achieves the same average  load as the  $GCC_1$ procedure.

\subsection{Optimization  under Uniform File Popularity}\label{Sub:uniform}
In this part, we consider the uniform file popularity, i.e.,  $p_1 = p_2 = \ldots = p_N$,  and  would like to characterize another structural property and  further simplify Problem~\ref{Prob:simplify_2} in this case.
First, we show  an important structural property of Problem~\ref{Prob:simplify_2}.
\begin{Thm}[Symmetry w.r.t. File]\label{Thm:popularity_2}
For all  $n \in \{1,2,\ldots, N-1\}$ and $s \in \{1,2, \cdots, K\}$,  when $p_n=p_{n+1}$,
\begin{align}
y^*_{n,s} = y^*_{n+1,s}. \label{eqn:opt_Y_o_eq}
\end{align}
\end{Thm}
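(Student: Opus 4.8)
The plan is to derive Theorem~\ref{Thm:popularity_2} from Theorem~\ref{Thm:popularity} together with an elementary permutation symmetry of the optimization problem. Fix $n\in\{1,\dots,N-1\}$ with $p_n=p_{n+1}$. Theorem~\ref{Thm:popularity} already gives $y^*_{n,s}\ge y^*_{n+1,s}$ for every $s\in\{1,\dots,K\}$, so the whole content of Theorem~\ref{Thm:popularity_2} is the reverse inequality $y^*_{n,s}\le y^*_{n+1,s}$. The idea is that, since files $n$ and $n+1$ have equal popularity, interchanging them maps an optimal solution to an optimal solution; applying Theorem~\ref{Thm:popularity} to the interchanged solution then produces $y^*_{n+1,s}\ge y^*_{n,s}$, and the two bounds together force equality.

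First I would make the interchange precise. Let $\pi$ be the transposition of the file indices $n$ and $n+1$, acting on a partition parameter by $(\pi\mathbf y)_{m,s}=y_{\pi(m),s}$. I would then check that $\pi$ leaves Problem~\ref{Prob:equivalent_1} invariant: the box constraint \eqref{eqn:X_range_2}, the per-file partition constraint \eqref{eqn:X_sum_2}, and the memory constraint \eqref{eqn:memory_constraint_2} are all symmetric under any relabelling of files, and the objective is as well when $p_n=p_{n+1}$. The most transparent way to see the latter is via the expectation form \eqref{eqn:average_load_2}: relabelling files by $\pi$ in $\mathbf y$ and correspondingly in the value set of the i.i.d.\ demand vector $\mathbf D$ leaves the summand unchanged, while the law of $\mathbf D$ is unchanged because $p_n=p_{n+1}$; equivalently, using $\bigcup_{m=1}^{N}\mathcal D_{m,s}=\mathcal N^{s}$ one rewrites \eqref{eqn:average_load_3} as $\sum_{s=1}^{K}\binom{K}{s}\sum_{\mathbf d\in\mathcal N^{s}}\bigl(\prod_{k}p_{d_k}\bigr)\max_{k}y_{d_k,s-1}$ and reindexes the inner sum by $\pi$. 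It is important that this be carried out on Problem~\ref{Prob:equivalent_1} rather than on Problem~\ref{Prob:simplify_2}, because the ordering constraint \eqref{eqn:additional_cons} is \emph{not} $\pi$-invariant (on rows $n$ and $n+1$ it is reversed), so $\pi\mathbf y$ need not be feasible for Problem~\ref{Prob:simplify_2}.

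To finish, let $\mathbf y^*$ be an optimal solution of Problem~\ref{Prob:simplify_2}; since Problem~\ref{Prob:equivalent_1} has the same optimal value and a larger feasible set, $\mathbf y^*$ is also optimal for Problem~\ref{Prob:equivalent_1}. By the invariance above, $\pi\mathbf y^*$ is feasible for Problem~\ref{Prob:equivalent_1} with the same objective value, hence optimal for Problem~\ref{Prob:equivalent_1}. Applying Theorem~\ref{Thm:popularity} to the optimal solution $\pi\mathbf y^*$ (legitimate because $p_n\ge p_{n+1}$) yields $(\pi\mathbf y^*)_{n,s}\ge(\pi\mathbf y^*)_{n+1,s}$, i.e.\ $y^*_{n+1,s}\ge y^*_{n,s}$; combined with $y^*_{n,s}\ge y^*_{n+1,s}$ from Theorem~\ref{Thm:popularity} applied to $\mathbf y^*$ itself, this gives $y^*_{n,s}=y^*_{n+1,s}$. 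The only real obstacle is the mismatch with \eqref{eqn:additional_cons} just noted; once the swap is routed through Problem~\ref{Prob:equivalent_1}, everything else is mechanical. If one prefers to stay within Problem~\ref{Prob:simplify_2} and is content with a ``without loss of optimality'' statement, the convex combination $\tfrac12(\mathbf y^*+\pi\mathbf y^*)$ is an optimal solution that still satisfies \eqref{eqn:additional_cons} and has rows $n$ and $n+1$ identical; alternatively, Corollary~\ref{Cor:group} reduces the claim to showing that files $n$ and $n+1$ are allocated equal cache memory, which follows from the same exchange comparison.
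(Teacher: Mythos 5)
Your argument is correct, but it is not the route the paper takes. The paper proves Theorems~\ref{Thm:popularity} and~\ref{Thm:popularity_2} together in Appendix~D: its Statement~(i) establishes $y^*_{n_1,s}=y^*_{n_2,s}$ for $p_{n_1}=p_{n_2}$ directly, by lower-bounding $\widetilde{R}_{\rm avg}$ via the superadditivity of the maximum ($\max_i\{a_i\}+\max_i\{b_i\}\ge\max_i\{a_i+b_i\}$) applied to rows $n_1$ and $n_2$, with equality exactly when both rows are replaced by their common average --- which is precisely the symmetrization $\tfrac12(\mathbf y^*+\pi\mathbf y^*)$ that you mention only as a fallback. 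Your main route --- observe that the transposition $\pi$ of files $n$ and $n+1$ is an automorphism of Problem~\ref{Prob:equivalent_1} when $p_n=p_{n+1}$, so $\pi\mathbf y^*$ is again optimal, and then apply Theorem~\ref{Thm:popularity} to both $\mathbf y^*$ and $\pi\mathbf y^*$ to get the two opposite inequalities --- is a clean formal deduction, and your observation that the swap must be routed through Problem~\ref{Prob:equivalent_1} rather than Problem~\ref{Prob:simplify_2}, where \eqref{eqn:additional_cons} breaks the symmetry, is exactly the right care to take. Two caveats are worth recording. First, your derivation needs Theorem~\ref{Thm:popularity} to hold for \emph{every} optimal solution (or else uniqueness of the optimum), since $\pi\mathbf y^*$ may differ from $\mathbf y^*$; the paper is loose on this point, but its exchange argument in Appendix~D, Statement~(ii) does apply to any optimizer, so this reading is defensible. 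Second, your proof is not logically independent of the paper's: the only proof given of Theorem~\ref{Thm:popularity} in the case $p_n=p_{n+1}$ is Statement~(i), which already \emph{is} Theorem~\ref{Thm:popularity_2} (the strict-inequality exchange argument of Statement~(ii) degenerates when $p_{(n_0)}=p_{(n_0+1)}$, since then $\alpha=\beta$ and $f\equiv 0$). So as a corollary-style deduction from the stated Theorem~\ref{Thm:popularity} your proof stands, but anyone proving that theorem from scratch in the equal-popularity case would in effect have to run your symmetrization fallback (or the paper's averaging bound) anyway.
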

\begin{proof} Please refer to Appendix D.
\end{proof}

Theorem~\ref{Thm:popularity_2} indicates that,  at the optimal solution
to Problem~\ref{Prob:simplify_2},   for all   $n \in \{1,2,\ldots, N-1\}$ and   $s \in \{1,2, \cdots, K\}$,  when $p_n=p_{n+1}$,
the size of subfiles $W_{n,\mathcal{S}}$, $\mathcal{S}\subseteq \{\mathcal{S} \subseteq \mathcal{K}: |\mathcal{S}|=s\}$ is the same as that of subfiles $W_{n+1,\mathcal{S}}$, $\mathcal{S}\subseteq \{\mathcal{S} \subseteq \mathcal{K}: |\mathcal{S}|=s\}$.
By Theorem~\ref{Thm:popularity_2}, without losing optimality, we can set
\begin{align}
y_{n,s}=z_s, \quad  \forall  s \in \{0,1,\cdots, K\}, \ n \in \mathcal{N}, \label{eqn:symmetry_2}
\end{align}
when solving Problem~\ref{Prob:simplify_2}.
Here,   $z_s$  can be viewed as the ratio between the  number of  data units in each subfile of  type $s$ of any file  and the  number of  data units in any file (i.e., $F$).
Let $\mathbf{z} \triangleq (z_{s})_{ s \in \{0,1,\cdots, K\}}$.
By \eqref{eqn:symmetry_2}, the  constraints in  \eqref{eqn:X_range_2}, \eqref{eqn:X_sum_2} and  \eqref{eqn:memory_constraint_2} of Problem~\ref{Prob:simplify_2} can be converted into the following constraints:
\begin{align}
&0 \leq z_{s} \leq 1 , \quad s \in \{0,1,\cdots, K\}, \label{eqn:X_range_3}\\
&\sum_{s=0}^{K} {K \choose s} z_{s}=1 , \label{eqn:X_sum_3}\\
&\sum_{s=0}^{K} {K \choose s}s z_{s} \leq \frac{KM}{N}. \label{eqn:memory_constraint_3}
\end{align}
On the other hand, by \eqref{eqn:symmetry_2}, the objective function of Problem~\ref{Prob:simplify_2} in  \eqref{eqn:average_load_4} can be rewritten as
\begin{align}
&\widetilde{R}_{\rm avg}(K,N,M,\mathbf{y})= \sum_{s=0}^{K}{K \choose s}\frac{K-s}{s+1} z_{s} \triangleq   \widehat{R}_{\rm avg}(M,K,N,\mathbf{z}). \label{eqn:uniform_obj}
\end{align}

Based on the above analysis, under the uniform file popularity, Problem~\ref{Prob:simplify_2} is equivalent to the following  problem.
\begin{Prob}[Optimization under Uniform File Popularity]\label{Prob:equivalent_3}
\begin{align}
\widehat{R}^*_{\rm avg}(K,N,M)\triangleq \min_{\mathbf{z}} \quad &  \widehat{R}_{\rm avg}(K,N,M,\mathbf{z})\nonumber \\
s.t. \quad  &\eqref{eqn:X_range_3}, \eqref{eqn:X_sum_3}, \eqref{eqn:memory_constraint_3}, \nonumber
\end{align}
where $\widehat{R}_{\rm avg}(K,N,M,\mathbf{z})$  is given by \eqref{eqn:uniform_obj} and
the optimal solution is denoted as $\mathbf{z}^* \triangleq (z^*_{s})_{ s \in \{0,1,\cdots, K\}}$.
\end{Prob}

The objective function  of Problem~\ref{Prob:equivalent_3} is  linear.
In addition, the constraints of Problem~\ref{Prob:equivalent_3} are linear.
Hence, Problem~\ref{Prob:equivalent_3} is a linear optimization problem and  can be solved using linear optimization techniques.
Note that the number of variables in Problem~\ref{Prob:equivalent_3} is  $K+1$.  The complexity  for solving  Problem~\ref{Prob:equivalent_3}  using the algorithm in~\cite{LeeS13a} is $\mathcal O \left(\sqrt{K+1}\right)$.

Next, we
 discuss the relation between the centralized coded caching schemes   in Algorithm~\ref{alg:col} with  Maddah-Ali--Niesen's centralized coded caching scheme~\cite{AliFundamental}.
Maddah-Ali--Niesen's centralized coded caching  scheme focuses on  cache size $M \in \{0, \frac{N}{K}, \frac{2N}{K}, \ldots, N\}$, so that $\frac{KM}{N}$ is an integer in  $\{0, 1, \ldots, K\}$. For general $M \in [0,N]$, the worst-case load can be achieved by memory sharing.
For purpose of comparison, we consider the cache size $M \in \{0, \frac{N}{K}, \frac{2N}{K}, \ldots, N\}$.
Using KKT conditions, we have the following result.
\begin{Lem}[Optimal Solution to Problem~\ref{Prob:equivalent_3}]\label{Lem:Ali}
For cache size $M \in \left\{0, \frac{N}{K}, \frac{2N}{K}, \ldots, N\right\}$, the unique optimal solution $\mathbf{z}^*$ to  Problem~\ref{Prob:equivalent_3} is given by
\begin{align}
z^*_{s}=
\begin{cases}
\frac{1}{{K \choose \frac{KM}{N}}}, & s=\frac{KM}{N}\\
0, &s \in \{0,1,\cdots, K\}\setminus \{\frac{KM}{N}\},
\end{cases}\label{eqn:uniform_caching}
\end{align}
and the  optimal value of Problem~\ref{Prob:equivalent_3} is given by
\begin{align}
\widehat{R}^*_{\rm avg}(K,N, M)=\frac{K(1-M/N)}{1+KM/N}. \label{eqn:Ali}
\end{align}
\end{Lem}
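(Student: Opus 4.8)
The plan is to solve the linear program in Problem~\ref{Prob:equivalent_3} directly via its KKT conditions, exploiting the fact that for the chosen cache sizes $M\in\{0,\tfrac{N}{K},\ldots,N\}$ the quantity $t\triangleq \tfrac{KM}{N}$ is an integer in $\{0,1,\ldots,K\}$, so the claimed optimizer $z^*_s=\tfrac{1}{\binom{K}{t}}\mathbf 1\{s=t\}$ is feasible. First I would verify feasibility of $\mathbf z^*$: constraint \eqref{eqn:X_range_3} is immediate since $0\le \tfrac{1}{\binom{K}{t}}\le 1$; constraint \eqref{eqn:X_sum_3} holds because $\sum_{s}\binom{K}{s}z^*_s=\binom{K}{t}\cdot\tfrac{1}{\binom{K}{t}}=1$; and the memory constraint \eqref{eqn:memory_constraint_3} holds with equality since $\sum_s\binom{K}{s}s\,z^*_s=\binom{K}{t}t\cdot\tfrac{1}{\binom{K}{t}}=t=\tfrac{KM}{N}$. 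Plugging into the objective \eqref{eqn:uniform_obj} gives $\widehat R_{\rm avg}=\binom{K}{t}\tfrac{K-t}{t+1}z^*_t=\tfrac{K-t}{t+1}$, which upon substituting $t=KM/N$ is exactly $\tfrac{K(1-M/N)}{1+KM/N}$, matching \eqref{eqn:Ali}.

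Next I would establish optimality and uniqueness. Since the problem is a linear program, it suffices to exhibit dual multipliers certifying that $\mathbf z^*$ is optimal, and then argue the certificate forces uniqueness. Introduce a multiplier $\lambda\ge 0$ for \eqref{eqn:memory_constraint_3}, a free multiplier $\nu$ for the equality \eqref{eqn:X_sum_3}, and multipliers $\mu_s\ge 0$ for the lower bounds $z_s\ge 0$ (the upper bounds $z_s\le 1$ will be inactive and can be ignored, which I would justify a posteriori). Stationarity for coordinate $s$ reads
\begin{align}
\binom{K}{s}\frac{K-s}{s+1}+\lambda\binom{K}{s}s-\nu\binom{K}{s}-\mu_s=0,\nonumber
\end{align}
i.e. $\mu_s=\binom{K}{s}\bigl(\tfrac{K-s}{s+1}+\lambda s-\nu\bigr)$. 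Complementary slackness requires $\mu_t=0$, which pins down $\nu=\tfrac{K-t}{t+1}+\lambda t$. I would then choose $\lambda\ge 0$ so that the function $g(s)\triangleq \tfrac{K-s}{s+1}+\lambda s-\nu$ satisfies $g(s)\ge 0$ for all $s\in\{0,1,\ldots,K\}$ with equality only at $s=t$; the natural choice is to make $\lambda$ the "slope" that straightens out the convex sequence $s\mapsto \tfrac{K-s}{s+1}$ so that $t$ is its unique minimizer over the integers. Concretely, since $h(s)\triangleq\tfrac{K-s}{s+1}=\tfrac{K+1}{s+1}-1$ is strictly convex and strictly decreasing in $s$, the discrete function $h(s)+\lambda s$ is strictly convex, and a small interval of values of $\lambda$ makes its unique integer minimizer equal to $t$ (namely any $\lambda$ strictly between $h(t-1)-h(t)$ and $h(t)-h(t+1)$, with obvious one-sided modifications at $t=0$ and $t=K$). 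For such $\lambda$ we get $\mu_s>0$ for $s\ne t$ and $\mu_t=0$, so complementary slackness forces $z_s=0$ for all $s\ne t$ at any optimal point, and then \eqref{eqn:X_sum_3} forces $z_t=\tfrac{1}{\binom{K}{t}}$; this gives both optimality (a valid KKT point of a convex program) and uniqueness.

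The main obstacle is the discrete-convexity bookkeeping in the middle step: showing that a single $\lambda\ge 0$ can be chosen so that $g(s)>0$ for every integer $s\ne t$ while $g(t)=0$, including handling the boundary cases $t=0$ (where no lower side exists, and one only needs $\lambda\ge 0$ large enough that $h$ is beaten, in fact $\lambda=0$ works since $h$ is decreasing) and $t=K$ (where $z^*_K=1$ and $h(K)=0$, so one needs $\lambda$ large, and the feasible region actually collapses to the single point $\mathbf z^*$). I would dispatch these by direct computation of the finite differences of $h$, using $h(s-1)-h(s)=\tfrac{K+1}{s(s+1)}>0$, which is itself strictly decreasing in $s$, so the required open interval for $\lambda$ is nonempty for every $t\in\{1,\ldots,K-1\}$; the endpoints $t\in\{0,K\}$ are then checked separately as above. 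Everything else is routine substitution.
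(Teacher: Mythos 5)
Your overall strategy is the same as the paper's: verify feasibility and the objective value by substitution, then certify optimality and uniqueness of $\mathbf z^*$ via KKT multipliers for the linear program, with a multiplier $\lambda$ for the memory constraint, $\nu$ for the partition equality, and $\mu_s$ for the nonnegativity constraints, and with uniqueness following from $\mu_s>0$ for $s\neq t$ plus complementary slackness. Your interior-case analysis is correct: for $t\in\{1,\dots,K-1\}$ any $\lambda$ in the nonempty open interval $\bigl(\tfrac{K+1}{(t+1)(t+2)},\tfrac{K+1}{t(t+1)}\bigr)$ works, and the paper's explicit choice $\theta^*=\tfrac{K+1}{(t+1)^2}$ is one such point, yielding $\eta_s^*=\binom{K}{s}\tfrac{K+1}{(s+1)(t+1)^2}(s-t)^2$.

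However, both boundary cases are handled incorrectly as written, in a way that would fail if executed. For $t=0$ you assert that ``$\lambda=0$ works since $h$ is decreasing''; it does not. With $\lambda=0$ and $\nu=h(0)=K$ you get $\mu_s=\binom{K}{s}\bigl(h(s)-K\bigr)<0$ for every $s\ge 1$ precisely \emph{because} $h$ is decreasing, so dual feasibility is violated. What you need at $t=0$ is $\lambda>\max_{s\ge1}\tfrac{h(0)-h(s)}{s}=\tfrac{K+1}{2}$; for instance $\lambda=K+1$ gives $\mu_s=\binom{K}{s}(K+1)\tfrac{s^2}{s+1}>0$ for $s\ge1$, matching the paper's certificate. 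Symmetrically, for $t=K$ you claim one ``needs $\lambda$ large,'' but in fact one needs $\lambda$ small: the condition is $h(s)-\lambda(K-s)>0$ for all $s<K$, i.e.\ $\lambda<\min_{s<K}\tfrac{1}{s+1}=\tfrac1K$, so $\lambda\in[0,\tfrac1K)$ works. Your fallback claim that at $M=N$ the feasible region collapses to the single point $\mathbf z^*$ is also false --- e.g.\ $z_0=1$, $z_s=0$ for $s\ge1$ is feasible there, since the memory constraint $\sum_s\binom{K}{s}s z_s\le K$ is implied by \eqref{eqn:X_sum_3} --- so it cannot rescue the argument; uniqueness at $M=N$ instead follows from the same complementary-slackness reasoning with small $\lambda$. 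A minor further point: at $t\in\{0,K\}$ the upper bound $z_t\le1$ is active, not inactive, though this is harmless since you may assign it a zero multiplier. Once the two boundary cases are corrected, the proof is complete and essentially equivalent to the paper's Appendix H.
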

\begin{proof} Please refer to Appendix H.
\end{proof}

Lemma~\ref{Lem:Ali} indicates that, under the uniform file popularity,  for any  cache size $M \in \left\{0, \frac{N}{K}, \frac{2N}{K}, \ldots, N\right\}$,  the  optimized  file partition parameter $\mathbf{z}^*$ in  \eqref{eqn:uniform_caching} and the  optimized  average load  $\widehat{R}^*_{\rm avg}(K,N,M)$ in~\eqref{eqn:Ali}   for  the class of the centralized coded caching schemes  are the  same as the file partition parameter  and the worst-case load of   Maddah-Ali--Niesen's centralized  coded caching scheme~\cite{AliFundamental}.
Note that this optimality only holds for the worst-case when $N \geq  K$~\cite{KaiWan}.
For the  worst-case or the case under the uniform file popularity, the optimal load  is given in~\cite{YuQian} where a scheme which improves over the original Maddah-Ali--Niesen's centralized  coded caching scheme for uncoded placement is used.

\section{Converse Bound}\label{Sec:Converse}
In this section, we present an information-theoretic  converse bound on the average load under an arbitrary file popularity.
Denote  $R^{\text{lb}}_{\rm unif}(K,N,M)$  as the converse bound on the average load under the uniform file popularity obtained in~\cite{bound17}, where
\begin{align}
&R^{\text{lb}}_{\rm unif}(K,N,M) \triangleq \nonumber \\
& \max \left\{\max_{l \in \{1, \ldots, K\}}\left(1-\left(1-\frac{1}{N}\right)^l\right)\left(N-lM\right),
\max_{l \in \{1, \ldots, K\}}\left(\left(1-\left(1-\frac{1}{N}\right)^l\right)N-\frac{l\left(l+1\right)}{2N}M\right)\right\}.\label{eqn:R_uniform_lb}
\end{align}
Using  the genie-aided approach proposed in~\cite{ji2015order} and the converse bound $R^{\text{lb}}_{\rm unif}(K,N,M)$ derived  in~\cite{bound17}, we have the following result.
\begin{Lem}[Genie-aided Converse Bound]\label{Lem:converse}
For all $N\in\mathbb N$, $K\in\mathbb N$ and  $M \in [0,N]$, the minimum average load in~\eqref{eqn:min_avg_load} satisfies
\begin{align}
R_{\rm avg}^*(K,N,M) &\geq R_{\rm avg}^{\text{lb}}(K,N,M)\nonumber  \\
&\triangleq \max_{N' \in \{1,2,\ldots, N\}} \sum_{K'=1}^{K} {K \choose K'}\left(N'p_{N'}\right)^{K'}\left(1-N'p_{N'}\right)^{K-K'}R^{\text{lb}}_{\rm unif}(K',N',M), \label{eqn:R_avg_lb}
\end{align}
where  $R^{\text{lb}}_{\rm unif}(\cdot)$  is given by~\eqref{eqn:R_uniform_lb}.
\end{Lem}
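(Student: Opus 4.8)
The plan is to reduce the converse for arbitrary popularity to the known converse under uniform popularity via the genie-aided argument of~\cite{ji2015order}. Fix an arbitrary $N' \in \{1,2,\ldots,N\}$. The key observation is that among the $N$ files, the first $N'$ files (those with the largest popularities $p_1 \ge \cdots \ge p_{N'}$) each have popularity at least $p_{N'}$. I would introduce a genie that, for each user, reveals its request whenever that request lies \emph{outside} the set $\{1,\ldots,N'\}$ (equivalently, supplies the corresponding file directly), and moreover, for requests inside $\{1,\ldots,N'\}$, artificially ``downgrades'' the popularity profile so that each of these $N'$ files is requested with the same probability. This produces a modified system with a library effectively restricted to $N'$ files under a \emph{uniform} popularity over those $N'$ files, and the genie can only reduce the required load. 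The remaining randomness is over which users have an ``active'' request among the top $N'$ files.

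The next step is to condition on the number $K'$ of users whose requests fall in $\{1,\ldots,N'\}$ under the downgraded profile. Because a single user has probability $N' p_{N'}$ of being in this event (each of the $N'$ top files being assigned mass $p_{N'}$ after the genie's downgrade) and users request independently, $K'$ is Binomial$(K, N' p_{N'})$; this is exactly the source of the factor ${K \choose K'} (N'p_{N'})^{K'}(1-N'p_{N'})^{K-K'}$ in~\eqref{eqn:R_avg_lb}. Conditioned on a particular set of $K'$ ``active'' users, the residual problem — serving these $K'$ users, each with a cache of size $M$, from an effective library of $N'$ equally-likely files — is precisely the setting for which~\cite{bound17} gives the converse $R^{\text{lb}}_{\rm unif}(K',N',M)$ in~\eqref{eqn:R_uniform_lb}. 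One has to check that the cache size is unchanged by the reduction (it is: the genie does not touch the caches, which still hold $MF$ data units) and that the inactive users and the files outside $\{1,\ldots,N'\}$ genuinely cannot help decrease the load; this is where the genie's role is essential, since it hands out exactly the information that would otherwise couple the two sub-populations. Taking the expectation over $K'$ and then the maximum over $N'$ yields~\eqref{eqn:R_avg_lb}.

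Finally I would verify the degenerate cases and the consistency check stated after the lemma: when the popularity is uniform, $p_{N'} = 1/N$, so $N' p_{N'} = N'/N$; choosing $N' = N$ makes the active-user count deterministically $K$ and the bound collapses to $R^{\text{lb}}_{\rm unif}(K,N,M)$, recovering~\cite{bound17} as claimed in the text. I also need $N' p_{N'} \le 1$ for the binomial interpretation to make sense; this holds because $N' p_{N'} \le \sum_{n=1}^{N'} p_n \le 1$ using $p_1 \ge \cdots \ge p_{N'}$.

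The main obstacle is making the genie construction airtight: one must specify precisely what side information is revealed so that (i) the resulting system is information-theoretically no harder than the original (the genie only adds information to the users / removes decoding burden), and (ii) the reduced system, conditioned on the active set, is \emph{exactly} an instance of the uniform-popularity caching problem with parameters $(K', N', M)$ — in particular that the placement functions $\boldsymbol\phi$, when restricted and relabeled, remain valid caching functions of the right size, and that the delivery requirement for the $K'$ active users is unaffected by the presence of the genie-resolved requests. Handling the correlation introduced by the downgrading of popularities (ensuring the conditional problem is genuinely uniform and not merely bounded by a uniform instance) is the delicate point; the cleanest route, and the one I would follow, is to first prove the bound for the uniform profile restricted to any $N'$-subset (which is immediate from~\cite{bound17} applied to a sub-library) and then use a coupling / stochastic-dominance argument to transfer it to the true profile $\mathbf p$, exploiting $p_n \ge p_{N'}$ for $n \le N'$.
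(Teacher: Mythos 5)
Your proposal is correct and follows essentially the same route as the paper, which gives no standalone proof of this lemma but obtains it exactly as you describe: by invoking the genie-aided reduction of~\cite{ji2015order} (coupling each user's request so that with probability $N'p_{N'}$ it is uniform over the $N'$ most popular files, revealing the rest via the genie, and conditioning on the Binomial$(K,N'p_{N'})$ number of active users) and then substituting the uniform-popularity converse $R^{\text{lb}}_{\rm unif}(K',N',M)$ of~\cite{bound17} for the one used in~\cite{ji2015order}. The delicate points you flag (the stochastic-dominance coupling using $p_n\geq p_{N'}$, and checking $N'p_{N'}\leq 1$) are precisely the ones handled in the cited genie-aided argument, so no new ideas are needed.
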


The difference between Lemma~\ref{Lem:converse} and Theorem 2 in~\cite{ji2015order} lies in   the fact that the two results utilize two different converse bounds  on the  average load under the uniform file popularity.
In particular,  Lemma~\ref{Lem:converse} adopts the converse bound on the   average load under the uniform file popularity derived in~\cite{bound17}, while Theorem 2 in~\cite{ji2015order} utilizes a converse bound on the  average load under the uniform file popularity, which is derived  using a self-bounding function~\cite{ji2015order}. The purpose of replacing the converse bound for the uniform file popularity in~\cite{bound17} with the one in~\cite{ji2015order} is to obtain a tighter converse bound for an arbitrary file popularity. Later, in Section~\ref{Sec:Numerical}, we shall see that the presented converse bound is indeed tighter than that in~\cite{ji2015order}, using numerical results.

From Lemma~\ref{Lem:converse}, we know that under the uniform file popularity,  $R_{\rm avg}^{\text{lb}}(K,N,M) = R^{\text{lb}}_{\rm unif}(K,N,M)$. This means that when the  file popularity is uniform, the genie-aided converse bound in~\eqref{eqn:R_avg_lb} reduces to the converse bound on the average load under the uniform file popularity derived in~\cite{bound17}.

\section{Numerical Results}\label{Sec:Numerical}
In this section, using numerical results, we first demonstrate special properties of the proposed optimal solutions. Then, we compare the proposed optimal solution with existing solutions. Finally, we compare the presented genie-aided  converse bound with  existing information-theoretic converse bounds.
In the simulation, as in~\cite{ji2015order,HCD,NonuniformDemands,Sinong}, we assume the file popularity follows Zipf distribution, i.e., $p_n=\frac{n^{-\gamma}}{\sum_{n \in \mathcal N} n^{-\gamma}}$ for all $n\in \mathcal N$, where $\gamma$ is the Zipf exponent.
\begin{figure}
\begin{center}
    \subfigure[\small{$s=2, \gamma=1$.}]
  {\resizebox{4cm}{!}{\includegraphics{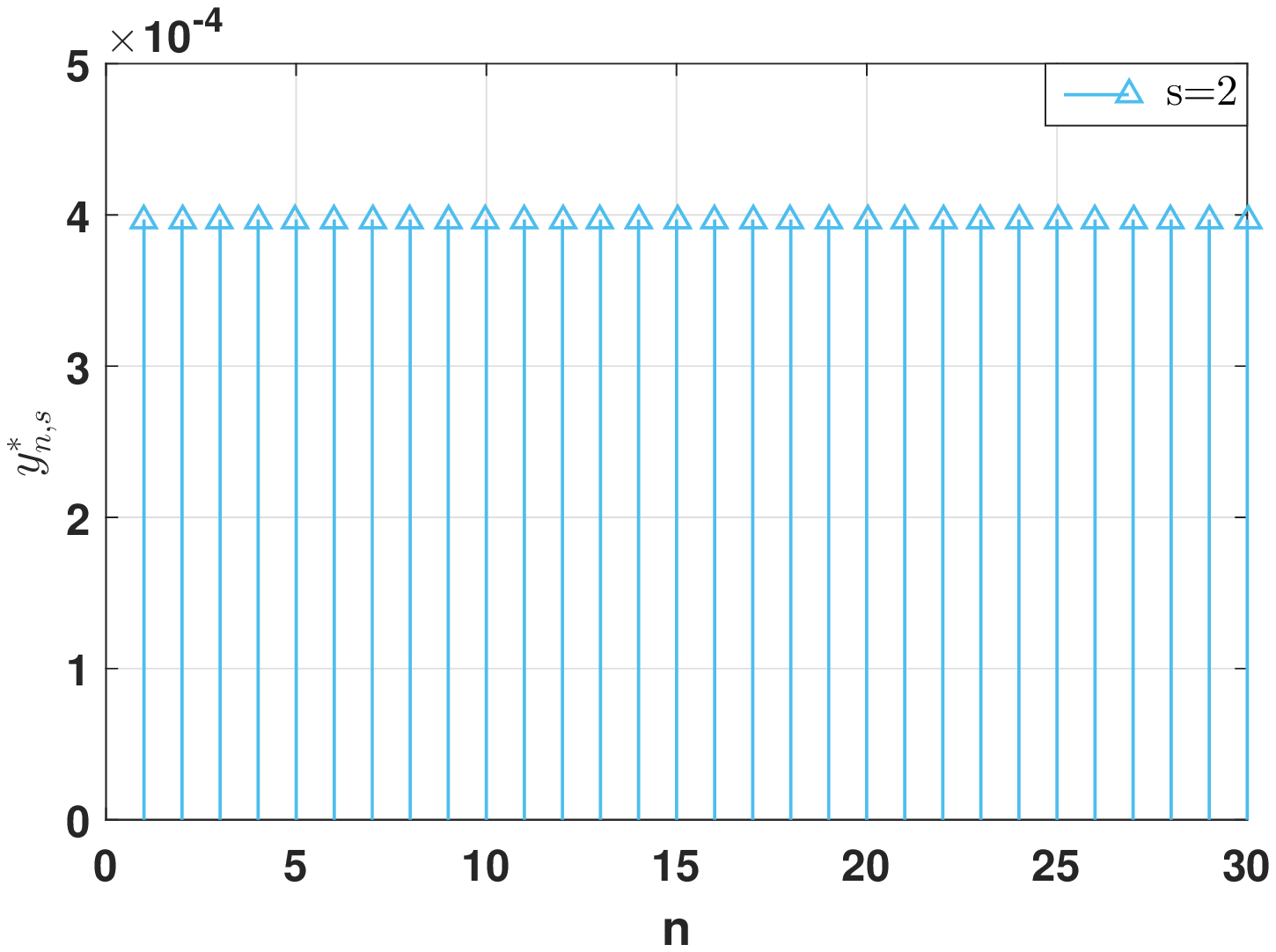}}}
      \subfigure[\small{$s=3, \gamma=1$.}]
  {\resizebox{4cm}{!}{\includegraphics{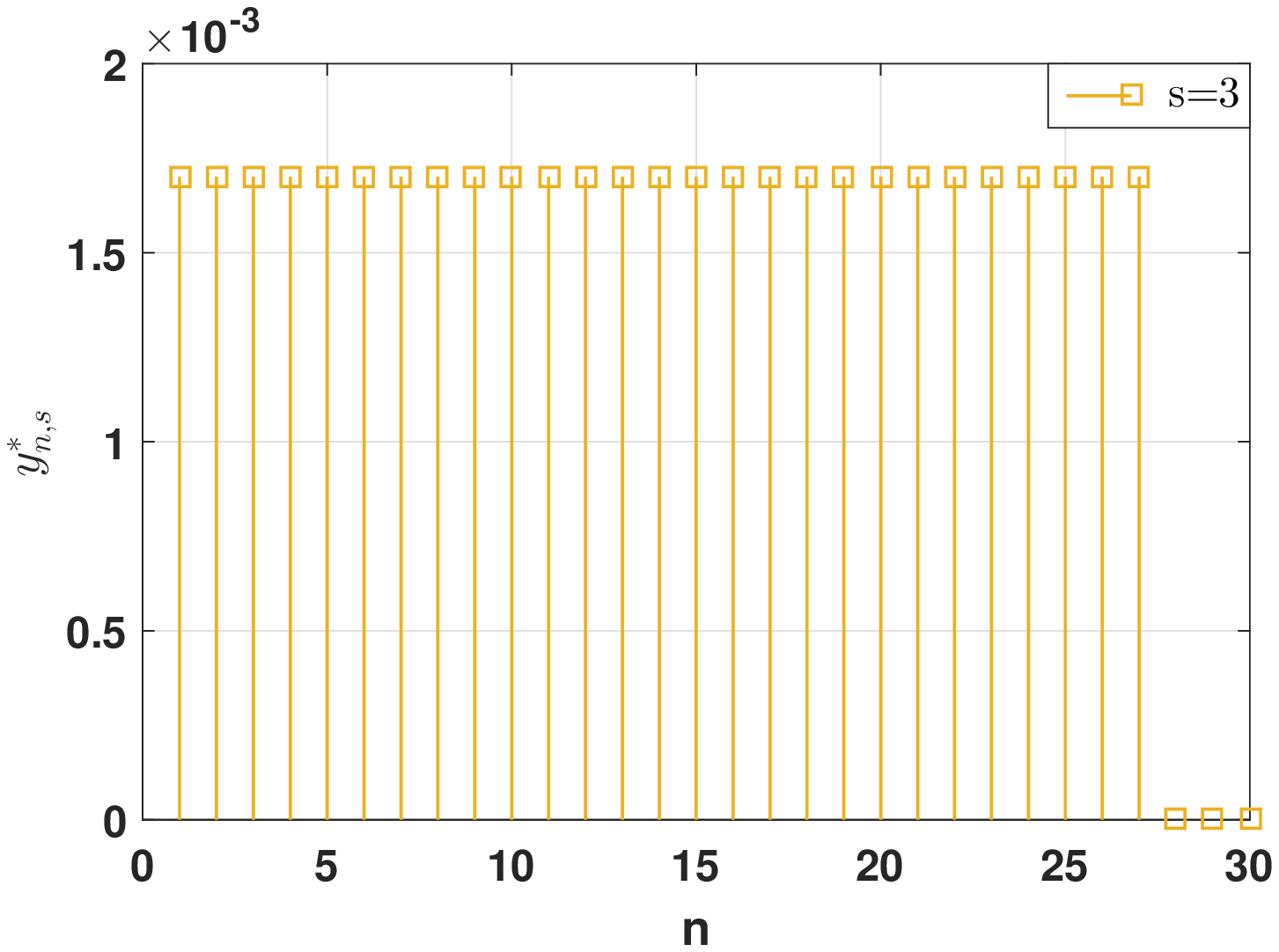}}}
        \subfigure[\small{$s=0, \gamma=1$.}]
      {\resizebox{4cm}{!}{\includegraphics{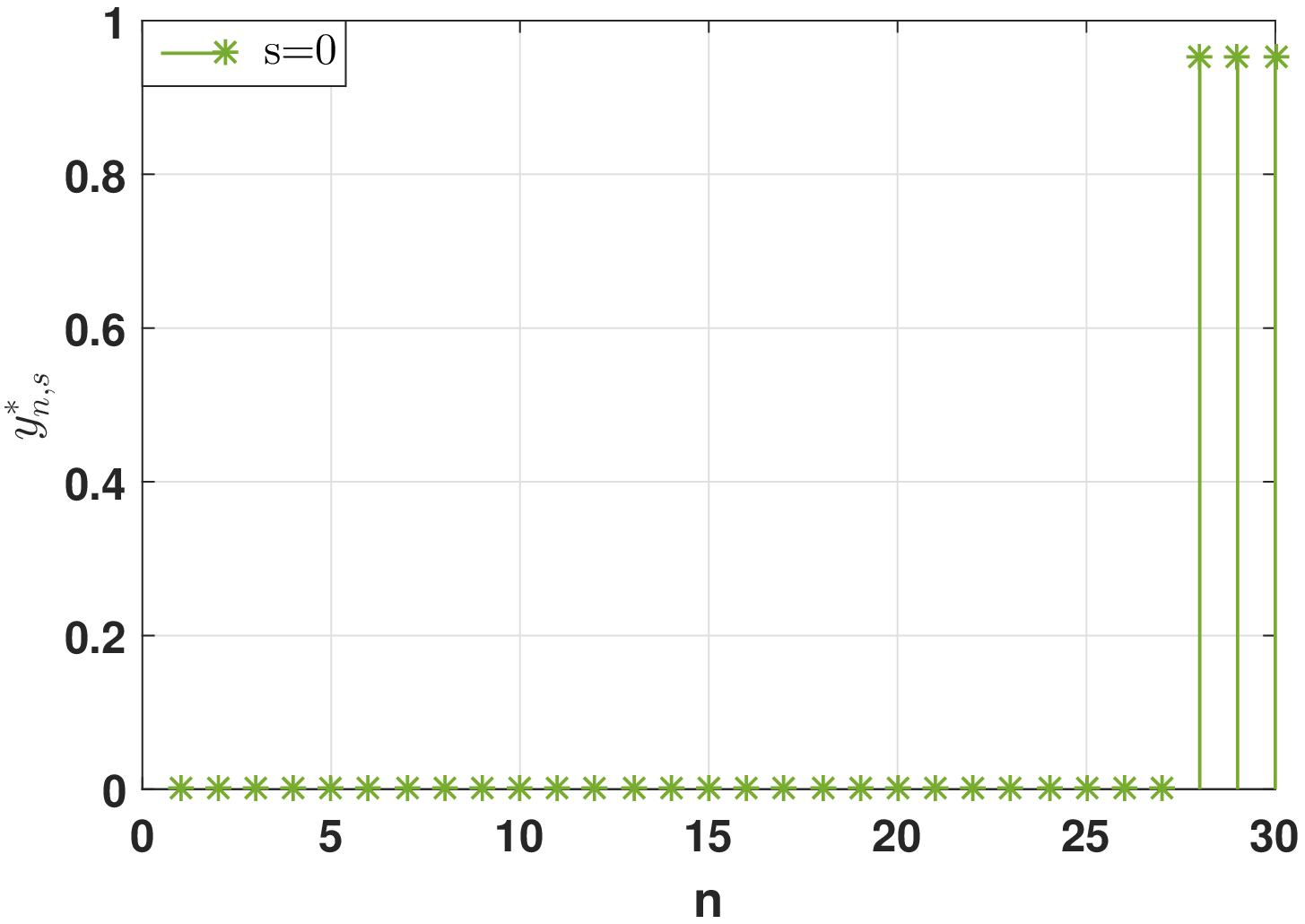}}}
          \subfigure[\small{$\gamma=1$.}]
    {\resizebox{4cm}{!}{\includegraphics{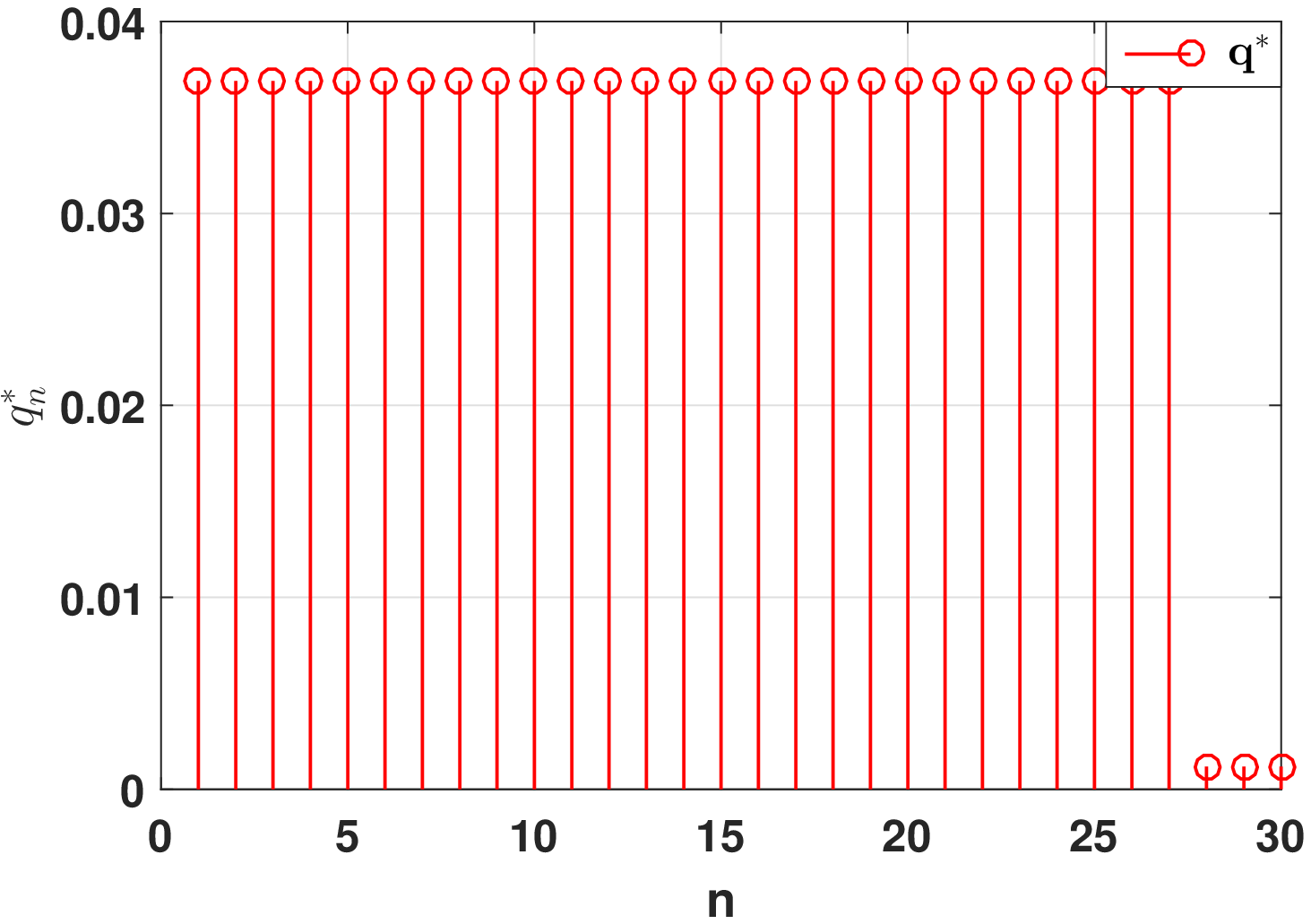}}}
      \subfigure[\small{$s=7, \gamma=1.5$.}]
        {\resizebox{4cm}{!}{\includegraphics{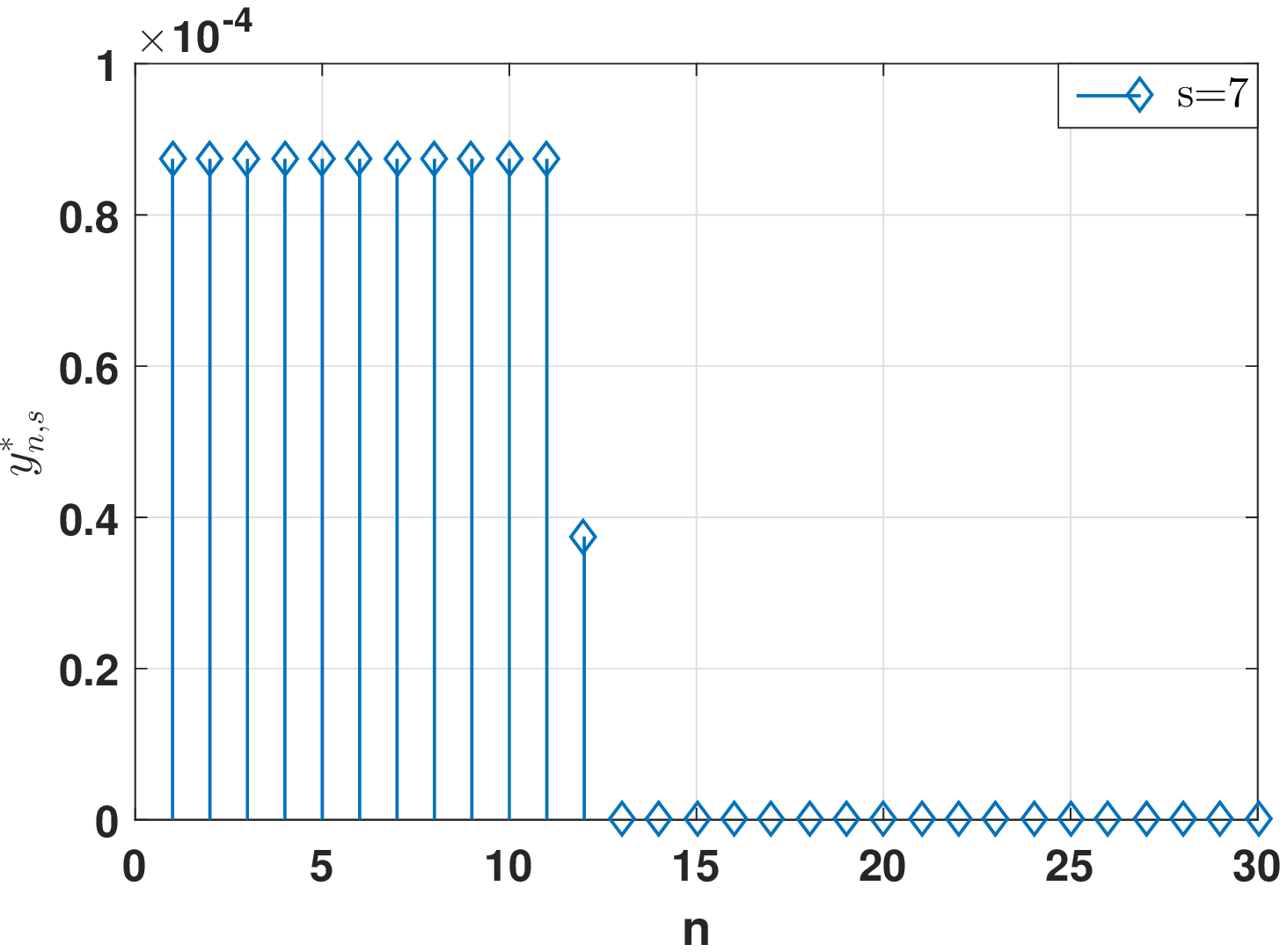}}}
        \subfigure[\small{$s=0, \gamma=1.5$.}]
      {\resizebox{4cm}{!}{\includegraphics{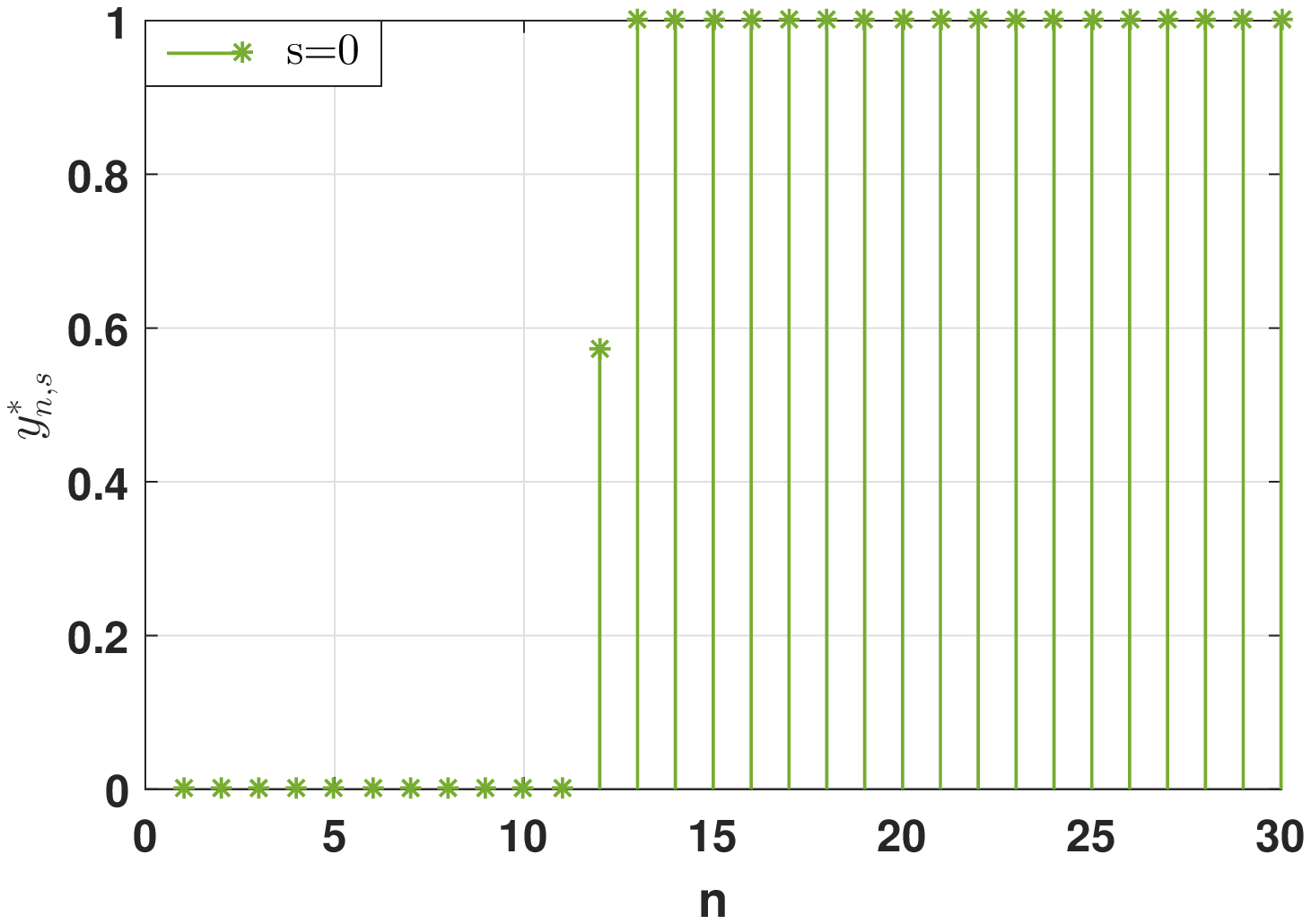}}}
              \subfigure[\small{$\gamma=1.5$.}]
      {\resizebox{4cm}{!}{\includegraphics{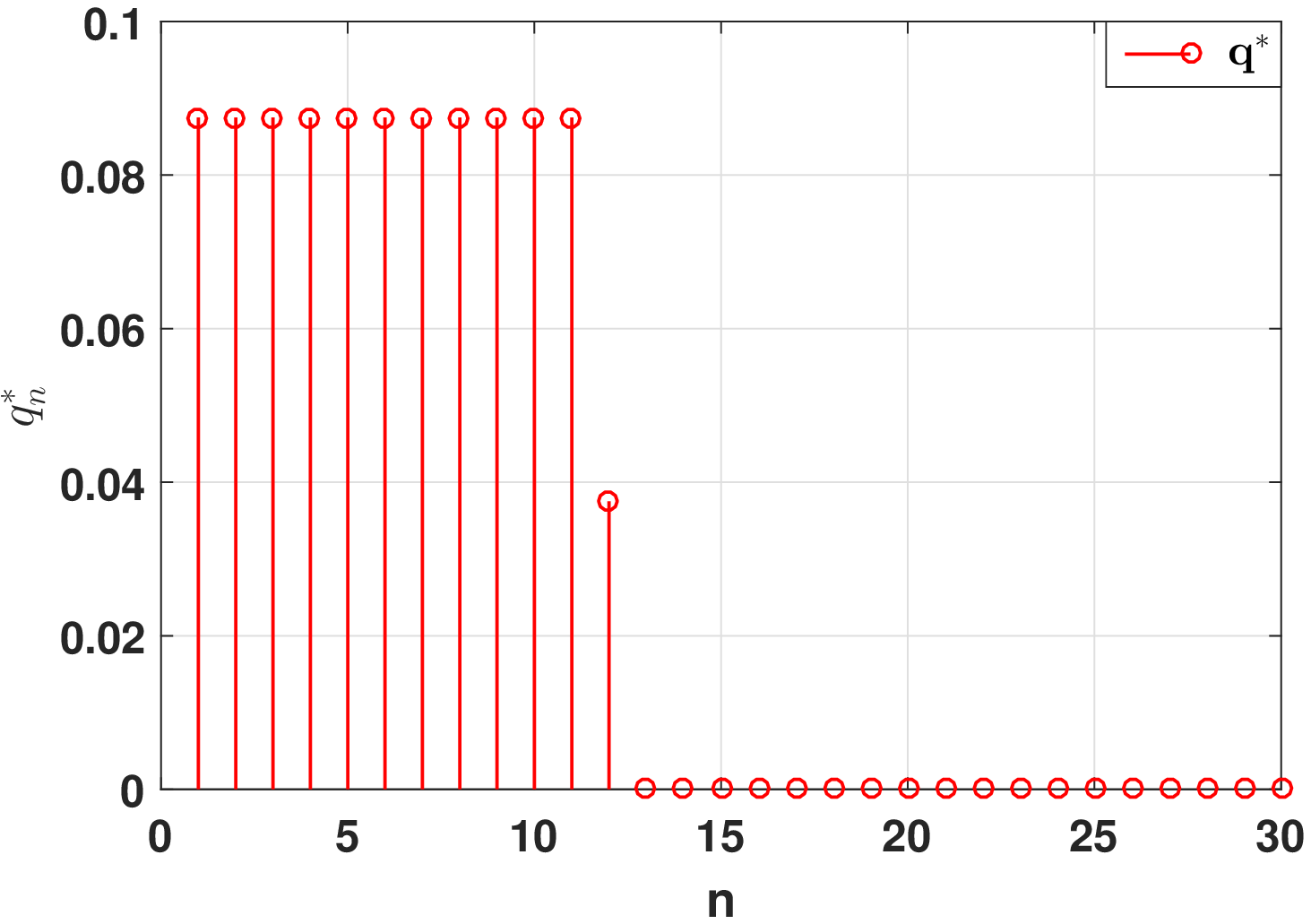}}}
\end{center}
         \caption{$\mathbf y^*$ and $\mathbf q^*$   of  the class of the centralized coded caching schemes  at $K=16$, $N=30$ and $M=5$. Note that in this case, $s$ can take values in the set $\{0,1,\ldots, 16\}$ and only the subset $\left\{s\in \{0,1,\ldots,16\}: \exists \ n \in \mathcal N \ s.t. \ y^*_{n,s}>0\right\}$ is plotted.
         }\label{fig:placement}
\end{figure}


\subsection{Special Properties of Optimized Parameter-based Scheme}
In this part, we demonstrate  special properties of   the centralized coded caching  scheme corresponding to   the  optimized  file partition parameter (referred to as  the   optimized  parameter-based scheme) using numerical results.
Let $\mathbf q^* \triangleq (q^*_n )_{n=1}^{N}$, where $q^*_n \triangleq \sum_{s=1}^{K} {K-1 \choose s-1} \frac{y^*_{n, s}}{M}$ denotes the fraction of the memory $M$ allocated to file $W_n$ at the optimized  file partition parameter.
Fig.~\ref{fig:placement} illustrates $\mathbf y^*$ and  $\mathbf q^*$ of  the   optimized  parameter-based scheme.
From  Fig.~\ref{fig:placement},  we can see that for all  $n \in \left\{1,2,\ldots, N-1\right\}$ and $s \in \{1,2, \cdots, K\}$,  $y^*_{n,s} \geq  y^*_{n+1,s}$, which verifies Theorem~\ref{Thm:popularity}.
Furthermore, we can see that files are classified  into different groups and for any file $n_1$ and file $n_2$ within the same group, we have $y^*_{n_1,s}=y^*_{n_2,s}$ for all $s \in \{0,1,\ldots, K\}$ and $q_{n_1}^*=q_{n_2}^*$, which   verifies Corollary~\ref{Cor:group} and validates the ``grouping" idea proposed in~\cite{NonuniformDemands}. From Fig.~\ref{fig:placement}, we can also see that at the optimized file partition parameter,   the average number of subfiles per file for $\gamma=1$ is $\frac{27}{30}\times\left({16 \choose 2}+{16 \choose 3}\right)+\frac{3}{30}\times\left({16 \choose 0}+{16 \choose 2}\right)=624.1$, and the average number of subfiles per file for $\gamma=1.5$ is $\frac{18}{30}\times{16 \choose 0}+\frac{12}{30}\times{16 \choose 7}=4576.6$, which are far smaller than $2^K=65536$.

\begin{figure}
\begin{center}
  \subfigure[\small{$\gamma=1.3$.}]
  {\resizebox{7cm}{!}{\includegraphics{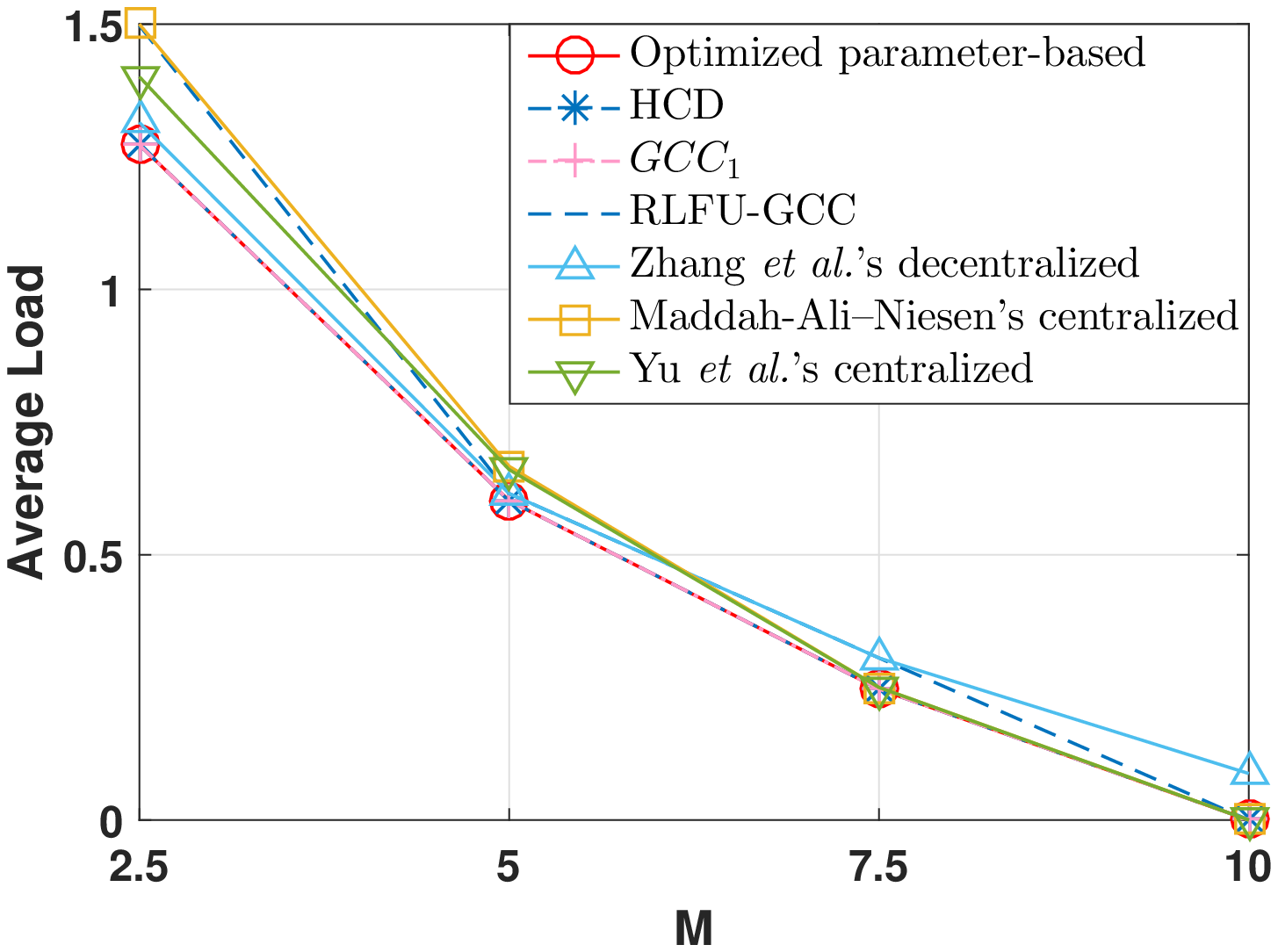}}}
  \quad
  \subfigure[\small{$\gamma=1.5$.}]
  {\resizebox{7cm}{!}{\includegraphics{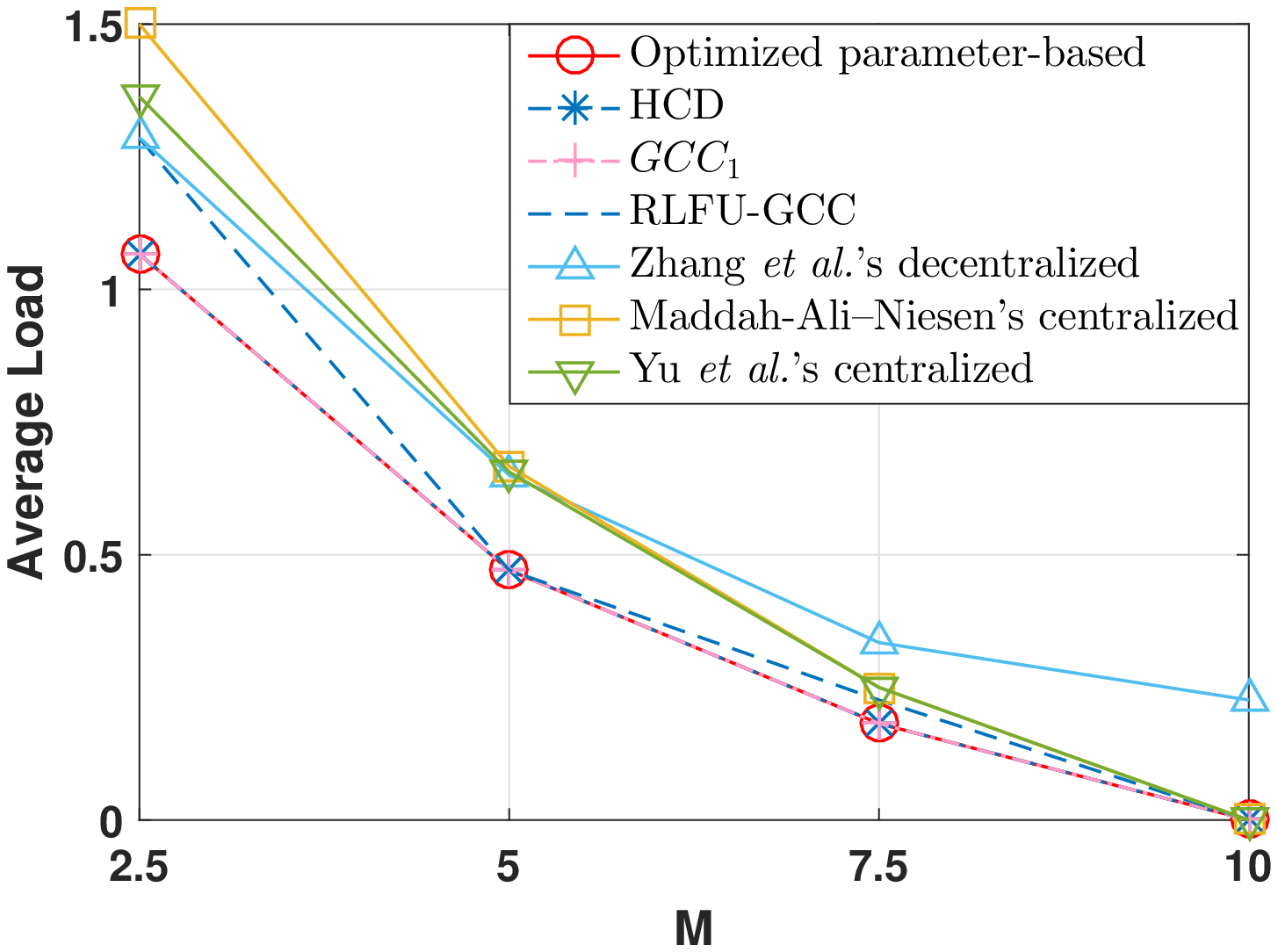}}}
  \end{center}
         \caption{Average load versus cache size $M$ when  $K=4$ and $N=10$.  Note that  Maddah-Ali--Niesen's centralized coded caching scheme mainly focuses on the cache size $M \in \{\frac{N}{K}, \frac{2N}{K}, \ldots, N\}$.  In the simulation, we consider the cache size $M \in \{\frac{N}{K}, \frac{2N}{K}, \ldots, N\}$ for all schemes, for purpose of comparison.
         }\label{fig:compare}
\end{figure}

\begin{figure}
\begin{center}
  {\resizebox{7cm}{!}{\includegraphics{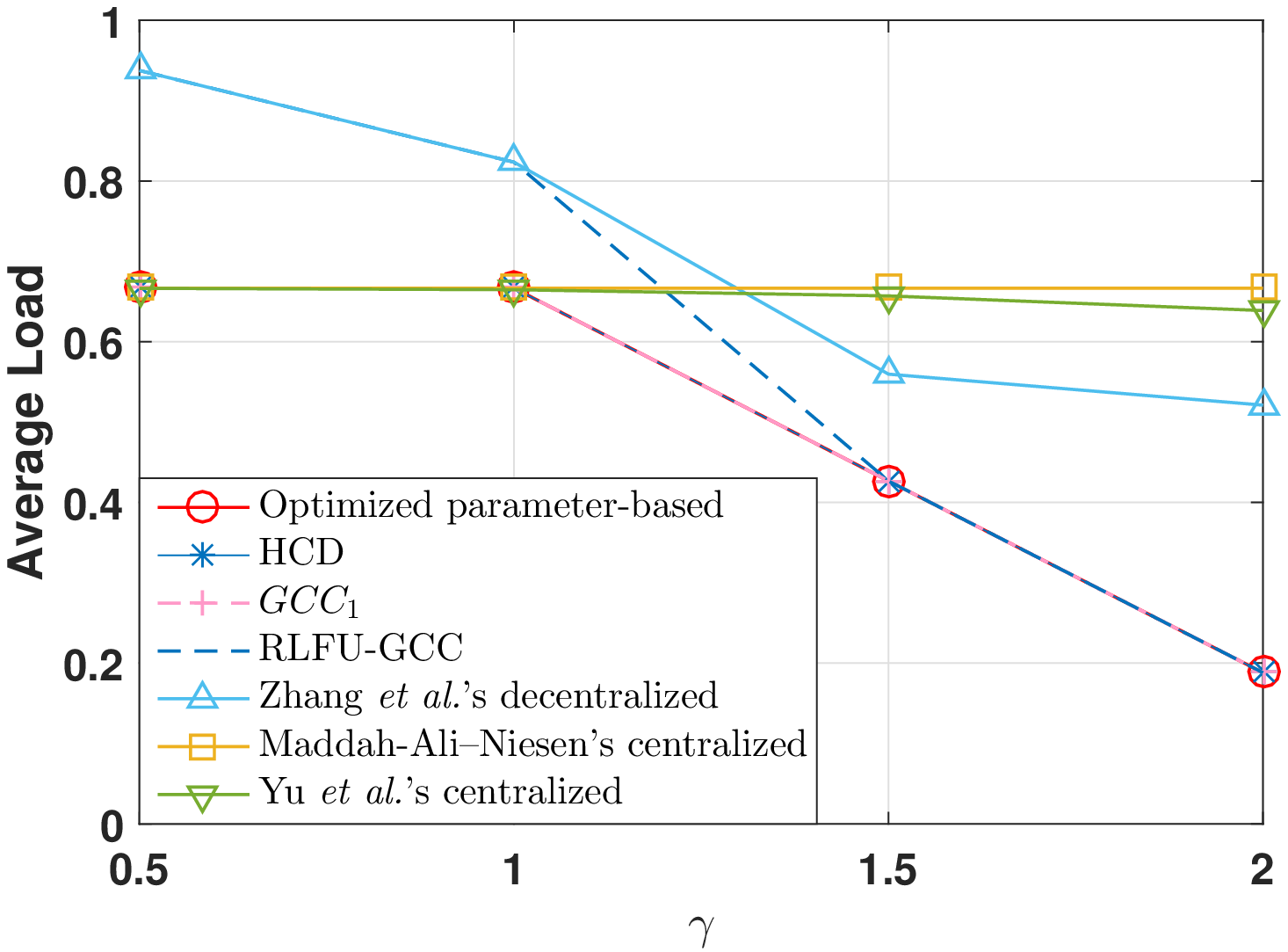}}}
         \caption{Average load versus cache size $\gamma$ when  $K=4$, $N=12$ and $M=6$.
         }\label{fig:simu_gamma}
\end{center}
\end{figure}

\subsection{Average Load Comparison}
In this part, we compare the average loads of the   optimized  parameter-based scheme,  the HCD procedure~\cite{HCD} at the  optimized  file partition parameter in Problem~\ref{Prob:simplify_2} (referred to as the HCD scheme here),  the $GCC_1$ procedure~\cite{ji2015order} at the  optimized  file partition parameter in Problem~\ref{Prob:simplify_2} (referred to as the $GCC_1$ scheme here),  the RLFU-GCC decentralized coded caching scheme (in the infinite file size regime)~\cite{ji2015order},  Zhang {\em et al.}'s decentralized coded caching  scheme (in the infinite file size regime)~\cite{Jinbei},  Maddah-Ali--Niesen's centralized coded caching scheme~\cite{AliFundamental} and Yu {\em et al.}'s centralized coded caching  scheme~\cite{YuQian}.

Fig.~\ref{fig:compare} illustrates the average loads of the above mentioned schemes versus the cache size $M$.
From Fig.~\ref{fig:compare}, we can see that the    optimized parameter-based scheme,  the $GCC_1$ scheme, and  the HCD scheme achieve the same average load $\overline{R}^*_{\rm avg}(K,N,M)$, which verifies Lemma~\ref{Lem:ji} and Lemma~\ref{Lem:HCD}.
Recall that the HCD scheme has higher complexity than the  optimized parameter-based scheme.
In addition, $\overline{R}^*_{\rm avg}(K,N,M)$ is no greater  than the average  loads of  Zhang {\em et al.}'s decentralized  coded caching scheme and Maddah-Ali--Niesen's centralized  coded caching scheme, which verifies  Statement~\ref{Sta:perforamnce_comparison}.
Moreover, $\overline{R}^*_{\rm avg}(K,N,M)$ is no greater than the average loads of  the RLFU-GCC decentralized coded caching scheme and  Yu {\em et al.}'s centralized coded caching  scheme at the parameters considered in the simulation. The reason that the   optimized  parameter-based scheme achieves better performance than the baseline schemes in~\cite{AliFundamental,ji2015order,Jinbei,YuQian}  is due to the advantage of the  optimized parameter-based scheme in exploiting  file popularity  for efficient content placement.

Fig.~\ref{fig:simu_gamma} illustrates the average loads of  the above mentioned schemes  versus  the Zipf exponent $\gamma$.
From Fig.~\ref{fig:simu_gamma}, we know that the average loads of the considered schemes, except  Maddah-Ali--Niesen's centralized coded caching scheme,  decrease as  $\gamma$  increases.
This is because Maddah-Ali--Niesen's centralized coded caching scheme is  designed for the worst-case and is  independent  of the file popularity distribution.
In addition, as  $\gamma$  increases,  the  average load gaps between the   optimized  parameter-based scheme and  Zhang {\em et al.}'s decentralized coded caching  scheme,  Maddah-Ali--Niesen's centralized coded caching scheme and Yu {\em et al.}'s centralized coded caching  scheme  increase. This   phenomenon indicates that the   optimized  parameter-based scheme can make better use of file popularity  for efficient content placement  when the  file popularity distribution  is highly non-uniform.

\begin{figure}
\begin{center}
  {\resizebox{7cm}{!}{\includegraphics{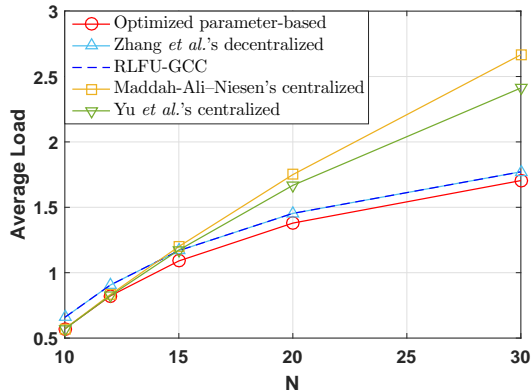}}}
         \caption{Average load versus  $N$ when  $\gamma=1.5$, $K=10$ and $M=6$.
         }\label{fig:simu_N}
\end{center}
\end{figure}

Fig.~\ref{fig:simu_N} illustrates the average loads of some of the above mentioned schemes  versus the number of files $N$.\footnote{Note that the HCD scheme and the $GCC_1$ scheme cannot be implemented using a  desktop when $K=10$  due to huge complexity.} From Fig.~\ref{fig:simu_N}, we see that the average load of each scheme increases with  $N$.  In addition, the  optimized  parameter-based scheme outperforms the other schemes in the considered regime of $N$.

\begin{figure}
\begin{center}
  \subfigure[\small{$\gamma=1$.}]
  {\resizebox{7cm}{!}{\includegraphics{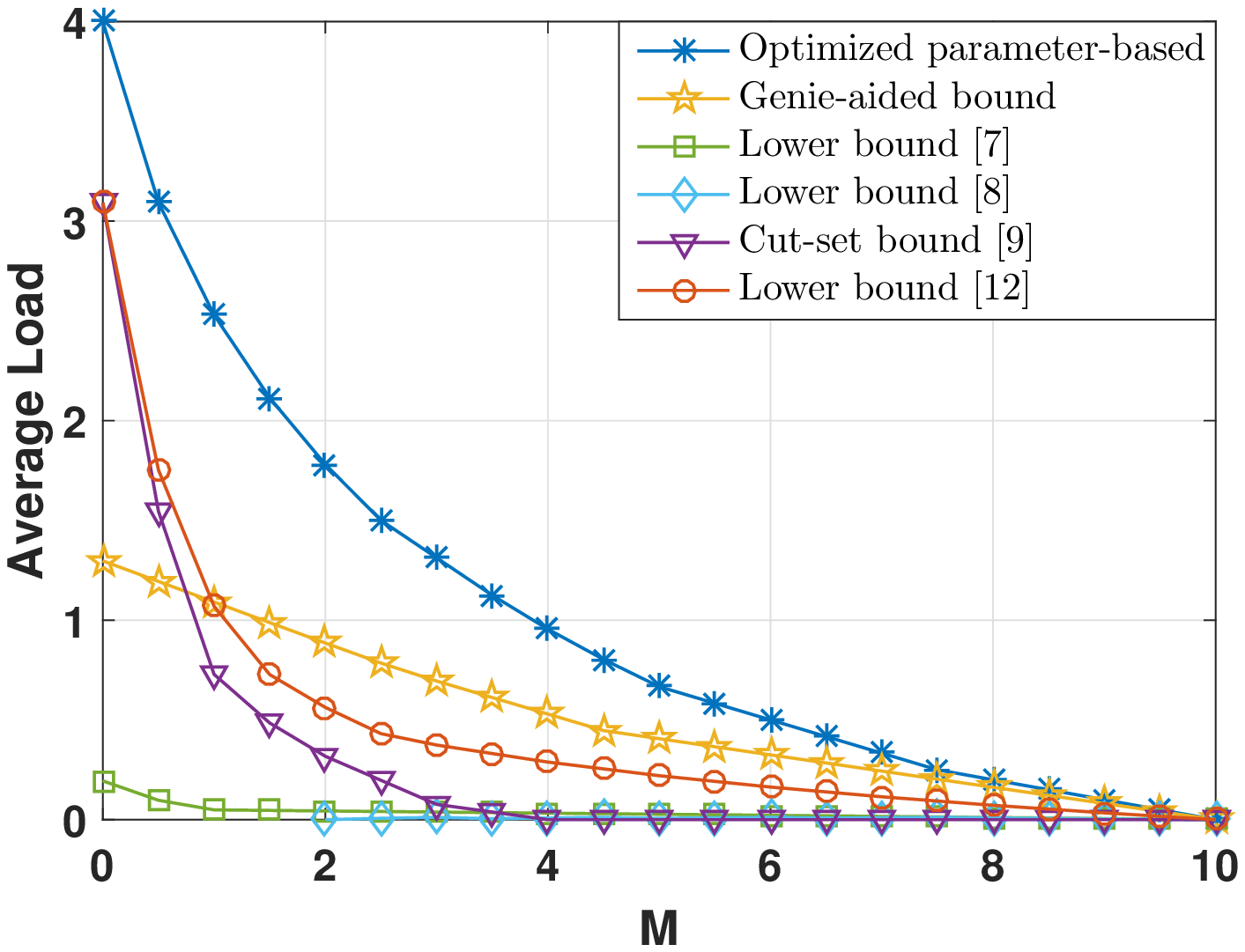}}}
  \quad
  \subfigure[\small{$\gamma=1.5$.}]
  {\resizebox{7cm}{!}{\includegraphics{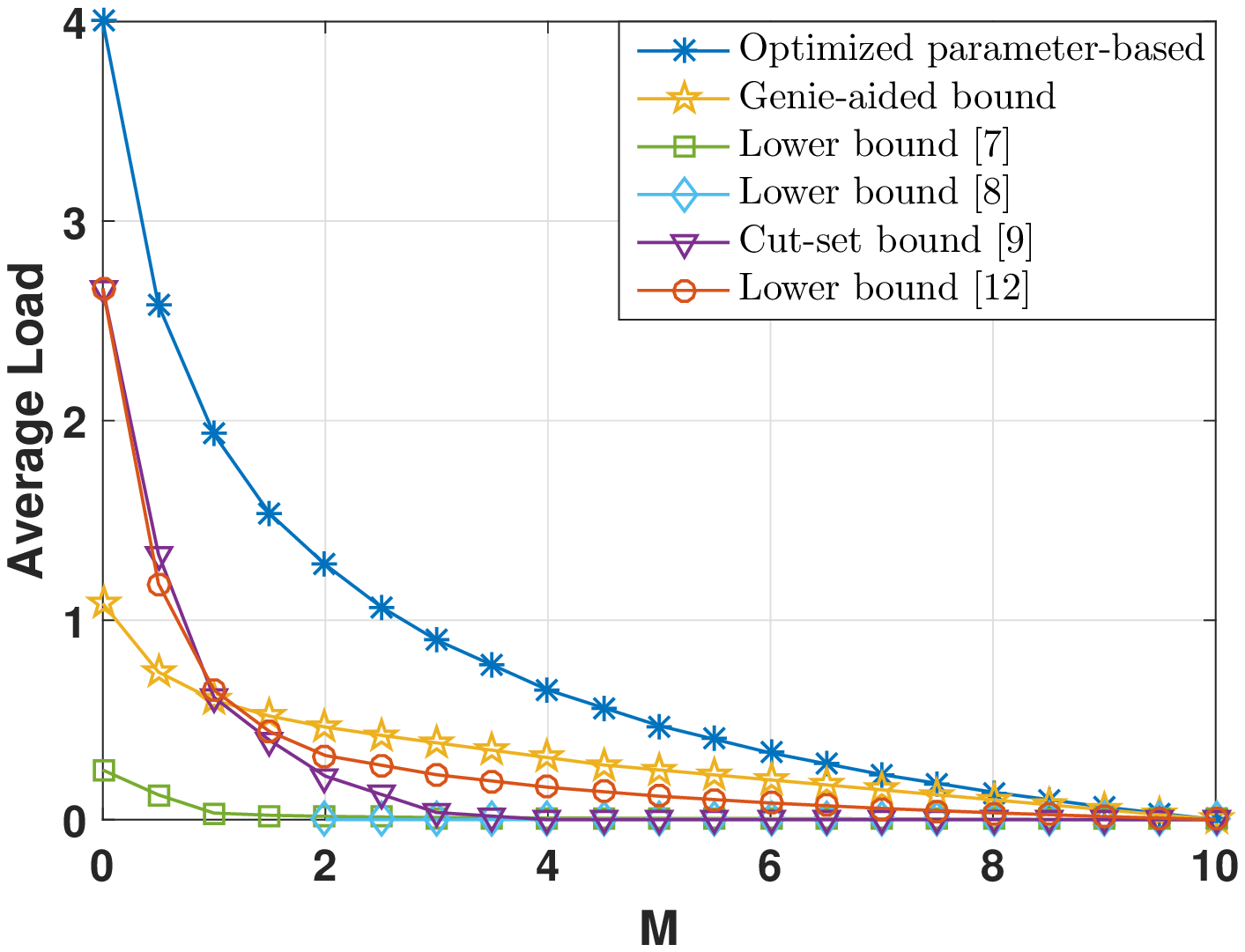}}}
  \end{center}
         \caption{$\overline{R}^*_{\rm avg}(K,N,M)$ and converse bounds  when  $K=4$ and $N=10$.
         }\label{fig:compare_lb}
\end{figure}

\subsection{Converse Bound Comparison}
In this part, we compare different information-theoretic converse bounds on the average load under  an arbitrary file popularity.
Fig.~\ref{fig:compare_lb} illustrates  the  average load of the  optimized  parameter-based scheme $\overline{R}^*_{\rm avg}(K,N,M)$, the genie-aided converse bound in \eqref{eqn:R_avg_lb}  and the converse bounds in~\cite{bound16,Jinbei,ji2015order,Sinong}. From Fig.~\ref{fig:compare_lb}, we can see that the genie-aided converse bound in \eqref{eqn:R_avg_lb} is tighter than the converse bounds in~\cite{ji2015order,Jinbei} for any cache size $M$. The genie-aided converse bound in \eqref{eqn:R_avg_lb} is tighter than the converse bounds in~\cite{bound16,Sinong}   when the cache size is modest or large, and is looser than the converse bounds in~\cite{bound16,Sinong}  when the cache size is small.

\section{Conclusion}
In this work,   we   considered a  class of  centralized  coded caching schemes    utilizing   general  uncoded placement  and  a specific coded delivery strategy, which  are specified   by a general  file partition parameter.
We  formulated the  coded caching design  optimization problem  to minimize the average load  over the considered class of schemes  by optimizing the file partition parameter  under an arbitrary file popularity.
We showed that the  optimization problem is convex, and the resulting optimal solution  generally improves upon known schemes.
Next, we analyzed structural properties of the optimization problem to obtain design insights and significantly reduce the complexity for obtaining an optimal solution.
Under the uniform file popularity,  we also obtained the  closed-form optimal  solution, which  corresponds to   Maddah-Ali--Niesen's centralized coded caching scheme.
Finally, we presented an information-theoretic converse bound on average load under an arbitrary file popularity, which was shown to improve on known bounds for arbitrary file popularity for some configurations of the system  parameters (in particular, for not too small cache memory).

This paper opens up several directions for future research.
For instance,   the class of the centralized coded caching schemes  can be extended to design efficient decentralized coded caching schemes to reduce the average load under an arbitrary file
popularity.
In addition,  the average load of the optimized  parameter-based scheme  may be further reduced by  using the improved delivery scheme of~\cite{YuQian}.
Finally, the parameter-based coded caching design approach can also be generalized to improve the performance of other coded caching schemes.

\section*{Appendix A: Proof of Lemma~\ref{Lem:ji}}
Consider a node $v$ in the conflict graph of the  $GCC_1$  procedure. If node $v$ corresponds to   subfile $W_{d_k,\mathcal{S}\setminus \{k\}}$ requested by user $k \in \mathcal S$, then  we have $\mu(v)=k$, $\eta(v)=\mathcal{S}\setminus \{k\}$, and $\{\mu(v), \eta(v)\}=\mathcal S$.
Note that under the placement procedure in Algorithm~\ref{alg:col}, $\{\mu(v_1), \eta(v_1)\}=\{\mu(v_2), \eta(v_2)\}$ iff $\mu(v_1) \in \eta(v_2)$ and  $\mu(v_2) \in \eta(v_1)$.
Thus, from the $GCC_1$  procedure, we know that assigning any two nodes $v_1$ and $v_2$ satisfying  $\{\mu(v_1), \eta(v_1)\}=\{\mu(v_2), \eta(v_2)\}$  the same color in the conflict graph corresponds to coding $W_{\mu(v_1),\eta(v_1)}$ and $W_{\mu(v_2),\eta(v_2)}$ together in  the delivery procedure in  Algorithm~\ref{alg:col}.
This means that all the nodes corresponding to  $\{\mu(v), \eta(v)\}$ can be assigned the same color as node $v$,  and the corresponding subfiles $W_{d_k,\mathcal{S} \setminus \{k\}}, k \in \mathcal S$ can be  coded together, as in the delivery procedure in  Algorithm~\ref{alg:col}.
Thus, we complete the proof of Lemma~\ref{Lem:ji}.
\section*{Appendix B: Proof of Theorem~\ref{Thm:symmetry}}
We prove  Theorem~\ref{Thm:symmetry} by considering the following two cases.

(i) Consider $s= K$. In this case, there exists only one subfile of type $K$, i.e.,  $W_{n,\mathcal K}$, and hence  Theorem~\ref{Thm:symmetry} holds obviously.

(ii) Consider any type  $s \in \{0, \ldots, K-1\}$ and  any feasible  file partition parameter $\mathbf{x}$.
Let $i_n$ denote the number of users requiring file $W_n$.
Let $\mathbf i \triangleq (i_1,i_2, \ldots, i_N)$ and
$$\mathcal I_s \triangleq \left\{\mathbf i \in \{0,1, \ldots, K\}^N: \sum_{n=1}^{N}i_n=s\right\}.$$
Let $\mathbf{d}_{\mathcal S} \triangleq (d_k)_{k \in \mathcal S} \in \mathcal N^{|\mathcal S|}$ denote the requests of the users in set $\mathcal S$.
For all $\mathbf i \in \mathcal I_s$ and
$\mathcal S \in \{\mathcal{S} \subseteq \mathcal{K}: |\mathcal{S}|=s\}$,  let
$\mathcal P_{\mathcal S, \mathbf i}\triangleq \left\{\mathbf{d}_{\mathcal S}   \in \mathcal N^{|\mathcal S|}: \sum_{k \in \mathcal S} \mathbf{1}[d_k=n]=i_n, n \in \mathcal N\right\}$,
$\Omega_{\mathcal S, \mathbf i} \triangleq \big\{(\mathcal S'_1, \mathcal S'_2, \ldots, \mathcal S'_N): \mathcal S'_n \subseteq \{\mathcal S' \subseteq \mathcal S: |\mathcal S'|=s-1\}, \cup_{n \in  \mathcal N} \mathcal S'_n =\{\mathcal S' \subseteq \mathcal S: |\mathcal S'|=s-1\}, |\mathcal S'_n |=i_n, n \in \mathcal N\big\}$.
By  \eqref{eqn:average_load_1}, we have
\begin{align}
&\overline{R}_{\rm avg}(K,N,M,\mathbf{x})= \sum_{s=1}^{K}\sum_{\mathcal{S} \in \{\mathcal{S} \subseteq \mathcal{K}: |\mathcal{S}|=s\}} \sum_{\mathbf{d} \in \mathcal{N}^K} \left(\prod_{k=1}^{K}p_{d_k}\right) \max_{k \in \mathcal{S}} x_{d_k,\mathcal{S}\setminus\{k\}} \nonumber\\
=& \sum_{s=1}^{K}\sum_{\mathcal{S}\in \{\mathcal{S} \subseteq \mathcal{K}: |\mathcal{S}|=s\}} \sum_{\mathbf{d}_{\mathcal S} \in \mathcal{N}^s} \left(\prod_{k \in \mathcal S}p_{d_k}\right)\max_{k \in \mathcal{S}} x_{d_k,\mathcal{S}\setminus\{k\}} \nonumber\\
=& \sum_{s=1}^{K}\sum_{\mathcal{S}\in \{\mathcal{S} \subseteq \mathcal{K}: |\mathcal{S}|=s\}}
\sum_{\mathbf i \in \mathcal I_s}\sum_{\mathbf{d}_{\mathcal S} \in \mathcal P_{\mathcal S, \mathbf i}} \left(\prod_{k \in \mathcal S}p_{d_k}\right)\max_{k \in \mathcal{S}} x_{d_k,\mathcal{S}\setminus\{k\}} \nonumber\\
=& \sum_{s=1}^{K}\sum_{\mathbf i \in \mathcal I_s}\left(\prod_{n \in \mathcal N}p_{n}^{i_n}\right) \sum_{\mathcal{S}\in \{\mathcal{S} \subseteq \mathcal{K}: |\mathcal{S}|=s\}}
\sum_{\mathbf{d}_{\mathcal S} \in \mathcal P_{\mathcal S, \mathbf i}} \max_{k \in \mathcal{S}} x_{d_k,\mathcal{S}\setminus\{k\}} \nonumber\\
=&\sum_{s=1}^{K}\sum_{\mathbf i \in \mathcal I_s}\left(\prod_{n \in \mathcal N}p_{n}^{i_n}\right) \sum_{\mathcal{S}\in \{\mathcal{S} \subseteq \mathcal{K}: |\mathcal{S}|=s\}}
\sum_{\mathbf{d}_{\mathcal S} \in \mathcal P_{\mathcal S, \mathbf i}}\max_{n \in \mathcal{N}:i_n>0} \max_{ k \in \{k \in \mathcal S: d_k=n\}} x_{n,\mathcal{S}\setminus\{k\}} \nonumber\\
=& \sum_{s=1}^{K} \sum_{\mathbf i \in \mathcal I_s}\left(\prod_{n \in \mathcal N}p_{n}^{i_n}\right)
\sum_{\mathcal{S}\in \{\mathcal{S} \subseteq \mathcal{K}: |\mathcal{S}|=s\}}
\sum_{(\mathcal S'_1, \mathcal S'_2, \ldots, \mathcal S'_N) \in \Omega_{\mathcal S, \mathbf i}} \max_{n \in \mathcal{N}:i_n>0} \max_{\mathcal{S}' \in \mathcal{S}'_n} x_{n,\mathcal{S}'}\nonumber\\
\overset{(a)}\geq &\sum_{s=1}^{K} \sum_{\mathbf i \in \mathcal I_s}\left(\prod_{n \in \mathcal N}p_{n}^{i_n}\right)
\sum_{\mathcal{S}\in \{\mathcal{S} \subseteq \mathcal{K}: |\mathcal{S}|=s\}}
\sum_{(\mathcal S'_1, \mathcal S'_2, \ldots, \mathcal S'_N) \in \Omega_{\mathcal S, \mathbf i}} \max_{\mathcal{S}' \in \mathcal{S}'_n}\frac{\sum_{\mathcal{S}' \in \mathcal{S}'_n} x_{n,\mathcal{S}'}}{i_n} \nonumber\\
=& \sum_{s=1}^{K} \sum_{\mathbf i \in \mathcal I_s}\left(\prod_{n \in \mathcal N}p_{n}^{i_n}\right)L(K,N,M,\mathbf{x}, \mathbf i),
\end{align}
where (a) is due to $\max_{\mathcal{S}' \in \mathcal{S}'_n} x_{n,\mathcal{S}'} \geq   \frac{\sum_{\mathcal{S}' \in \mathcal{S}'_n} x_{n,\mathcal{S}'}}{|\mathcal{S}'_n|}=\frac{\sum_{\mathcal{S}' \in \mathcal{S}'_n} x_{n,\mathcal{S}'}}{i_n}$, and $L(K,N,M,\mathbf{x}, \mathbf i)$ is given by
$$L(K,N,M,\mathbf{x}, \mathbf i) \triangleq \sum_{\mathcal{S}\in \{\mathcal{S} \subseteq \mathcal{K}: |\mathcal{S}|=s\}}
\sum_{(\mathcal S'_1, \mathcal S'_2, \ldots, \mathcal S'_N) \in \Omega_{\mathcal S, \mathbf i}} \max_{\mathcal{S}' \in \mathcal{S}'_n}\frac{\sum_{\mathcal{S}' \in \mathcal{S}'_n} x_{n,\mathcal{S}'}}{i_n}.$$
Next, we derive a lower bound of $L(K,N,M,\mathbf{x}, \mathbf i)$. Consider any $s \in \{1,2,\ldots, K\}$.
For all $\mathcal S' \in \{\mathcal S' \subseteq \mathcal K: |\mathcal S'|=s-1\}$,
the cardinality of $\{\mathcal{S} \subseteq \mathcal{K}: |\mathcal{S}|=s,\mathcal S' \subset\mathcal{S}\}$ is ${K-(s-1) \choose s-(s-1)}$.
Furthermore, for all $\mathcal{S}\in \{\mathcal{S} \subseteq \mathcal{K}: |\mathcal{S}|=s\}$,
the cardinality of $\{(\mathcal S'_1, \mathcal S'_2, \ldots, \mathcal S'_N) \in \Omega_{\mathcal S, \mathbf i}: \mathcal{S}' \in \mathcal S'_n\}$ is  ${s-1 \choose i_n-1}{s \choose i_1,\ldots, i_{n-1},i_{n+1}, \ldots, i_N}$.
Thus, we have
\begin{align}
&L(K,N,M,\mathbf{x}, \mathbf i) \nonumber\\
\overset{(b)}\geq &
\max_{n \in \mathcal{N}:i_n>0} \left\{{K-(s-1) \choose s-(s-1)} {s-1 \choose i_n-1} {s \choose i_1,\ldots, i_{n-1},i_{n+1}, \ldots, i_N} \frac{1}{i_n} \sum_{\mathcal S' \in \{\mathcal S' \subseteq \mathcal K: |\mathcal S'|=s-1\}}x_{n,\mathcal{S}'} \right\}\nonumber\\
=&\max_{n \in \mathcal{N}:i_n>0} \left\{{K-(s-1) \choose s-(s-1)} {s \choose i_1, i_2, \ldots, i_N}\frac{1}{s}\sum_{\mathcal S' \in \{\mathcal S' \subseteq \mathcal K: |\mathcal S'|=s-1\}}x_{n,\mathcal{S}'} \right\}\nonumber\\
=& \max_{n \in \mathcal{N}:i_n>0} \left\{ \frac{{K \choose s}}{{K \choose s-1}}{s \choose i_1, i_2, \ldots, i_N}\sum_{\mathcal S' \in \{\mathcal S' \subseteq\mathcal K: |\mathcal S'|=s-1\}}x_{n,\mathcal{S}'} \right\}\nonumber\\
=& {K \choose s} {s \choose i_1, i_2, \ldots, i_N} \max_{n \in \mathcal{N}:i_n>0} \left\{\frac{\sum_{\mathcal S' \in \{\mathcal S'\subseteq \mathcal K: |\mathcal S'|=s-1\}}x_{n,\mathcal{S}'}}{{K \choose s-1}} \right\}\nonumber\\
=& \sum_{\mathcal{S}\in \{\mathcal{S} \subseteq \mathcal{K}: |\mathcal{S}|=s\}}
\sum_{(\mathcal S'_1, \mathcal S'_2, \ldots, \mathcal S'_N) \in \Omega_{\mathcal S, \mathbf i}}
\max_{n \in \mathcal{N}:i_n>0} \left\{\frac{\sum_{\mathcal S' \in \{\mathcal S'\subseteq \mathcal K: |\mathcal S'|=s-1\}}x_{n,\mathcal{S}'}}{{K \choose s-1}} \right\}, \label{eqn:L_lb}
\end{align}
where (b) is due to $\max \{a_1, \ldots, a_N\}+\max \{b_1, \ldots, b_N\} \geq \max \{a_1+b_1, \ldots, a_N+b_N\}$. The equality  holds in  (b) when $x_{n, \mathcal{S}}=\frac{\sum_{\mathcal S' \subseteq \{ \mathcal S'\subseteq \mathcal K: |\mathcal S'|=s-1\}}x_{n,\mathcal{S}'}}{{K \choose s-1}}$ for all   $n \in \mathcal{N}$, $s \in \{1, \cdots, K\}$ and $\mathcal{S}\subseteq \{\mathcal{S} \subseteq \mathcal{K}: |\mathcal{S}|=s-1\}$.
Thus, we know  $x^*_{n, \mathcal{S}}=\frac{\sum_{\mathcal S' \subseteq \{ \mathcal S'\subseteq \mathcal K: |\mathcal S'|=s-1\}}x^*_{n,\mathcal{S}'}}{{K \choose s-1}}$ for all $n \in \mathcal{N}$, $s \in \{1, \cdots, K\}$ and $\mathcal{S}\subseteq \{\mathcal{S} \subseteq \mathcal{K}: |\mathcal{S}|=s-1\}$. 
Therefore, we complete the proof of Theorem~\ref{Thm:symmetry}.

\section*{Appendix C: Proof of Lemma~\ref{Lem:s symmetric}}
By \eqref{eqn:average_load_2}, we have
\begin{align}
&\widetilde{R}_{\rm avg}(K,N,M,\mathbf{y})=\sum_{s=1}^{K}\sum_{\mathcal{S}\in \{\mathcal{S} \subseteq \mathcal{K}: |\mathcal{S}|=s\}}\sum_{\mathbf{d} \in \mathcal{N}^K} \left(\prod_{k=1}^{K}p_{d_k} \right) \max_{k \in \mathcal{S}}y_{d_k,s-1}\nonumber \\
\overset{(a)}=&\sum_{s=1}^{K}{K \choose s}\sum_{\mathbf{d} \in \mathcal{N}^K} \left(\prod_{k=1}^{K}p_{d_k} \right)   \max_{k \in \{1,2,\cdots, s\}}y_{d_k,s-1}\nonumber \\
=&\sum_{s=1}^{K}{K \choose s}\sum_{(d_1, \ldots, d_s)  \in \mathcal{N}^s} \left(\prod_{k=1}^{s}p_{d_k}\right)  \max_{k \in \{1,2,\cdots, s\}}y_{d_k,s-1} \label{eqn: to} \\
\overset{(b)}=&\sum_{s=1}^{K}{K \choose s}\sum_{n=1}^{N}\sum_{(d_1, \ldots, d_s) \in \mathcal{D}_{n,s}} \left(\prod_{k=1}^{s}p_{d_k} \right)  \max_{k \in \{1,2,\cdots, s\}}y_{d_k,s-1},
\end{align}
where (a) is due to that for any $s \in \{1,2, \cdots, K\}$, the values of $\sum_{\mathbf{d} \in \mathcal{N}^K} \left(\prod_{k=1}^{K}p_{d_k}\right)  \max_{k \in \mathcal{S}}y_{d_k,s-1}$, $\mathcal{S} \subseteq \mathcal{K},|\mathcal{S}|=s$ are the same, and (b) is due to $\mathcal{N}^s=\cup_{n \in \mathcal N} \mathcal{D}_{n,s}$ and $\mathcal{D}_{n,s}\cap \mathcal{D}_{n',s}=\emptyset$ for all $n \neq n'$. Therefore, we complete the proof of Lemma~\ref{Lem:s symmetric}.
\section*{Appendix D: Proof of Theorem~\ref{Thm:popularity}}
Theorem~\ref{Thm:popularity} can be proved by proving  the following two statements.

{\em Statement (i):} For all  $n_1, n_2 \in \{1,2,\ldots, N\}$,
when $p_{n_1}=p_{n_2}$,  we have
\begin{align}
y^*_{n_1,s}  = y^*_{n_2,s}
\end{align}
for all $s \in \{1,2, \cdots, K\}$.

{\em Statement (ii):}¡¡For all  $n_1, n_2 \in \{1,2,\ldots, N\}$,
when $p_{n_1}>p_{n_2}$,  we have
\begin{align}
y^*_{n_1,s}  \geq y^*_{n_2,s} \label{eqn: s2}
\end{align}
for all $s \in \{1,2, \cdots, K\}$.

Next, we prove the above two statements, separately.
\subsection*{Proof of Statement (i)}
Consider any feasible  file partition parameter $\mathbf{y}$.
Let $\mathbf i_{-n_1,-n_2} \triangleq (i_n)_{n \in \mathcal N \setminus \{n_1, n_2\}}$,
$\mathcal I_{-n_1,-n_2, s-s_0} \triangleq \{\mathbf i_{-n_1,-n_2}: \sum_{n \in \mathcal N \setminus \{n_1, n_2\}}i_n=s-s_0\}$ and
$\mathcal I'_{n_1,n_2, s_0} \triangleq \{(i_{n_1}, i_{n_2}): i_{n_1}+i_{n_2}=s_0\}$.
By \eqref{eqn: to}, we have
\begin{align}
&\widetilde{R}_{\rm avg}(K,N,M,\mathbf{y}) \overset{(a)}=\sum_{s=2}^{K}{K \choose s}\sum_{(d_1, \ldots, d_s)  \in \mathcal{N}^s} \left(\prod_{k=1}^{s}p_{d_k}\right)  \max_{k \in \{1,2,\cdots, s\}}y_{d_k,s-1}
+K\left(1-\sum_{s=1}^{K} {K \choose s}\sum_{n=1}^{N}p_{n} y_{n, s}\right) \nonumber \\
=&\sum_{s=2}^{K}{K \choose s}\sum_{s_0=0}^{s}\sum_{\mathbf i_{-n_1,-n_2} \in \mathcal I_{-n_1,-n_2, s-s_0}}\sum_{(i_{n_1}, i_{n_2}) \in \mathcal I'_{n_1,n_2, s_0}}\sum_{(d_1, \ldots, d_s)  \in \mathcal P_{(1, \ldots, s), \mathbf i}}  p_{n_1}^{s_0}\left(\prod_{n \in \mathcal N \setminus \{n_1,n_2\}}p_{n}^{i_n}\right)   \nonumber \\
&\times\max_{k \in \{1,2,\cdots, s\}}y_{d_k,s-1}+K\left(1-\sum_{s=1}^{K} {K \choose s}\sum_{n=1}^{N}p_{n} y_{n, s}\right)\nonumber \\
=&\sum_{s=2}^{K}{K \choose s}\sum_{s_0=0}^{s}\sum_{\mathbf i_{-n_1,-n_2} \in \mathcal I_{-n_1,-n_2, s-s_0}} p_{n_1}^{s_0}\left(\prod_{n \in \mathcal N \setminus \{n_1,n_2\}}p_{n}^{i_n}\right) \sum_{(i_{n_1}, i_{n_2}) \in \mathcal I'_{n_1,n_2, s_0}}{s \choose i_1, i_2, \ldots, i_N}  \nonumber
\end{align}
\begin{align}
&\times \max_{k \in \{1,2,\cdots, s\}}y_{d_k,s-1}+K\left(1-\sum_{s=1}^{K} {K \choose s}\sum_{n=1}^{N}p_{n} y_{n, s}\right)\nonumber \\
=&\sum_{s=2}^{K}{K \choose s}\sum_{s_0=0}^{s}\sum_{\mathbf i_{-n_1,-n_2} \in \mathcal I_{-n_1,-n_2, s-s_0}}\frac{s!}{\prod_{n \in \mathcal N \setminus \{n_1,n_2\}}i_n!}p_{n_1}^{s_0}\left(\prod_{n \in \mathcal N \setminus \{n_1,n_2\}} p_{n}^{i_n}\right) \sum_{(i_{n_1}, i_{n_2}) \in \mathcal I'_{n_1,n_2, s_0} } {s_0 \choose i_{n_1}}   \nonumber \\
&\times \max_{n \in \mathcal N, i_n>0}y_{n,s-1}+K\left(1-\sum_{s=1}^{K} {K \choose s}\sum_{n=1}^{N}p_{n} y_{n, s}\right)\nonumber \\
\overset{(b)} \geq &\sum_{s=2}^{K}{K \choose s}\sum_{s_0=0}^{s}\sum_{\mathbf i_{-n_1,-n_2} \in \mathcal I_{-n_1,-n_2, s-s_0}}\frac{s!}{\prod_{n \in \mathcal N \setminus \{n_1,n_2\}}i_n!} p_{n_1}^{s_0}\left(\prod_{n \in \mathcal N \setminus \{n_1,n_2\}} p_{n}^{i_n}\right) \sum_{(i_{n_1}, i_{n_2}) \in \mathcal I'_{n_1,n_2, s_0} } {s_0 \choose i_{n_1}} \nonumber \\
&\times  \max \left\{\max_{n \in \mathcal N \setminus \{n_1,n_2\}:i_n>0}y_{n,s-1}, \frac{y_{n_1, s-1}+y_{n_2, s-1}}{2}\right\}+K\left(1-\sum_{s=1}^{K} {K \choose s}\sum_{n=1}^{N}p_{n} y_{n, s}\right),
\label{eqn:R_y_simp}
\end{align}
where (a) is due to  \eqref{eqn:X_sum_2},
and (b) is due to $\max \{a_1, \ldots, a_N\}+\max \{b_1, \ldots, b_N\} \geq \max \{a_1+b_1, \ldots, a_N+b_N\}$.
The equality holds in (b) when $y_{n_1, s-1}=y_{n_2, s-1}=\frac{y_{n_1, s-1}+y_{n_2, s-1}}{2}$ for all  $n_1, n_2 \in \{1,2,\ldots, N-1\}$ and   $s \in \{1,2, \cdots, K\}$.
Thus, $y^*_{n_1, s-1}=y^*_{n_2, s-1}=\frac{y^*_{n_1, s-1}+y^*_{n_2, s-1}}{2}$ for all $n_1, n_2 \in \{1,2,\ldots, N-1\}$ and   $s \in \{1,2, \cdots, K\}$.
Therefore, we complete the proof of Statement (i).
\subsection*{Proof of Statement (ii)}
First, we calculate the average loads under two related feasible file partition parameters.
Consider any $s_0 \in \{2,3, \ldots, K+1\}$ and  any feasible  file partition parameter $\mathbf{y}$.
Let $y_{(1), s_0-1} \geq y_{(2), s_0-1} \geq \ldots \geq y_{(N-1), s_0-1} \geq y_{(N), s_0-1}$ be the $y_{n, s_0-1} $'s arranged in decreasing order, so that $y_{(n), s_0-1}$ is the $n$-th largest.
Let $\mathcal{\widetilde{D}}_{(n),s_0}\triangleq \{(n),(n+1),\ldots, (N)\}^{s_0}\setminus \{(n+1),\ldots, (N)\}^{s_0}$.
For all $n \in \mathcal N$, we have
\begin{align}
\max_{k \in \{1,2,\cdots, s_0\}}y_{d_k,s_0-1}=y_{(n),s_0-1},\quad  (d_1, \ldots, d_{s_0}) \in \mathcal{\widetilde{D}}_{(n),s_0}. \label{eqn:max_to}
\end{align}
Then, by \eqref{eqn: to}, we have
\begin{align}
&\widetilde{R}_{\rm avg}(K,N,M,\mathbf{y})=\sum_{s=1}^{K}{K \choose s}\sum_{n=1}^{N}\sum_{(d_1, \ldots, d_s) \in \mathcal{\widetilde{D}}_{(n),s}} \left(\prod_{k=1}^{s}p_{d_k}\right)   \max_{k \in \{1,2,\cdots, s\}}y_{d_k,s-1} \nonumber \\
\overset{(a)}=&\sum_{s=1}^{K}{K \choose s}\sum_{n=1}^{N}\sum_{(d_1, \ldots, d_s) \in \mathcal{\widetilde{D}}_{(n),s}} \left(\prod_{k=1}^{s}p_{d_k}\right)  y_{i_{n},s-1}\nonumber \\
\overset{(b)}=&\sum_{s=2}^{K}{K \choose s}\sum_{n=1}^{N}\sum_{(d_1, \ldots, d_s) \in \mathcal{\widetilde{D}}_{(n),s}} \left(\prod_{k=1}^{s}p_{d_k}\right) y_{i_{n},s-1}
+K\left(1-\sum_{n=1}^{N}p_{i_{n}}\sum_{s=1}^{K} {K \choose s} y_{(n), s}\right), \label{eqn:R_y_o}
\end{align}
where (a) is due to \eqref{eqn:max_to} and (b) is due to \eqref{eqn:X_sum_2}. In addition,
let $n_0 \in\{1,2,\cdots,N-1\}$ denote the largest index such that  $p_{(n_0+1)}>p_{(n_0)}$.
By exchanging the values of $y_{(n_0), s_0-1}$ and $y_{(n_0+1), s_0-1}$, we can obtain another file partition parameter $\widehat{\mathbf{y}} \triangleq (\widehat{y}_{n,s})_{n \in \mathcal{N}, s \in \{0,1,\cdots, K\}}$, where
\begin{align}
\widehat{y}_{(n),s}=
\begin{cases}
y_{(n_0+1),s_0-1}, &n=n_0,   s=s_0-1\\
y_{(n_0),s_0-1}, &n=n_0+1,  s=s_0-1\\
1-\sum_{s \in \{1,2,\ldots, K\} \setminus \{s_0-1\}} {K \choose s} y_{(n_0+1), s}- {K \choose s_0-1} y_{(n_0), s_0-1}, &n=n_0+1,  s=0\\
1-\sum_{s \in \{1,2,\ldots, K\} \setminus \{s_0-1\}} {K \choose s} y_{(n_0), s}- {K \choose s_0-1} y_{(n_0+1), s_0-1}, &n=n_0,  s=0\\
y_{(n),s}, &\text{otherwise}.
\end{cases}\label{eqn:X_hat}
\end{align}
It is obvious that $\widehat{\mathbf{y}}$ is feasible.
For all $n \in \mathcal N \setminus\{n_0, n_0+1\}$, we have
\begin{align}
\max_{k \in \{1,2,\cdots, s_0\}}\widehat{y}_{d_k,s_0-1}=y_{(n),s_0-1}, \quad (d_1, \ldots, d_{s_0}) \in \mathcal{\widetilde{D}}_{(n),s_0}. \label{eqn:max_to_1}
\end{align}
Let $\mathcal{\widetilde{D}}'_{(n_0),s_0}\triangleq \{(n_0),(n_0+2),\ldots, (N)\}^{s_0}\setminus \{(n_0+2),\ldots, (N)\}^{s_0}$.
We have
\begin{align}
&\max_{k \in \{1,2,\cdots, s_0\}}\widehat{y}_{d_k,s_0-1}=y_{(n_0+1),s_0-1}, \quad (d_1, \ldots, d_{s_0}) \in \mathcal{\widetilde{D}}'_{(n_0),s_0}, \label{eqn:exchange_max_n_0_hat}
\end{align}
\begin{align}
&\max_{k \in \{1,2,\cdots, s_0\}}\widehat{y}_{d_k,s_0-1}=y_{(n_0),s_0-1}, \quad (d_1, \ldots, d_{s_0}) \in \mathcal{\widetilde{D}}_{(n_0),s_0}\setminus \mathcal{\widetilde{D}}'_{(n_0),s_0},  \label{eqn:exchange_max_n_2}
\end{align}
\begin{align}
&\max_{k \in \{1,2,\cdots, s_0\}}\widehat{y}_{d_k,s_0-1}=y_{(n_0),s_0-1}, \quad (d_1, \ldots, d_{s_0}) \in \mathcal{\widetilde{D}}_{(n_0+1),s_0}. \label{eqn:exchange_max_n_1_hat}
\end{align}
Then, by \eqref{eqn: to}, we have
\begin{align}
&\widetilde{R}_{\rm avg}(K,N,M,\widehat{\mathbf{y}})=\sum_{s=1}^{K}{K \choose s}\sum_{n=1}^{N}\sum_{(d_1, \ldots, d_s) \in \mathcal{\widetilde{D}}_{(n),s}} \left(\prod_{k=1}^{s}p_{d_k} \right)  \max_{k \in \{1,2,\cdots, s\}}\widehat{y}_{d_k,s-1}\nonumber \\
\overset{(c)}=&\sum_{s=2}^{K}{K \choose s}\sum_{n \in \mathcal N \setminus\{n_0, n_0+1\}}\sum_{(d_1, \ldots, d_s) \in \mathcal{\widetilde{D}}_{(n),s}} \left(\prod_{k=1}^{s}p_{d_k}\right)  y_{(n),s-1}\nonumber \\
+&\sum_{s \in \mathcal K \setminus\{1,s_0\}}{K \choose s}\sum_{n \in \{n_0, n_0+1\}}\sum_{(d_1, \ldots, d_s) \in \mathcal{\widetilde{D}}_{(n),s}} \left(\prod_{k=1}^{s}p_{d_k} \right)   y_{(n),s-1}\nonumber \\
+&{K \choose s_0}\left(\sum_{(d_1, \ldots, d_{s_0}) \in \mathcal{\widetilde{D}}'_{(n_0),s_0}} \left(\prod_{k=1}^{s}p_{d_k}\right)  y_{(n_0+1),s_0-1}
+\sum_{(d_1, \ldots, d_{s_0}) \in \mathcal{\widetilde{D}}_{(n_0),s_0}\setminus \mathcal{\widetilde{D}}'_{(n_0),s_0}\cup \mathcal{\widetilde{D}}_{(n_0+1),s_0}} \left(\prod_{k=1}^{s}p_{d_k} \right) y_{(n_0),s_0-1}\right)\nonumber
\end{align}
\begin{align}
+&K\left(1-\sum_{n \in \mathcal N \setminus\{n_0, n_0+1\}}p_{(n)}\sum_{s=1}^{K} {K \choose s} y_{(n), s}
-\sum_{s \in \mathcal K \setminus \{s_0-1\}} {K \choose s} \left(p_{(n_0)}y_{(n_0), s}+p_{(n_0+1)}y_{(n_0+1), s}\right)\right)\nonumber \\
-& K{K \choose s_0-1} \left(p_{(n_0)}y_{(n_0+1), s_0-1}+p_{(n_0+1)}y_{(n_0), s_0-1}\right),
\label{eqn:R_y_v}
\end{align}
where (c) is due to \eqref{eqn:X_sum_2}, \eqref{eqn:max_to_1}--\eqref{eqn:exchange_max_n_1_hat}.

Next, we prove $\widetilde{R}_{\rm avg}(K,N,M,\mathbf{y}) \geq  \widetilde{R}_{\rm avg}(K,N,M,\widehat{\mathbf{y}})$.
By \eqref{eqn:R_y_o} and \eqref{eqn:R_y_v}, we have
\begin{align}
&\widetilde{R}_{\rm avg}(K,N,M,\mathbf{y})-\widetilde{R}_{\rm avg}(K,N,M,\widehat{\mathbf{y}})\nonumber \\
=&\left({K \choose s_0}\left(\sum_{(d_1, \ldots, d_{s_0}) \in \mathcal{\widetilde{D}}'_{(n_0),s_0}} \left(\prod_{k=1}^{s_0}p_{d_k}\right)
-\sum_{(d_1, \ldots, d_{s_0}) \in \mathcal{\widetilde{D}}_{(n_0+1),s_0}} \left(\prod_{k=1}^{s_0}p_{d_k}\right)\right)+K{K \choose s_0-1}\left(p_{(n_0+1)}-p_{(n_0)}\right)\right)\nonumber \\
&\times\left(y_{(n_0),s_0-1}-y_{(n_0+1),s_0-1}\right)\nonumber \\
=&f(s_0){K \choose s_0-1}\left(y_{(n_0),s_0-1}-y_{(n_0+1),s_0-1}\right), \label{eqn:load_comp}
\end{align}
where
\begin{align}
f(s)\triangleq \frac{K-s+1}{ s}\left(\left(p_{(n_0)}+\sum_{n'=n_0+2}^{N}p_{(n')}\right)^{s}-\left(\sum_{n'=n_0+1}^{N}p_{(n')}\right)^{s}\right)+K\left(p_{(n_0+1)}-p_{(n_0)}\right). \label{eqn:f_s}
\end{align}
To prove  $\widetilde{R}_{\rm avg}(K,N,M,\mathbf{y}) \geq  \widetilde{R}_{\rm avg}(K,N,M,\widehat{\mathbf{y}})$, it is sufficient to show  $f(s_0)>0$, for all  $s_0 \in \{2, \ldots, K\}$.
By \eqref{eqn:f_s}, we have $f'(s)=g(\alpha)-g(\beta)$, where $g(x) \triangleq \left(\frac{K-s+1}{s}\ln x-\frac{K+1}{s^2}\right)x^s$, $\alpha \triangleq p_{(n_0)}+\sum_{n'=n_0+2}^{N}p_{(n')}$ and $\beta \triangleq \sum_{n'=n_0+1}^{N}p_{(n')}$.
For all $s \in \{1,2, \ldots,K\}$ and $x \in (0,1)$, we have  $g'(x)= \left(\left(K-s+1\right)\ln x -1\right)x^{s-1}<0$.
By noting that $0<\alpha<\beta<1$,  we have $f'(s)=g(\alpha)-g(\beta) >0$, implying that $f(s) > f(1)=0$ for all $s \in \{2, \ldots, K\}$.
Thus, by \eqref{eqn:load_comp},  we can show $\widetilde{R}_{\rm avg}(K,N,M,\mathbf{y})  \geq  \widetilde{R}_{\rm avg}(K,N,M,\widehat{\mathbf{y}})$.

From the above discussion, we know that when $p_{(n_0+1)}>p_{(n_0)}$, by exchanging the values of $y_{(n_0),s_0-1}$ and $y_{(n_0+1),s_0-1}$, we can always reduce the average load. Thus, for the optimized solution  $\mathbf{y}^*$, there does not exist any $n_0 \in \{1,2,\ldots, N-1\}$ such that $p_{(n_0+1)}>p_{(n_0)}$. In other words, for all  $n_1, n_2 \in \{1,2,\ldots, N-1\}$ satisfying $p_{n_1}>p_{n_2}$, we have $y^*_{n_1,s_0-1}  \geq y^*_{n_2,s_0-1}$ for all $s_0 \in \{2,3, \cdots,  K+1\}$.
Therefore, we complete the proof of Statement~(ii).

\section*{Appendix E: Proof of Corollary~\ref{Cor:group}}
We prove Corollary~\ref{Cor:group} by proving the sufficiency and necessity.
First, we prove the sufficiency. If  $y^*_{n_1,s}=y^*_{n_2,s}$  for all $s \in \{0,1,\ldots, K\}$, obviously we have
$$\sum_{s=1}^{K} {K-1 \choose s-1} y^*_{n_1, s}=\sum_{s=1}^{K} {K-1 \choose s-1} y^*_{n_2, s}.$$
Next, we prove the necessity. Without loss of generality, we suppose $p_{n_1} \geq p_{n_2}$. By Theorem~\ref{Thm:popularity}, we have
\begin{align}
y^*_{n_1,s} \geq y^*_{n_2,s}, \quad \forall s \in \{1,\ldots, K\}. \label{eqn:cor}
\end{align}
If $$\sum_{s=1}^{K} {K-1 \choose s-1} y^*_{n_1, s}=\sum_{s=1}^{K} {K-1 \choose s-1} y^*_{n_2, s},$$
by \eqref{eqn:cor}, we have $y^*_{n_1,s}=y^*_{n_2,s}$  for all $s \in \{1,\ldots, K\}$. Based on this, by~\eqref{eqn:X_sum_2}, we have $y^*_{n_1,0}=y^*_{n_2,0}$.
Therefore, we complete the proof of Corollary~\ref{Cor:group}.

\section*{Appendix F: Proof of Lemma~\ref{Lem:simplification pop}}
By \eqref{eqn:additional_cons}, for any $(d_1, \ldots, d_{s}) \in \mathcal{D}_{n,s}$, $n \in\{1,2,\cdots,N-1\}$ and $s \in \mathcal{K}$, we have
\begin{align}
\max_{k \in \{1,2,\cdots, s\}}y_{d_k,s-1}=y_{n,s-1}, \quad s \in \mathcal{K}.\label{eqn:additional_cons_cor}
\end{align}
By \eqref{eqn:average_load_3} and \eqref{eqn:additional_cons_cor}, we have
\begin{align}
&\widetilde{R}_{\rm avg}(K,N,M,\mathbf{y}) \overset{(a)}=\sum_{s=1}^{K}{K \choose s}  \sum_{n=1}^{N} y_{n,s-1}  \sum_{(d_1, \ldots, d_{s})  \in \mathcal{D}_{n,s}}  \left(\prod_{k=1}^{s}p_{d_k}\right)  \nonumber \\
\overset{(b)}=&\sum_{s=1}^{K}{K \choose s}  \sum_{n=1}^{N} y_{n,s-1}  \left(\sum_{(d_1, \ldots, d_{s})  \in \{n,n+1,\ldots,N\}^s}  \left(\prod_{k=1}^{s}p_{d_k}\right) - \sum_{(d_1, \ldots, d_{s})  \in \{n+1,n+2,\ldots,N\}^s } \left(\prod_{k=1}^{s}p_{d_k}\right)\right) \nonumber \\
=&\sum_{s=1}^{K}{K \choose s}\sum_{n=1}^{N}  y_{n,s-1} \left(\left(\sum_{n'=n}^{N}p_{n'}\right)^s-\left(\sum_{n'=n+1}^{N}p_{n'}\right)^s\right) ,
\end{align}
where (a) is due to \eqref{eqn:additional_cons_cor} and (b) is due to the definition of $\mathcal{D}_{n,s}$.
Therefore, we complete the proof of Lemma~\ref{Lem:simplification pop}.

\section*{Appendix G: Proof of Lemma~\ref{Lem:HCD}}
Consider any subset  of users  $\mathcal S\subset \mathcal{K}$ and $\mathbf{D}=\mathbf{d} \triangleq (d_k)_{k \in \mathcal K}$.
Define   $\mathcal{S}_m \triangleq \{k \in \mathcal{S}: d_k= \underset{l \in \mathcal{S}} \min d_l \}$.
By \eqref{eqn:symmetry} and \eqref{eqn:additional_cons}, we know that for any $k \in \mathcal S \setminus \mathcal{S}_m$ and any $k_m \in  \mathcal{S}_m$,  subfile $W_{d_k, \mathcal S \setminus \{k\}}$  has  smaller length  than  subfile $W_{d_{k_m}, \mathcal S \setminus \{k_m\}}$, and should be padded with bits from some subfiles  $W_{d_k, \mathcal S' \setminus \{k\}}$, $\mathcal S \subset \mathcal S' \subseteq \mathcal{K}$ in the HCD procedure.
Consider any subset of users $\mathcal{S}'$ such that $\mathcal S \subset \mathcal S' \subseteq \mathcal{K}$.
Define   $\mathcal{S}'_m \triangleq \{k \in \mathcal{S}': d_k= \underset{l \in \mathcal{S}'} \min d_l \}$.
By the definitions of $\mathcal{S}_m $ and  $\mathcal{S}'_m $, we have
\begin{align}
d_{k'_m} < d_k, \quad \forall k'_m \in  \mathcal{S}'_m, \ k \in \mathcal S \setminus \mathcal{S}_m, \label{eqn:d_p_k_m_less_k}
\end{align}
$\left(\mathcal S \setminus \mathcal{S}_m\right) \cap \mathcal{S}'_m =\emptyset$ and $\left(\mathcal S \setminus \mathcal{S}_m\right) \cup \mathcal{S}'_m  \subset \mathcal S'$.
Consider any $k \in \mathcal S \setminus \mathcal{S}_m$ and any $k'_m \in \mathcal{S}'_m$.
By \eqref{eqn:symmetry}, \eqref{eqn:additional_cons} and \eqref{eqn:d_p_k_m_less_k}, we know that the size of subfile $W_{d_k, \mathcal S' \setminus \{k\}}$   in   coded multicast message $\oplus_{k \in \mathcal S'} W_{d_{k},\mathcal S' \setminus \{k\}}$   is smaller than that of the longest subfile $W_{d_{k'_m}, \mathcal S' \setminus \{k'_m\}}$   in this coded multicast message.
Thus, the appending method  in the  HCD procedure does not change the size of $\oplus_{k \in \mathcal S'} W_{d_{k},\mathcal S' \setminus \{k\}}$.
Therefore, we complete the proof of Lemma~\ref{Lem:HCD}.

\section*{Appendix H: Proof of Lemma~\ref{Lem:Ali}}
The Lagrangian of Problem~\ref{Prob:equivalent_3}  is given by
\begin{align}
L(\mathbf{z}, \boldsymbol \eta, \theta, \nu)=&\sum_{s=0}^{K}{K \choose s}\frac{K-s}{s+1} z_{s}+ \eta_{s} \left(-z_{s}\right)+\theta  \left(\sum_{s=0}^{K} {K \choose s}s z_{s}-\frac{KM}{N}\right)+ \nu \left(1-\sum_{s=0}^{K} {K \choose s} z_{s}\right), \nonumber
\end{align}
where $\eta_{s} \geq 0$ is the Lagrange multiplier associated with~\eqref{eqn:X_range_3}, $\nu$ is the Lagrange multiplier associated with~\eqref{eqn:X_sum_3}, $\theta$ is the Lagrange multiplier associated with~\eqref{eqn:memory_constraint_3} and $\boldsymbol \eta \triangleq (\eta_{s})_{s \in \{0,1,\ldots,K\}}$. Thus, we have
\begin{align}
\frac{\partial L}{\partial \eta_{s}}(\mathbf{z}, \boldsymbol \eta,  \theta, \nu)={K \choose s}\frac{K-s}{s+1}-\eta_{s}+\theta s{K \choose s}-\nu {K \choose s}.
\end{align}
Since strong duality holds, primal optimal $\mathbf{z}^*$ and dual optimal $\boldsymbol \eta^*, \nu^*, \theta^*$ satisfy KKT conditions, i.e., (i) primal constraints: \eqref{eqn:X_range_3}, \eqref{eqn:X_sum_3}, \eqref{eqn:memory_constraint_3}, (ii) dual constraints: (a) $\eta_{s} \geq 0$    for all $s \in \{0,1,\ldots,K\}$ and (b) $\theta \geq 0$,
(iii) complementary slackness: (a) $\eta_{s} \left(-z_{s}\right)=0$  for all $s \in \{0,1,\ldots,K\}$ and (b) $\theta  \left(\sum_{s=0}^{K} {K \choose s}s z_{s}-\frac{KM}{N}\right)=0$,  and (iv) ${K \choose s}\frac{K-s}{s+1}-\eta_{s}+\theta s{K \choose s} -\nu {K \choose s}=0$  for all $s \in \{0,1,\ldots,K\}$.
By (ii.a) and (iv), we know that for all $s \in \{0,1,\ldots,K\}$,  $\eta^*_{s}= {K \choose s}\left(\frac{K-s}{s+1}+\theta^* s -\nu^* \right) \geq 0$, implying
\begin{align}
h(s) \triangleq \theta^* s^2+(\theta^*-\nu^*-1)s+K-\nu^* \geq 0.\label{eqn:s_geq_0}
\end{align}
Furthermore, for all   $s \in \{0,1,\ldots,K\}$, when $z^*_s>0$,  by  (iii.a) and (iv),  we have $\eta^*_{s}= {K \choose s}\left(\frac{K-s}{s+1}+\theta^* s-\nu^*\right)=0$,  implying
\begin{align}
h(s) =0. \label{eqn:s_eq_0}
\end{align}
That is, for any $s \in \{0,1,\ldots,K\}$, when \eqref{eqn:s_eq_0} does not hold, $z^*_s=0$.
Since \eqref{eqn:s_eq_0} has at most two different roots, there are at most  two    $s \in \{0,1,\ldots,K\}$ such that  $z^*_s>0$.
In addition, by  \eqref{eqn:X_sum_3}, we know that there exists  at least one $s \in \{0,1,\ldots,K\}$ such that  $z^*_s>0$.
Thus, there exist one or two  $s \in \{0,1,\ldots,K\}$ such that  $z^*_s>0$.
In the following, consider  two possible cases, i.e., $\theta^*=0$ and $\theta^*>0$.
\begin{itemize}
\item
Consider  $\theta^*=0$.
By \eqref{eqn:s_eq_0},  we have  $z^*_s>0$ for  $s=\frac{K-\nu^*}{\nu^* +1}$, implying $\frac{K-\nu^*}{\nu^* +1} \leq  K$, and $z^*_s=0$ for $s \in \{0,1,\cdots, K\} \setminus \left\{\frac{K-\nu^*}{\nu^* +1}\right\}$.
By  \eqref{eqn:s_geq_0}, we have $s \leq \frac{K-\nu^*}{\nu^* +1}$ for all $s \in \{0,1,\cdots, K\}$, implying $K \leq \frac{K-\nu^*}{\nu^* +1}$.
Thus, we have $\frac{K-\nu^*}{\nu^* +1} = K$,  implying $\nu^*=0$,  $z^*_s>0$ for  $s=K$ and $z^*_s=0$ for $s \in \{0,1,\cdots, K-1\}$.
Then, by \eqref{eqn:X_sum_3}, we have
\begin{align}
z^*_{s}=
\begin{cases}
1, & s=K\\
0, &s \in \{0,1,\cdots, K-1\}.
\end{cases}\label{eqn:theta_eq_0}
\end{align}
By $\theta^*=0$, $\nu^*=0$ and (iv), we have $\eta^*_{s}= {K \choose s} \frac{K-s}{s+1}$, $s \in \{0,1,\cdots, K\}$.
Note that when $M \in \left\{0, \frac{N}{K},  \ldots, \frac{(K-1)N}{K}\right\}$, $\mathbf z^*$ given in \eqref{eqn:theta_eq_0} does not satisfy \eqref{eqn:memory_constraint_3}.
When $M=N$, $\mathbf z^*$ given in \eqref{eqn:theta_eq_0},  $\theta^*=0$, $\nu^*=0$ and $\eta^*_{s}= {K \choose s} \frac{K-s}{s+1}$, $s \in \{0,1,\cdots, K\}$  satisfy the KKT conditions in (i)-(iv).
Thus,  $\mathbf z^*$ given in \eqref{eqn:theta_eq_0} is the unique  optimal solution  when  $M= N$. Note that  when $M=N$, \eqref{eqn:uniform_caching} reduces to~\eqref{eqn:theta_eq_0}.
\item
Consider  $\theta^*>0$.
By  (iii.b), we have
\begin{align}
\sum_{s=0}^{K} {K \choose s}s z_{s}=\frac{KM}{N}. \label{eqn:eq}
\end{align}
First, we  prove that  there is only one $s \in \{0,1,\ldots,K\}$ such that $z^*_s>0$ by contradiction.
Suppose there exist  two $s_1, s_2 \in \{0,1,\ldots,K\}$, $s_1 \neq s_2$,     such that $z^*_{s_1}>0$ and $z^*_{s_2}>0$.
Then, $s_1$ and $s_2$ are two different roots of \eqref{eqn:s_eq_0}, i.e.,  $h(s_1)=h(s_2)=0$, and \eqref{eqn:X_sum_3} implies
\begin{align}
{K \choose s_1} z^*_{s_1}+{K \choose s_2}z^*_{s_2}=1.\label{eqn:contrad_s_eq_2_2}
\end{align}
In addition, by \eqref{eqn:eq}, we have
\begin{align}
{K \choose s_1}s_1 z^*_{s_1}+{K \choose s_2}s_2 z^*_{s_2}=\frac{KM}{N}. \label{eqn:contrad_s_eq_2_1}
\end{align}
When $M=N$, by \eqref{eqn:contrad_s_eq_2_2} and \eqref{eqn:contrad_s_eq_2_1}, we have $s_1=s_2=K$, which contradicts $s_1 \neq s_2$.
When $M<N$, without loss of generality, we suppose $s_2>s_1$. By \eqref{eqn:contrad_s_eq_2_2} and \eqref{eqn:contrad_s_eq_2_1}, we have
\begin{align}
z^*_{s_1}=\frac{\frac{KM}{N}-s_1}{{K \choose s_1}\left(s_2-s_1\right)}, \label{eqn:z_1}
\end{align}
\begin{align}z^*_{s_2}=\frac{s_2-\frac{KM}{N}}{{K \choose s_2}\left(s_2-s_1\right)}. \label{eqn:z_2}
\end{align}
 By $z^*_{s_1}>0$,  $z^*_{s_2}>0$, \eqref{eqn:z_1}  and \eqref{eqn:z_2}, we have
\begin{align}
s_1<\frac{KM}{N}<s_2. \label{eqn:contrad_s_1_s_2}
\end{align}
Note that $\frac{KM}{N} \in \{0,1, \ldots, K\}$, as $M \in \left\{0, \frac{N}{K}, \frac{2N}{K}, \ldots, N\right\}$. In addition, recall that $s \in \{0,1, \ldots, K\}$.
Thus, when $K=1$,   \eqref{eqn:contrad_s_1_s_2} contradicts $s_1, s_2 \in \{0,1\}$.
When $K \in \{2,3, \ldots\}$, since $\theta^*>0$ and $h(s_1)=h(s_2)=0$, by \eqref{eqn:contrad_s_1_s_2}, we know that
\begin{align}
h\left(\frac{KM}{N}\right)<0, \label{eqn:h_less_0}
\end{align}
which contradicts   \eqref{eqn:s_geq_0}.
Therefore, we can show that if $\theta^*>0$,   there is only one $s \in \{0,1,\ldots,K\}$ such that $z^*_s>0$.
Then, by  \eqref{eqn:X_sum_3} and \eqref{eqn:eq}, we can obtain  \eqref{eqn:uniform_caching}.
Next, we prove that \eqref{eqn:uniform_caching} is the   optimal solution for any    $M \in \left\{0, \frac{N}{K}, \frac{2N}{K}, \ldots, N\right\}$.
When $M=0$,   $\mathbf z^*$ given in \eqref{eqn:uniform_caching},   any $\theta^* \in (0,K+1]$,  $\nu ^*=K$ and    $\eta^*_{s}= {K \choose s}\left(\frac{K-s}{s+1}+\theta^* s -K\right)$, $s \in \{0,1,\ldots,K\}$ satisfy the KKT conditions  in (i)-(iv).
When $M \in \left\{\frac{N}{K}, \frac{2N}{K}, \ldots, \frac{(K-1)N}{K}\right\}$,  $\mathbf z^*$ given in  \eqref{eqn:uniform_caching}, $\theta^* =\frac{K+1}{\left(\frac{KM}{N}+1\right)^2}$,   $\nu^* =\frac{2K\frac{KM}{N}+K-\left(\frac{KM}{N}\right)^2}{\left(\frac{KM}{N}+1\right)^2}$ and
$\eta^*_{s}= {K \choose s}\frac{1}{s+1}\frac{K+1}{\left(\frac{KM}{N}+1\right)^2}\left(s-\frac{KM}{N}\right)^2$, $s \in \{0,1,\ldots,K\}$ satisfy the KKT conditions in (i)-(iv).
When $M=N$,  $\mathbf z^*$ given in  \eqref{eqn:uniform_caching},  any $\theta^* \in (0, \frac{1}{K+1}]$, $\nu^* =K\theta^* $ and    $\eta^*_{s}= {K \choose s}\left(\frac{K-s}{s+1}+\theta^* s -\nu^* \right)$, $s \in \{0,1,\ldots,K\}$  satisfy the KKT conditions in (i)-(iv).
Therefore,    \eqref{eqn:uniform_caching} is the unique  optimal solution  for  any    $M \in \left\{0, \frac{N}{K}, \frac{2N}{K}, \ldots, N\right\}$.
\end{itemize}
Substituting \eqref{eqn:uniform_caching} into \eqref{eqn:uniform_obj}, we can obtain  \eqref{eqn:Ali}.
Therefore, we complete the proof of Lemma~\ref{Lem:Ali}.



\begin{thebibliography}{10}
\providecommand{\url}[1]{#1}
\csname url@samestyle\endcsname
\providecommand{\newblock}{\relax}
\providecommand{\bibinfo}[2]{#2}
\providecommand{\BIBentrySTDinterwordspacing}{\spaceskip=0pt\relax}
\providecommand{\BIBentryALTinterwordstretchfactor}{4}
\providecommand{\BIBentryALTinterwordspacing}{\spaceskip=\fontdimen2\font plus
\BIBentryALTinterwordstretchfactor\fontdimen3\font minus
  \fontdimen4\font\relax}
\providecommand{\BIBforeignlanguage}[2]{{%
\expandafter\ifx\csname l@#1\endcsname\relax
\typeout{** WARNING: IEEEtran.bst: No hyphenation pattern has been}%
\typeout{** loaded for the language `#1'. Using the pattern for}%
\typeout{** the default language instead.}%
\else
\language=\csname l@#1\endcsname
\fi
#2}}
\providecommand{\BIBdecl}{\relax}
\BIBdecl

\bibitem{AliFundamental}
M.~A. Maddah-Ali and U.~Niesen, ``Fundamental limits of caching,'' \emph{IEEE
  Trans. Inf. Theory}, vol.~60, no.~5, pp. 2856--2867, May 2014.

\bibitem{AliDec}
------, ``Decentralized coded caching attains order-optimal memory-rate
  tradeoff,'' \emph{IEEE/ACM Trans. Netw.}, vol.~23, no.~4, pp. 1029--1040,
  Aug. 2015.

\bibitem{jin}
\BIBentryALTinterwordspacing
S.~Jin, Y.~Cui, H.~Liu, and G.~Caire, ``New order-optimal decentralized coded
  caching schemes with good performance in the finite file size regime,''
  \emph{CoRR}, vol. abs/1604.07648, 2016. [Online]. Available:
  \url{http://arxiv.org/abs/1604.07648}
\BIBentrySTDinterwordspacing

\bibitem{finite}
K.~Shanmugam, M.~Ji, A.~M. Tulino, J.~Llorca, and A.~G. Dimakis.,
  ``Finite-length analysis of caching-aided coded multicasting,'' \emph{IEEE
  Trans. Inf. Theory}, vol.~62, no.~10, pp. 5524--5537, Oct. 2016.

\bibitem{YuQian}
\BIBentryALTinterwordspacing
Q.~Yu, M.~A. Maddah{-}Ali, and A.~S. Avestimehr, ``The exact rate-memory
  tradeoff for caching with uncoded prefetching,'' \emph{CoRR}, vol.
  abs/1609.07817, 2016. [Online]. Available:
  \url{http://arxiv.org/abs/1609.07817}
\BIBentrySTDinterwordspacing

\bibitem{NonuniformDemands}
U.~Niesen and M.~A. Maddah-Ali, ``Coded caching with nonuniform demands,''
  \emph{IEEE Trans. Inf. Theory}, vol.~63, no.~2, pp. 1146--1158, Feb. 2017.

\bibitem{ji2015order}
M.~Ji, A.~M. Tulino, J.~Llorca, and G.~Caire, ``Order-optimal rate of caching
  and coded multicasting with random demands,'' \emph{IEEE Transactions on
  Information Theory}, vol.~63, no.~6, pp. 3923--3949, June 2017.

\bibitem{Jinbei}
J.~Zhang, X.~Lin, and X.~Wang, ``Coded caching under arbitrary popularity
  distributions,'' in \emph{Proc. IEEE ITA Workshop}, Feb. 2015, pp. 98--107.

\bibitem{Sinong}
S.~Wang, X.~Tian, and H.~Liu, ``Exploiting the unexploited of coded caching for
  wireless content distribution,'' in \emph{2015 International Conference on
  Computing, Networking and Communications (ICNC)}, Feb 2015, pp. 700--706.

\bibitem{KaiWan}
K.~Wan, D.~Tuninetti, and P.~Piantanida, ``On the optimality of uncoded cache
  placement,'' in \emph{Proc. IEEE ITW}, Sep. 2016, pp. 161--165.

\bibitem{bound17}
\BIBentryALTinterwordspacing
C.~Wang, S.~S. Bidokhti, and M.~A. Wigger, ``Improved converses and gap-results
  for coded caching,'' \emph{CoRR}, vol. abs/1702.04834, 2017. [Online].
  Available: \url{http://arxiv.org/abs/1702.04834}
\BIBentrySTDinterwordspacing

\bibitem{bound16}
\BIBentryALTinterwordspacing
C.~Wang, S.~H. Lim, and M.~Gastpar, ``A new converse bound for coded caching,''
  \emph{CoRR}, vol. abs/1601.05690, 2016. [Online]. Available:
  \url{http://arxiv.org/abs/1601.05690}
\BIBentrySTDinterwordspacing

\bibitem{Factor2}
\BIBentryALTinterwordspacing
Q.~Yu, M.~A. Maddah{-}Ali, and A.~S. Avestimehr, ``Characterizing the
  rate-memory tradeoff in cache networks within a factor of 2,'' \emph{CoRR},
  vol. abs/1702.04563, 2017. [Online]. Available:
  \url{http://arxiv.org/abs/1702.04563}
\BIBentrySTDinterwordspacing

\bibitem{HCD}
A.~Ramakrishnan, C.~Westphal, and A.~Markopoulou, ``An efficient delivery
  scheme for coded caching,'' in \emph{2015 27th International Teletraffic
  Congress}, Sep. 2015, pp. 46--54.

\bibitem{boyd2004convex}
S.~Boyd and L.~Vandenberghe, \emph{Convex optimization}.\hskip 1em plus 0.5em
  minus 0.4em\relax Cambridge university press, 2004.

\bibitem{LeeS13a}
\BIBentryALTinterwordspacing
Y.~T. Lee and A.~Sidford, ``Matching the universal barrier without paying the
  costs : Solving linear programs with {\~{o}}(sqrt(rank)) linear system
  solves,'' \emph{CoRR}, vol. abs/1312.6677, 2013. [Online]. Available:
  \url{http://arxiv.org/abs/1312.6677}
\BIBentrySTDinterwordspacing

\end{thebibliography}
\end{document}